\newtheorem{proposition}{Proposition}
\newtheorem{definition}{Definition}
\newtheorem{example}{Example}
\definecolor{green}{rgb}{0,0.5,0}
\definecolor{gray}{gray}{0.5}
\definecolor{dred}{rgb}{.4,0,0}
\definecolor{lred}{rgb}{1,.8,.8}
\definecolor{G}{rgb}{0.6,0,0}
\definecolor{J}{rgb}{0.6,0,0.6}
\definecolor{Om}{rgb}{0,0,0.6}
\definecolor{cit}{rgb}{0,0.7,0}
\newcommand{\J}[1]{{\color{J}#1}}
\newcommand{\G}[1]{{\color{G}#1}}
\newcommand{\Om}[1]{{\color{Om}#1}}
\newcommand{\ccaption}[2]{\caption{\emph{#1.} #2}}
\newcommand{\ii}{\mathrm{i}}
\newcommand{\ee}{\mathrm{e}}
\newcommand{\ie}{{\it i.e.},\ }
\newcommand{\id}{\mathbb{1}} 
\newcommand{\eg}{{\it e.g.},\ }
\newcommand{\argh}{\operatorname{argh}}
\newcommand{\tr}{\operatorname{tr}}
\newcommand{\Tr}{\operatorname{Tr}}
\newcolumntype{C}[1]{>{\centering\arraybackslash}p{#1}}
\newcommand\reallywidehat[1]{%
\savestack{\tmpbox}{\stretchto{%
  \scaleto{%
    \scalerel*[\widthof{\ensuremath{#1}}]{\kern.1pt\mathchar"0362\kern.1pt}%
    {\rule{0ex}{\textheight}}
  }{\textheight}%
}{2.4ex}}%
\stackon[-6.9pt]{#1}{\tmpbox}%
}
\newcommand\qp{\mathrel{\overset{\makebox[0pt]{\mbox{\tiny $q,\!p$}}}{\equiv}}}
\newcommand\aab{\mathrel{\overset{\makebox[0pt]{\mbox{\tiny $\hspace{.8mm}a,\!a\!\!\dagger{}$}}}{\equiv}}}
\begin{document}
\title{Bosonic and Fermionic Gaussian States from Kähler Structures}

\author{Lucas Hackl}
\email{lucas.hackl@unimelb.edu.au}
\affiliation{School of Physics, The University of Melbourne, Parkville, VIC 3010, Australia}
\affiliation{QMATH, Department of Mathematical Sciences, University of Copenhagen, Universitetsparken 5, 2100 Copenhagen, Denmark}

\author{Eugenio Bianchi}
\email{ebianchi@psu.edu}
\affiliation{Department of Physics, The Pennsylvania State University, University Park, PA 16802, USA}
\affiliation{Institute for Gravitation and the Cosmos, The Pennsylvania State University, University Park, PA 16802, USA}

\begin{abstract}
Bosonic and fermionic Gaussian states (also known as ``squeezed coherent states'') can be uniquely characterized by their linear complex structure $J$ which is a linear map on the classical phase space. This extends conventional Gaussian methods based on covariance matrices and provides a unified framework to treat bosons and fermions simultaneously. Pure Gaussian states can be identified with the triple $(G,\Omega,J)$ of compatible Kähler structures, consisting of a positive definite metric $G$, a symplectic form $\Omega$ and a linear complex structure $J$ with $J^2=-\id$. Mixed Gaussian states can also be identified with such a triple, but with $J^2\neq -\id$. We apply these methods to show how computations involving Gaussian states can be reduced to algebraic operations of these objects, leading to many known and some unknown identities. In particular we show how these methods apply to the study of (A) entanglement and complexity, (B) dynamics of stable systems, (C) dynamics of driven systems. From this, we compile a comprehensive list of mathematical structures and formulas to compare bosonic and fermionic Gaussian states side-by-side.
\end{abstract}

\maketitle

\tableofcontents

\clearpage
\section{Introduction}
Gaussian states play a distinguished role in quantum theory: they appear under various names (squeezed coherent states, squeezed vacua, quasi-free states, generalized Slater determinants, ground states of free Hamiltonians) and are used in vastly different research fields, from quantum information \cite{WANG_2007,Braunstein:2005zz,adesso2014continuous} to quantum field theory in curved spacetimes \cite{wald1994quantum, Parker:2009uva}. They are often used as testing ground, as many concepts can be studied analytically when they are applied to Gaussian states (\eg entanglement entropy, logarithmic negativity, circuit complexity). They also form the basis of many numerical or perturbative methods to study physical systems approximately (\eg Feynman diagrams, Bardeen–Cooper–Schrieffer theory, Hartree-Fock method, Bogoliubov theory) \cite{Berezin:1966nc}.

Gaussian states are completely characterized by the $2$-point correlation functions of linear observables, from which all higher $n$-point functions can be computed via the famous Wick's theorem. For a system of $N$ bosonic degrees of freedom, one can choose coordinates $q_i, p_i$ in phase space and express linear observables as $\hat{\xi}^a\equiv(\hat{q}_1,\cdots,\hat{q}_N,\hat{p}_1,\cdots,\hat{p}_N)$. Similarly, for a system with $N$ fermionic degrees of freedom, one can choose Majorana modes $q_i, p_i$ (often denoted $\gamma_i, \eta_i$) at the classical level, and express linear observables again in a compact form as $\hat{\xi}^a$. In both cases, given a  state $\ket{\psi}$, the correlation function takes the form\footnote{Here, we assume $\braket{\psi|\hat{\xi}^a|\psi}=0$, but also treat the general case in section~\ref{sec:quantum-Gaussian}.}
\begin{equation}
\langle\psi|\hat{\xi}^a\hat{\xi}^b|\psi\rangle=\frac{1}{2}\left(G^{ab}+\mathrm{i}\,\Omega^{ab}\right)\,.
\end{equation}
The symmetric and the antisymmetric part of the correlation function define two mathematical structures: a \emph{positive definite metric} $G^{ab}$ and a \emph{symplectic form} $\Omega^{ab}$. For bosons, the symplectic form $\Omega^{ab}$ is canonical and determined by the classical Poisson brackets, while the metric $G^{ab}$ depends on the state. On the other hand, for fermions it is the metric $G^{ab}$ that is canonical and fixed already at the classical level, while the symplectic form $\Omega^{ab}$ characterizes the quantum state. Remarkably, for a pure Gaussian state $|\psi\rangle$, the two structures $G^{ab}$ and $\Omega^{ab}$ satisfy a compatibility condition that defines a \emph{Kähler structure} $(G,\Omega,J)$, (see figure~\ref{fig:Kaehler-triangle}). The third object $J^a{}_b$ is a complex structure and defines a notion of creation and annihilation operators. As in~\cite{Derezinski:2013dra}, we show how properties of Gaussian states for bosonic and for fermionic systems can be described in a compact unified way in terms of Kähler structures\footnote{In mathematics, one usually says that a manifold or a vector space has \emph{a} Kähler structure if it is equipped with compatible objects $(G,\Omega,J)$. In this manuscript, we will refer to these objects $G$, $\Omega$ and $J$ commonly as Kähler structures, too.}, tailoring the language and selecting aspects relevant for applications in quantum information and out-equilibrium quantum systems. We highlight the fact that various expressions for information-theoretic quantities (\eg the entanglement entropy) take the same form for bosons and fermions when written in terms of the complex structure $J$.

Due to their versatility, many properties of Gaussian states have been independently discovered in different research communities ranging from quantum optics and condensed matter physics to high energy theory and quantum gravity. Historically, Gaussian states were mostly used as a calculational tool which is only described indirectly in terms of correlations, Bogoliubov transformations, creation and annihilation operators or free Hamiltonians, but more recently they were recognized in quantum information theory and condensed matter theory as important families of pure and mixed quantum states with distinct properties. Consequently, there exist many reviews~\cite{WANG_2007,Braunstein:2005zz,adesso2014continuous,deGosson2006symplectic,Derezinski:2013dra,woit2015quantum} focusing on specific applications of Gaussian states relevant for these research fields. Kähler structure were first used to describe Gaussian states in the context of quantum fields in curved spacetimes \cite{ashtekar1975quantum, Ashtekar:1980ya, Ashtekar:1980yw,Agullo:2015qqa, Much:2018ehc, Cortez:2019orm}.\\

The goal of the present manuscript is the systematic application of Kähler structures to (A) entanglement and circuit complexity, (B) dynamics of stable quantum systems and (C) dynamics of driven quantum systems, while previous work on Kähler structures for Gaussian states~\cite{deGosson2006symplectic,Derezinski:2013dra,woit2015quantum} has largely focused on quantization and the definition of free quantum fields. We further emphasize how Kähler structures can be used to unify the description of bosonic and fermionic systems by providing explicit formulas for both systems. We carefully distinguish the involved geometric structures (metric, symplectic form and complex structures) and treat them as independent of their matrix representation in a given basis. This allows us to re-derive existing results (such as the inner product between Gaussian states) in a simplified and often basis-independent way, but also enabled us to find some---to our knowledge---new formulas (such as some of the covariant Baker-Campbell-Hausdorff relations to combine squeezing and displacement). Several results and techniques of this manuscript have been implicitly used in some of our earlier works~\cite{Bianchi:2015fra,bianchi2018linear,hackl2018entanglement,vidmar2017entanglement,hackl2018aspects,vidmar2018volume,hackl2019average,hackl2019minimal,hackl2020geometry,windt2020local} and we are confident that other researchers can benefit from a thorough exposition of Gaussian states from Kähler structures.

\emph{The content of this manuscript is complemented by two other recent papers that adopt the same formalism: First, the geometry of variational methods is studied in~\cite{hackl2020geometry}, where Gaussian states appear as a prime example of so-called Kähler manifolds, whose geometric properties are closely related to the Kähler structures on the classical phase space. Second, the geometry of pure Gaussian state manifolds is used in~\cite{windt2020local} to find local extrema of differentiable functions on these manifolds, for which a large number of different parametrizations is reviewed. The present manuscript provides a self-contained, yet rigorous derivation of many results that are used in the other two papers.}\\

This manuscript is structured as follows: In section~\ref{sec:classical-Kaehler}, we review the classical theory of bosonic and fermionic phase spaces, discuss the properties of Kähler structures, and review the quantization of the bosonic and fermionic theories. In section~\ref{sec:quantum-Gaussian}, we define Gaussian states in a unified framework based on Kähler structures and we use this framework to derive compact formulas for properties of Gaussian state for bosons and fermions. In section~\ref{sec:applications} we discuss applications to entanglement and dynamics. Each of the previous sections contains a large summary table (namely, table~\ref{tab:summary-table-I}, \ref{tab:summary-table-II} and \ref{tab:summary-table-III}) that summarizes the main results and compares bosons and fermions side-by-side. We conclude in section~\ref{sec:discussion} by summarizing our results and discussing further applications of our formalism.

\section{Bosons and fermions from Kähler structures}\label{sec:classical-Kaehler}
We review the quantization of bosonic and fermionic quantum systems based on Kähler structures. We present this material in a condensed way with a unified notation in mind, which is particularly suitable for later applications in physics. More detailed treatments of this material can be found in the mathematical physics literature~\cite{berezin:1966,deGosson2006symplectic,Derezinski:2013dra,woit2015quantum}.

\subsection{Classical theory}
Quantization can be understood as a deformation of the algebra of observables on the classical phase space. We therefore begin by constructing the respective classical theories.

\subsubsection{Phase space}\label{sec:phasespace}
For both, bosonic and fermionic systems, we begin our construction with a classical phase space $V\simeq\mathbb{R}^{2N}$ given by a $2N$-dimensional real vector space for systems with $N$ degrees of freedom. We denote a phase space vector by $v^a$. We have a canonical notion of linear observables given by linear forms $w_a$ in the dual phase space $V^*$. More generally, we use upper Latin indices to denote phase space vectors and lower ones to denote linear forms, \ie dual vectors. Please see appendix~\ref{app:abstractindices} for a brief review of this formalism inspired by Einstein's summation convention and Penrose's abstract index notation.

So far, we have not assumed any additional structure on phase space or its dual. 
\begin{definition}
A $2N$-dimensional real vector space $V$ is
\begin{itemize}
    \item a \textbf{bosonic phase space} if it is equipped with a \textbf{symplectic form} $\omega_{ab}$, \ie an antisymmetric and non-degenerate bilinear form, or
    \item a \textbf{fermionic phase space} if it is equipped with a \textbf{metric} $g_{ab}$, \ie a symmetric positive-definite bilinear form.
\end{itemize}
We further define their inverses as the dual symplectic form $\Omega^{ab}$ and the dual metric $G^{ab}$, which are uniquely determined by the conditions
\begin{align}
    G^{ac}g_{cb}=\delta^a{}_b\quad\text{and}\quad \Omega^{ac}\omega_{cb}=\delta^a{}_b\,.
\end{align}
\end{definition}

\subsubsection{Linear observables}
The key difference in the definition of bosonic and fermionic systems lies in the different background structure that we equip the space of linear observables with. For bosons, we equip $V^*$ with the symplectic form $\Omega^{ab}$ and $V$ with the dual structure $\omega_{ab}$ satisfying $\Omega^{ac}\omega_{cb}=\delta^a{}_b$. For fermions, we choose a positive metric $G^{ab}$ on $V$ instead, which comes with the dual metric $g_{ab}$ satisfying $G^{ac}g_{cb}=\delta^a{}_b$. In both cases, our choice provides a natural isomorphism between $V$ and $V^*$, \ie we can identify the vector $v^a\in V$ either with the dual vector $w_a=\omega_{ab}v^b$ for bosons or $w_a=g_{ab}v^b$ for fermions.

\subsubsection{Algebra of classical observables}
General observables form an associative algebra $\mathcal{A}$ with identity that is generated by $V^*$. For bosons, we require that this algebra is symmetric (commutative) leading to the unique symmetric algebra $\mathrm{Sym}(V^*)$.\footnote{Technically, we then complete this algebra to allow for infinite power series leading to the space of smooth phase space functions that physicists usually use to describe observables. Such considerations are not necessary for fermions, where the Grassmann algebra stays finite dimensional for finitely many degrees of freedom.} For fermions, we require that the algebra is antisymmetric (anti-commutative) leading to the unique Grassmann algebra $\mathrm{Grass}(V^*)$. While the bosonic algebra is infinite dimensional, as we can take arbitrary powers of linear observables, we have that the fermionic algebra is finite dimensional, $\dim\mathrm{Grass}(V^*)=2^N$ due to anti-symmetry.

A general classical observable can be written as a power series of the form
\begin{align}
    f(\xi)=f_0+(f_1)_a\xi^a+(f_2)_{ab}\xi^a\xi^b+\cdots\,,
\end{align}
where for bosons only the completely symmetric part and for fermions only the completely antisymmetric part of $(f_n)_{a_1\dots a_n}$ will matter, as powers of $\xi^a$ are symmetric or anti-symmetric, respectively, in the algebra.

The background structures $\Omega$ and $G$ equip the algebra of observables with the additional structure of a Poisson bracket. This operation $\{\cdot,\cdot\}_{\pm}:\mathcal{A}\times\mathcal{A}\to\mathcal{A}$ satisfies
\begin{align}
	\{f,g\}_-&=(\partial_a f)(\partial_b g)\Omega^{ab}&&\textbf{(bosons)}\,,\\
	\{f,g\}_+&=(\partial_a f)(\partial_b g)G^{ab}&&\textbf{(fermions)}
\end{align}
with $\partial_a=\frac{\partial}{\partial\xi^a}$. Here, $\{\cdot,\cdot\}_-$ are the regular Poisson brackets known from classical mechanics, while $\{\cdot,\cdot\}_+$ are their analogue for fermions.

\subsection{Kähler structures}\label{sec:Kaehler-structures}
We will now review the underlying mathematical structures that provide a unified description of bosonic and fermionic Gaussian states. It is based on the notion of Kähler structures and in particular on a so-called linear complex structure. These structures are well-studied in the context of Kähler manifolds, but for our purposes it suffices to study them on a single linear space, namely the classical phase space $V$ of the bosonic or fermionic theory. Gaussian states were, to our knowledge, first parametrized by linear complex structures in the context of quantum fields in curved spacetime \cite{ashtekar1975quantum,wald1994quantum}. Here, linear complex structures naturally arise to distinguish unitarily inequivalent representations of the observable algebra. The role of linear complex structures has also been recognized in the mathematical physics literature on quantization \cite{woit2015quantum} and to some extent in the field of quantum information \cite{holevo2012quantum}.

\begin{figure}
    \label{fig:Kaehler-triangle}
	\begin{center}
	\begin{tikzpicture}
	\draw[<->,thick] (-1.6,0) -- (1.6,0);
	\draw[<->,thick] (.2,3.11769) -- (1.8,0.34641);
	\draw[<->,thick] (-.2,3.11769) -- (-1.8,0.34641);
		\draw (0,1.1547) node{$\J{J^a{}_b}=-\G{G^{ac}}\Om{\omega_{cb}}$};
		\draw (0,.6547) node{\textit{(compatibility)}};
	\draw (-2,0) node{$\Om{\Omega^{ab}}$} (2,0) node{$\G{G^{ab}}$} (0,3.4641) node[dred]{$\J{J^a{}_b}$};
		\node[draw,align=left,left] at (-1.5,-1.1) {\textbf{Symplectic form:}\\ Antisymmetric\\non-degenerate\\bilinear form};
		\node[draw,align=left,right] at (1.5,-1.1) {\textbf{Metric:}\\Symmetric\\positive-definite\\bilinear form};
		\node[draw,align=left] at (0,4.3741) {\textbf{Linear complex structure:}\\Squares to minus identity: $J^2=-\mathbb{1}$};
		\draw[Om] (-2.9,1.2) node[align=center]{\textbf{Inverse $\omega_{ab}$ with}\\$\Omega^{ac}\omega_{cb}=\delta^a{}_b$};
		\draw[G] (2.7,1.2) node[align=center]{\textbf{Inverse $g_{ab}$ with}\\$G^{ac}g_{cb}=\delta^a{}_b$};
	\end{tikzpicture}
	\end{center}
	\ccaption{Triangle of Kähler structures}{This sketch was reproduced from~\cite{hackl2020geometry} and illustrates the triangle of Kähler structures, consisting of a symplectic form $\Omega$, a positive definite metric $G$ and a linear complex structure $J$. We also define the inverse symplectic form $\omega$ and the inverse metric $g$.}
\end{figure}

\begin{definition}\label{def:kaehler}
A real vector space is called Kähler space if it is equipped with the following three structures
\begin{itemize}
    \item \textbf{Metric\footnote{Here, ``metric'' refers to a metric tensor, \ie an inner product on a vector space. It should not be confused with the notion of metric spaces in analysis and topology.} $G^{ab}$}, which is symmetric and positive-definite with inverse $g_{ab}$, such that $G^{ac}g_{cb}=\delta^a{}_b$,
    \item \textbf{Symplectic form $\Omega^{ab}$}, which is antisymmetric and non-degenerate\footnote{A bilinear form $b_{ab}$ is called non-degenerate, if it is invertible. For this, we can check $\det(b)\neq 0$ in any basis of our choice.} with inverse $\omega_{ab}$, such that $\Omega^{ac}\omega_{cb}=\delta^a{}_b$,
    \item \textbf{Complex structure $J^a{}_b$}, which satisfies $J^a{}_c\,J^c{}_b=-\delta^a{}_b$,
\end{itemize}
such that they are related via the compatibility equations
\begin{align}
    J^a{}_b=-G^{ac}\omega_{cb}\quad\Leftrightarrow\quad J^a{}_b=\Omega^{ac}g_{cb}\,.\label{eq:Kaehler-compatibility}
\end{align}
We refer to $(G,\Omega,J)$ as \textbf{Kähler structures}.
\end{definition}

Note that\ \eqref{eq:Kaehler-compatibility} together with the requirements on metric, symplectic form and complex structure also implies
\begin{align}
    J^a{}_c\Omega^{cd}(J^\intercal)_d{}^b=\Omega^{ab}\quad\text{and}\quad J^a{}_cG^{cd}(J^\intercal)_d{}^b=G^{ab}\,,
\end{align}
which is often required separately. In practice, it is therefore sufficient to choose any two out of the three Kähler structures, solve equation~\eqref{eq:Kaehler-compatibility} for the third and then require that this third Kähler structure satisfies the respective conditions.

\subsubsection{Groups and algebras}
Each structure defines a subgroup of the real general linear group $\mathrm{GL}(2N,\mathbb{R})$ of invertible linear maps $M^a{}_b$ on $V$ that preserves this specific structure. We have the orthogonal, the symplectic and the general linear groups given by
\begin{align}
\begin{split}\label{eq:groups}
    \mathrm{O}(2N)&=\left\{M\in\mathrm{GL}(2N,\mathbb{R})\,\big|\,MGM^\intercal=G\right\}\,,\\
    \mathrm{Sp}(2N,\mathbb{R})&=\left\{M\in\mathrm{GL}(2N,\mathbb{R})\,\big|\,M\Omega M^\intercal=\Omega\right\}\,,\\
    \mathrm{GL}(N,\mathbb{C})&=\left\{M\in\mathrm{GL}(2N,\mathbb{R})\,\big|\,MJ=JM\right\}\,,
\end{split}
\end{align}
respectively, where $(M^\intercal)_d{}^b=M^b{}_d$ and $(MGM^\intercal)^{ab}=M^a{}_c\, G^{cd}\, (M^\intercal)_d{}^b$ as explained in appendix~\ref{app:abstractindices}. Note that each subgroup depends on the respective Kähler structure, \ie $G$, $\Omega$ and $J$, respectively. Provided that two structures are compatible, the respective groups will intersect in a new subgroup isomorphic to $\mathrm{U}(N)$. We recall that compatibility between two Kähler structures is equivalent to requiring that the third structure defined via equation~\eqref{eq:Kaehler-compatibility} satisfies the respective properties. Consequently, the subgroup associated to the third structure will necessarily intersect with the other structures exactly where those structures already overlap. This is known as the 2-out-of-3 property, because any two compatible Kähler structures already define the third. We visualize this in figure~\ref{fig:2-out-of-3}.

We can also represent the Lie algebras as linear maps on the classical phase space. They are the orthogonal, the symplectic and the general linear algebra given by
\begin{align}
    \mathfrak{so}(2N)&=\{K\in \mathfrak{gl}(2N,\mathbb{R})\,\big|\,KG+GK^\intercal=0\}\,,\label{eq:so-cond}\\
    \mathfrak{sp}(2N)&=\{K\in \mathfrak{gl}(2N,\mathbb{R})\,\big|\,K\Omega+\Omega K^\intercal=0\}\,,\label{eq:sp-cond}\\
    \mathfrak{gl}(N,\mathbb{C})&=\{K\in \mathfrak{gl}(2N,\mathbb{R})\,\big|\,KJ=JK\}\,,
\end{align}
respectively, where we used that the Lie algebra of $\mathrm{O}(2N)$ is the same as for $\mathrm{SO}(2N)$. There is an important isomorphism that identifies the symplectic and the orthogonal algebra with symmetric and antisymmetric forms on $V$, respectively. More precisely, we can identify the Lie algebra element $K^a{}_b$ with the bilinear form $h_{ab}$ via
\begin{equation}
\begin{aligned}
    K^a{}_b=\Omega^{ac}h_{cb}\hspace{2mm} &\Leftrightarrow \hspace{2mm} h_{ab}=\omega_{ac}K^c{}_b &&\textbf{(bosons)}\,,\\
    K^a{}_b=G^{ac}h_{cb}\hspace{2mm} &\Leftrightarrow \hspace{2mm} h_{ab}=g_{ac}K^c{}_b &&\textbf{(fermions)}\,.
\end{aligned}
\end{equation}
The conditions~\eqref{eq:so-cond} and~\eqref{eq:sp-cond} imply $h_{ab}=h_{ba}$ for bosons (symplectic Lie algebra) and $h_{ab}=-h_{ba}$ for fermions (orthogonal Lie algebra).

In the context of bosonic or fermionic Gaussian states, we are given either a symplectic form $\Omega^{ab}$ for bosons or a metric $G^{ab}$ for fermions, so that the respective groups and algebras will play a special role as being there without defining further structures. We will therefore refer to them simply as $\mathcal{G}$ and $\mathfrak{g}$ given by
\begin{equation}
\begin{aligned}
    \mathcal{G}&=\mathrm{Sp}(2N,\mathbb{R})\,,&\mathfrak{g}&=\mathfrak{sp}(2N,\mathbb{R})\,, &&\textbf{(bosons)}\\
    \mathcal{G}&=\mathrm{O}(2N,\mathbb{R})\,,&\mathfrak{g}&=\mathfrak{so}(2N,\mathbb{R})\,, &&\textbf{(fermions)}
\end{aligned}
\end{equation}

The symplectic and orthogonal Lie algebras can be equipped with the non-degenerate Killing form given by
\begin{align}
    \mathcal{K}(K_1,K_2)=\begin{cases}
    2(N+1)\Tr(K_1K_2)\\
    2(N-1)\Tr(K_1K_2)
    \end{cases}\,,\label{eq:killing}
\end{align}
where we represent algebra elements as linear maps on the phase space $V$ as introduced in section~\ref{sec:phasespace}. The Killing form is negative definite for fermions, while for bosons it has the signature $(N(N+1),N^2)$, \ie $N(N+1)$ positive and $N^2$ negative directions.

\begin{figure}
    \centering
\begin{tikzpicture}[scale=-4]
    \begin{scope}[shift={(-.015*2,2*.00866025)}]
    \draw[fill=red,fill opacity=.5]
    (0,0)--(1,0)--(.5,-.866025)--cycle;
    \end{scope}
    \draw[fill=red,fill opacity=.5]
    (1,0)--(1.5,-.866025)--(.5,-.866025)--cycle;
    \draw[fill=blue,fill opacity=.5] (.9,-.866025)--(1.4,0)--(1.9,-.866025)--cycle; 
    \draw[fill=purple,fill opacity=.3] (1.2,-.34641)--(1.7,-1.21244)--(.7,-1.21244)--cycle;
    \node[scale=1.3,rotate=-60] at ($(1.2,-0.69282)+(30:3.2mm)$) {$\mathrm{Sp}(2N,\mathbb{R})$};
    \node[scale=1.3,rotate=60] at ($(1.2,-0.69282)+(150:3.2mm)$) {$\mathrm{SO}(2N,\mathbb{R})$};
    \node[scale=1.3,fill=white,rounded corners=2pt,inner sep=1pt,fill opacity=.9,rotate=60] at ($(1.2,-0.69282)+(150:5.4mm)$) {$\mathrm{O}(2N,\mathbb{R})$};
    \node[scale=1.3,rotate=60] at ($(1.2,-0.69282)+(150:7.4mm)$) {$\mathrm{O}^-(2N,\mathbb{R})$};
    \node[scale=1.3] at ($(1.2,-0.69282)+(-90:3.2mm)$) {$\mathrm{GL}(N,\mathbb{C})$};
    \node[scale=1.3,white] at (1.2,-0.69282) {$\mathrm{U}(N)$};
\end{tikzpicture}
    \ccaption{Illustration of 2-out-of-3 property}{We show how the three groups $\mathrm{O}(2N,\mathbb{R})$, $\mathrm{Sp}(2N,\mathbb{R})$ and $\mathrm{GL}(N,\mathbb{C})$ intersect to form the unitary group $\mathrm{U}(N)$. In particular, we see that intersecting all three groups is equivalent to intersecting any two out of the three groups. Moreover, we see that only the component $\mathrm{SO}(2N,\mathbb{R})\subset\mathrm{O}(2N,\mathbb{R})$ connected to the identity matters, while $\mathrm{O}^-(2N,\mathbb{R})\subset\mathrm{O}(2N,\mathbb{R})$ is the subset (not subgroup) of group elements not connected to the identity.}
    \label{fig:2-out-of-3}
\end{figure}

\subsubsection{2-out-of-3 property}
By choosing the right basis, we can bring all three structures simultaneously into their standard form. In fact, bringing two structures into the standard form is sufficient, because relation~\eqref{eq:Kaehler-compatibility} ensures that also the third structure will be in its standard form. Moreover, all transformations that preserve two of the three structures will actually preserve all three structures. First, we can always choose an orthonormal basis with respect to $G$, such that $G\equiv\id$. Second, due to~\eqref{eq:Kaehler-compatibility}, $J$ and $\Omega$ will have the same matrix representations and thus $J$ is represented by an antisymmetric matrix. Third, due to $J^2=-\id$, we can always apply an orthogonal transformation to bring $J$ into block diagonal form, such that
\begin{align}
    \hspace{-2mm}G\equiv\left(\begin{array}{c|c}
    \id & 0\\
    \hline
    0 & \id
    \end{array}\right)\,,\hspace{1mm}\Omega\equiv\left(\begin{array}{c|c}
    0 & \id\\
    \hline
    -\id & 0
    \end{array}\right)\,,\hspace{1mm} J\equiv\left(\begin{array}{c|c}
    0 & \id\\
    \hline
    -\id & 0
    \end{array}\right)\,.\label{eq:kaehler-standard-abs}
\end{align}
These standard forms are preserved by the intersection
\begin{align}
\mathrm{U}(N)=\mathrm{O}(2N)\cap\mathrm{Sp}(2N,\mathbb{R})\cap\mathrm{GL}(N,\mathbb{C})\,,
\end{align}
where the RHS satisfies the 2-out-of-3 property, meaning that the intersection of any two out of the three groups is sufficient.

If we are given the group $\mathcal{G}$,\ie $\mathrm{Sp}(2N,\mathbb{R})$ for bosons or $\mathcal{O}(2N,\mathbb{R})$ for fermions, it suffices for a group element $M\in\mathcal{G}$ to preserve $J$ with $MJM^{-1}=J$ to lie in the unitary group $\mathrm{U}(N)$, \ie
\begin{align}
    \mathrm{U}(N)=\left\{M\in\mathcal{G}\,\big|\,[M,J]=0\right\}\,.\label{eq:U(1)}
\end{align}
The same condition can also be used to restrict the Lie algebra $\mathfrak{g}$ of the group $\mathcal{G}$ to its unitary subalgebra
\begin{align}
    \mathfrak{u}(N)=\left\{K\in\mathfrak{g}\,\big|\,[K,J]=0\right)\,.
\end{align}
We can define $\mathfrak{u}_\perp(N)$ as the orthogonal complement of $\mathfrak{u}(N)$ in $\mathfrak{g}$ with respect to the Killing form $\mathcal{K}$, \ie
\begin{align}
    \mathfrak{u}_\perp(N)=\left\{K\in\mathfrak{g}\,\big|\,\mathcal{K}(K,\tilde{K})\,\forall\, \tilde{K}\in\mathfrak{u}(N)\right\}\,.\label{eq:uperp}
\end{align}
While this definition via the Killing form is rather indirect, we can find a much simpler characterization of $\mathfrak{u}_\perp(N)$ as the following proposition states.

\begin{proposition}
The orthogonal complement $\mathfrak{u}_\perp(N)$ is
\begin{align}
    \mathfrak{u}_\perp(N)=\left\{K\in\mathfrak{g}\,\big|\,\{K,J\}=0\right\}\,,
\end{align}
where we use the Killing form $\mathcal{K}$ as in~\eqref{eq:uperp}.
\end{proposition}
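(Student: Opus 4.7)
The plan is to decompose $\mathfrak{g}$ as the internal direct sum of the commutant and anticommutant of $J$, show that the anticommutant sits inside $\mathfrak{u}_\perp(N)$ by a short trace manipulation, and then conclude equality by a dimension count that uses non-degeneracy of the Killing form.

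For the decomposition, given any $K \in \mathfrak{g}$ I would set
\begin{equation}
K_+ = \tfrac{1}{2}(K - JKJ), \qquad K_- = \tfrac{1}{2}(K + JKJ),
\end{equation}
so that $K = K_+ + K_-$. Using $J^2 = -\id$, direct computation gives $[K_+, J] = 0$ and $\{K_-, J\} = 0$. To see that both $K_\pm$ remain in $\mathfrak{g}$, I would invoke $J \in \mathcal{G}$, which follows from the compatibility identities $J\Omega J^\intercal = \Omega$ and $JGJ^\intercal = G$ noted just after Definition~\ref{def:kaehler}. Conjugation by a group element preserves the Lie algebra, so $JKJ^{-1} \in \mathfrak{g}$ and hence also $JKJ = -JKJ^{-1} \in \mathfrak{g}$. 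Since $[K,J] = 0$ and $\{K,J\} = 0$ together force $JK = 0$ and thus $K = 0$, the sum is direct:
\begin{equation}
\mathfrak{g} = \mathfrak{u}(N) \oplus \mathfrak{u}_\perp'(N), \qquad \mathfrak{u}_\perp'(N) := \{K \in \mathfrak{g} : \{K, J\} = 0\}.
\end{equation}

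Next I would show $\mathfrak{u}_\perp'(N) \subseteq \mathfrak{u}_\perp(N)$. For $K \in \mathfrak{u}_\perp'(N)$ and $\tilde K \in \mathfrak{u}(N)$, cyclicity of the trace together with $[\tilde K, J^{-1}] = 0$ gives
\begin{equation}
\Tr(K \tilde K) = \Tr(J K J^{-1} \tilde K) = -\Tr(K \tilde K),
\end{equation}
where in the last step I used $JKJ^{-1} = -K$ (equivalent to $\{K,J\}=0$). Hence $\mathcal{K}(K, \tilde K) \propto \Tr(K \tilde K) = 0$, so every element of $\mathfrak{u}_\perp'(N)$ is $\mathcal{K}$-orthogonal to all of $\mathfrak{u}(N)$.

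Finally, I would close by counting dimensions. Since both $\mathfrak{sp}(2N,\mathbb{R})$ and $\mathfrak{so}(2N,\mathbb{R})$ are semisimple, $\mathcal{K}$ is non-degenerate on $\mathfrak{g}$, which forces $\dim \mathfrak{u}_\perp(N) = \dim \mathfrak{g} - \dim \mathfrak{u}(N)$. The direct sum in the first step makes $\dim \mathfrak{u}_\perp'(N)$ equal to the same quantity, so the inclusion $\mathfrak{u}_\perp'(N) \subseteq \mathfrak{u}_\perp(N)$ must be an equality. The step I expect to require the most care is the stability claim $JKJ \in \mathfrak{g}$, since this is precisely where the specific Kähler compatibility of $J$ with $\Omega$ (bosons) or with $G$ (fermions) enters; everything else reduces to cyclicity of the trace together with the identity $J^2 = -\id$.
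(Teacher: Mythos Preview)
Your argument is correct and close in spirit to the paper's, but the reverse inclusion is handled differently. Both you and the paper prove $\{K:\{K,J\}=0\}\subseteq\mathfrak{u}_\perp(N)$ by the same cyclic-trace manipulation. For the converse, the paper takes $K\in\mathfrak{u}_\perp(N)$, pairs it against its own $\mathfrak{u}(N)$-component $\tilde K=K-JKJ$, observes $\mathcal{K}(K,\tilde K)\propto\mathcal{K}(\tilde K,\tilde K)$, and concludes $\tilde K=0$ because the Killing form is (negative) definite on the compact subalgebra $\mathfrak{u}(N)$. You instead establish the internal direct sum $\mathfrak{g}=\mathfrak{u}(N)\oplus\mathfrak{u}_\perp'(N)$ and finish by a dimension count using only non-degeneracy of $\mathcal{K}$ on the full $\mathfrak{g}$. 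Your route is arguably cleaner: it avoids the unstated appeal to definiteness on $\mathfrak{u}(N)$ that the paper's argument needs in the bosonic (indefinite) case, at the cost of checking $JKJ\in\mathfrak{g}$, which you correctly source to $J\in\mathcal{G}$. One small caveat worth flagging in your semisimplicity remark: for fermions with $N=1$ the algebra $\mathfrak{so}(2)$ is abelian and the Killing form \eqref{eq:killing} vanishes identically, so strictly speaking neither argument applies there; the statement is trivially true in that degenerate case since $\mathfrak{so}(2)=\mathfrak{u}(1)$ and both sides reduce to $\{0\}$ under any sensible convention.
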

\begin{proof}
We first prove that any element $K$ with $\{K,J\}=0$ is in $\mathfrak{u}_\perp(N)$. For this, we note that any $\tilde{K}\in\mathfrak{u}(N)$ satisfies $[\tilde{K},J]=0$ and thus $\tilde{K}=\frac{1}{2}(\tilde{K}-J\tilde{K}J)$. We can therefore compute the Killing form as
\begin{align}
    \mathcal{K}(K,\tilde{K})\propto \Tr K(\tilde{K}-J\tilde{K}J)\,.
\end{align}
Clearly, if $K$ anti-commutes with $J$, we find $\mathcal{K}(K,\tilde{K})\propto\Tr(K\tilde{K}+JK\tilde{K}J)=\Tr(K\tilde{K}-K\tilde{K})=0$, where we used cyclicity of the trace and $J^2=-\id$.\\
Second, we prove vice versa that for $K$ satisfying $\mathcal{K}(K,\tilde{K})=0$ for all $\tilde{K}\in\mathfrak{u}(N)$, we have $\{J,K\}=0$. For this, we define $\tilde{K}=(K-JKJ)$, which clearly satisfies $[\tilde{K},J]=0$ and thus $\tilde{K}\in\mathfrak{u}(N)$. We can now evaluate
\begin{align}
\begin{split}
    \mathcal{K}(K,\tilde{K})&\propto \Tr K(K-JKJ)=\frac{1}{2}\Tr (K-JKJ)^2\\
    &\propto\mathcal{K}(\tilde{K},\tilde{K})\,,
\end{split}
\end{align}
which only vanishes if $\tilde{K}=0$ implying $\{K,J\}=0$, as claimed.
\end{proof}

\subsubsection{Relation to complex vector spaces}
Every finite-dimensional Kähler space is automatically a complex Hilbert space and vice versa. For a complex number $z=x+\ii y$ with $x,y\in\mathbb{R}$, complex scalar multiplication $\odot: \mathbb{C}\times V\to V$ of a vector $v^a\in V$ is
\begin{align}
    (z\odot v)^a=(x\id+yJ)^a{}_bv^b\,,
\end{align}
where we have $(\ii^2)\odot v=J^2v=-v$, \ie $J$ plays the role of multiplication by the imaginary unit on the vector space. Moreover, we have the inner product
\begin{align}
    \braket{u,v}=u^a(g_{ab}-\ii\omega_{ab})v^b=u(g-\ii\omega)v\,,
\end{align}
which we can verify to be anti-linear in the first and linear in the second component, \ie
\begin{align}
\begin{split}
    \braket{z\odot u,v}&=u(x+yJ^\intercal)(g-\ii\omega)v=z^*\braket{u,v}\,,\\
    \braket{u,z\odot v}&=u(g+\ii\omega)(x+yJ)v=z\braket{u,v}\,,\label{eq:complex-inner-product}
\end{split}
\end{align}
where we used $J^\intercal (g+\ii\omega)=\ii g+\omega$ and $(g+\ii\omega)J=\ii g-\omega$ following from~\eqref{eq:Kaehler-compatibility}.

Instead of treating $V$ as a $N$-dimensional complex vector space, where $J$ represents the imaginary unit $\ii$, we could also complexify $V$ to find the $2N$-dimensional complex vector space $V_{\mathbb{C}}$, where we allow complex linear combinations of our original phase space vectors and on which $J$ represents a complex-linear map with eigenvalues $\pm \ii$. Consequently, we can decompose this complexified space $V_{\mathbb{C}}$ and its dual into the eigenspaces of $J$ and its adjoint $J^\intercal$, such that
\begin{align}
	V_{\mathbb{C}}=V^+\oplus V^-\quad\text{and}\quad V_{\mathbb{C}}^*=(V^*)^+\oplus (V^*)^-\,,
	\label{eq:Vplusminus}
\end{align}
where $V^\pm$ and $(V^*)^\pm$ refers to the right and left eigenspaces of $J$ with eigenvalue $\pm\ii$, respectively. We can define the respective projectors
\begin{align}
    P^\pm=\frac{1}{2}(\id\mp\ii J)\,,
\end{align}
which project onto $V^\pm$. When applied to the real subspace $V$, $P^\pm$ actually provides an isomorphism between $V$ and $V^\pm$ as $2N$-dimensional real vector spaces. Applying complex conjugation (of $V_{\mathbb{C}}$) to an element of $V^+$ maps it to a respective element in $V^-$ and vice versa, which is the same as identifying $V^+$ and $V^-$ via $V$ through $P^\pm$. All of these structures naturally appear when constructing so-called creation and annihilation operators in the quantum theory, as will be discussed in Sec.~\ref{sec:Fockbasis}.

In the case of an infinite dimensional vector space $V$, we can use the same construction to get a complex vector space with inner product, but typically we will need to complete the space using the induced norm to get a Hilbert space. Different Kähler structures induce potentially inequivalent norms leading to different completions, which is related to unitarily inequivalent representations of the algebra of observables. This is discussed in more detail in section~\ref{sec:fieldtheory} on field theories.

A linear map $K$ on $V$ is complex-linear if it commutes with $J$, \ie $[K,J]=0$ and it is complex anti-linear if it anti-commutes with $J$, \ie $\{K,J\}=0$. We can decompose any linear map $K$ uniquely into its linear and anti-linear parts $K_\pm$ given by
\begin{align}
    K_\pm=\frac{1}{2}(K\pm JKJ)\quad\text{with}\quad K=K_++K_-\,,\label{eq:Kplusminus1}
\end{align}
such that $\{K_+,J\}=0$ and $[K_-,J]=0$.  We find
 \begin{align}
     \mathrm{Tr}(K_-K_+)=\tfrac{1}{4}\mathrm{Tr}(K^2+JK^2J+KJKJ-JKJK)=0\,,
 \end{align}
which is proportional to the Killing form $\mathcal{K}(K_+,K_-)$ on $\mathfrak{g}$, which we defined in~\eqref{eq:killing}. Therefore, the decomposition splits $K$ over $\mathfrak{u}(N)$ and its orthogonal complement $\mathfrak{u}_\perp(N)$, which corresponds exactly to this decomposition into complex linear algebra elements $K_-$ forming $\mathfrak{u}(N)$ and complex anti-linear algebra elements $K_+$ forming $\mathfrak{u}_\perp(N)$.

If we choose a basis, in which the Kähler structures take the standard form~\eqref{eq:kaehler-standard-abs}, we can identify real $2N$-dimensional vectors $v$ with $N$-dimensional complex vectors $\widetilde{v}$ via
\begin{align}
    v\equiv(v_2,v_1)\in\mathbb{R}^{2N}\quad\Leftrightarrow\quad \widetilde{v}=v_1+\ii v_2\in\mathbb{C}^N\,.
\end{align}
In this basis, the inner product defined in~\eqref{eq:complex-inner-product} is given by $\braket{u,v}=\widetilde{u}^\dagger \widetilde{v}$.

For a general linear map $K: V\to V$, we have the matrix representations
\begin{align}
K\equiv\left(\begin{array}{cc}
    A & B\\
    C & D
    \end{array}\right)\,,\quad K_\pm\equiv\left(\begin{array}{cc}
    \mp A_\pm & B_\pm\\
    \pm B_\pm & A_\pm
    \end{array}\right)\,,
\end{align}
where we have $A_\pm=\frac{1}{2}(A\mp D)$ and $B_\pm=\frac{1}{2}(B\pm C)$. We can then define the complex $N$-by-$N$ matrices 
\begin{align}
    \widetilde{K}_\pm\equiv A_{\pm}+\ii B_{\pm}\,,
\end{align}
where $\widetilde{K}_+$ is complex anti-linear and $\widetilde{K}_-$ is complex linear, as explained previously. We therefore find the following matrix representation of complex $N$-by-$N$ matrices and $N$-dimensional complex vectors:
\begin{align}
    \widetilde{Kv}\equiv\widetilde{K}_+\widetilde{v}^*+\widetilde{K}_-\widetilde{v}\,,\label{eq:Kvtilde}
\end{align}
\ie when converting the $2N$-dimensional real vector $Kv$ into an $N$-dimensional complex vector $\widetilde{Kv}$ under the above identification, it is the same as acting according to~\eqref{eq:Kvtilde} with $\widetilde{K}_\pm$ on $\widetilde{v}$ and its complex conjugate $\widetilde{v}^*$.

In summary, every $2N$-dimensional Kähler space is equivalent to an $N$-dimensional complex Hilbert space. Under this identification, the Kähler structures $(G,\Omega,J)$ correspond to the Hilbert space inner product and multiplication by $\ii$. With this, it also becomes apparent how our definitions of the groups $\mathrm{GL}(N,\mathbb{C})$ and $\mathrm{U}(N)$ from~\eqref{eq:groups} are related to the standard definitions as complex $N$-by-$N$ matrices. There are several reasons why we describe Kähler spaces as $2N$-dimensional real vector spaces rather than using the complex formulation. First, the classical phase space considered for bosonic or fermionic systems does not start out as a Kähler space, but as a real vector space, where only one of the required structures ($\Omega$ for bosons, $G$ for fermions) is given. Second, we will consider various real linear maps $K$, which would need to be decomposed into $\widetilde{K}_+$ and $\widetilde{K}_-$ in the complex language. Third, we will later complexify phase space $V$ and its dual $V^*$ leading to $2N$-dimensional complex vector spaces $V_{\mathbb{C}}$ and $V^*_{\mathbb{C}}$, which only make sense when $V$ and $V^*$ are treated as real vector spaces.

\subsubsection{Non-Kähler subspaces}\label{sec:non-Kaehler-subspace}
Given a real subspace $A\subset V$ of a Kähler space $V$ equipped with $(G,\Omega,J)$, we can restrict the bilinear forms $g$ and $\omega$ onto $A$. We will denote these restrictions by $g_A$ and $\omega_A$. Due to the fact that $g$ is positive definite, also the restriction $g_A$ is positive definite and has an inverse $G_A$. Using this, we define the restricted linear complex structure as $J_A=-G_A\omega_A$. At this stage, we can ask what conditions on $A$ result in $(G_A,\Omega_A,J_A)$ being a Kähler space.

\begin{proposition}
Given a Kähler space $V$ with structures $(G,\Omega,J)$, a real subspace $A\subset V$ with restricted structures $(G_A,\Omega_A,J_A=-G_A\omega_A)$ is a Kähler space if and only if $J_A^2=-\id_A$.
\end{proposition}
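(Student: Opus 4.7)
The plan is to verify Definition~\ref{def:kaehler} in both directions, with the real content sitting entirely in the ``$\Leftarrow$'' direction. The forward implication is immediate: if $(G_A,\Omega_A,J_A)$ is a Kähler space, then by the very definition of a complex structure we have $J_A^2=-\id_A$.

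For the reverse direction, I would assume $J_A^2=-\id_A$ and check each Kähler axiom for the restricted triple. Symmetry and positive-definiteness of $G_A$ are automatic, since $G_A$ is the inverse of $g_A$, which is the restriction of the positive-definite form $g$ to $A$. Antisymmetry of $\omega_A$ is inherited from $\omega$. The compatibility relation $J_A=-G_A\omega_A$ holds by construction, and the equivalent form $J_A=\Omega_A g_A$ will follow once $\omega_A$ is known to be invertible, by exactly the same algebraic manipulation used on $V$ itself.

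The only step that actually uses the hypothesis is establishing non-degeneracy of $\omega_A$, which is needed so that $\Omega_A\equiv\omega_A^{-1}$ exists and completes the triple. Here $J_A^2=-\id_A$ pays off: it forces $J_A$ to be invertible (with $J_A^{-1}=-J_A$), and then rearranging the defining relation into $\omega_A=-g_A J_A$ exhibits $\omega_A$ as a composition of invertible maps, hence invertible. To close out the Kähler axioms one also wants the compatibility $J_A G_A J_A^\intercal=G_A$; I would derive this from the identity $J_A G_A+G_A J_A^\intercal=0$ (which is immediate from $J_A=-G_A\omega_A$ together with symmetry of $G_A$ and antisymmetry of $\omega_A$) combined with $(J_A^\intercal)^2=-\id_A$.

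The main conceptual point, rather than a technical obstacle, is to recognize that $J_A$ is \emph{not} the restriction of the ambient $J$ to $A$ (that restriction generally fails to map $A$ into itself when $A$ is not $J$-invariant). Instead, $J_A$ is manufactured from the restricted bilinear forms through the compatibility equation, and the single condition $J_A^2=-\id_A$ captures precisely the obstruction to this construction producing a bona fide Kähler structure. I do not anticipate any hidden difficulty beyond keeping this distinction straight.
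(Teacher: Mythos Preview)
Your proposal is correct and follows essentially the same approach as the paper: both directions hinge on the observation that $g_A$ automatically inherits positive-definiteness, $\omega_A$ inherits antisymmetry, the compatibility relation holds by construction, and the single nontrivial step is that $J_A^2=-\id_A$ forces $\omega_A$ to be non-degenerate (the paper phrases this contrapositively, you phrase it via $\omega_A=-g_A J_A$ as a composition of invertibles). Your additional verification of $J_A G_A J_A^\intercal=G_A$ is a bit more thorough than the paper's sketch, but the paper's Definition~\ref{def:kaehler} already notes this follows from the basic compatibility axioms, so it is not strictly needed.
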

\begin{proof}
We need to check the condition for each structure and need to ensure that the three structures are related by~\eqref{eq:Kaehler-compatibility}. The latter is ensured by construction. The restriction $g_A$ continues to be positive definite and has an inverse $G_A$. The restriction $\omega_A$ continues to be antisymmetric, but may not be non-degenerate. However, if $\omega_A$ is non-degenerate, then the linear map $J_A$ could not have full rank. Therefore, $J_A^2=-\id_A$ guarantees not only that $J_A$ satisfies the conditions to be a linear complex structure, but it also ensures that $\omega_A$ is non-degenerate, has an inverse $\Omega_A$ and thus satisfies the conditions of a symplectic form.
\end{proof}

There are several ways how a subspace can fail to be a Kähler space. For example, any odd dimensional real subspace will not be a Kähler space. For our purpose, we will be interested in specific classes of subspaces which define bosonic and fermionic subsystems.

\begin{definition}\label{def:non-kaehler}
Given a Kähler space $V$ with structure $(g,\omega,J)$, we refer to a subspace $A\in V$ as
\begin{itemize}
    \item \textbf{bosonic subsystem} if $\omega_A$ on $A$ is non-degenerate,
    \item \textbf{fermionic subsystem} if $\mathrm{dim}(A)$ is even.
\end{itemize}
We also define the complementary subsystem $B$ as
\begin{align}
    \hspace{-2.8mm}B=\left\{\begin{array}{ll}
    \{v\in V\,|\,v^a\omega_{ab}u^b=0\,\forall\, u\in A\}     & \normalfont{\textbf{(bosons)}} \\
    \{v\in V\,|\,\,v^ag_{ab}u^b=0\,\forall\, u\in A\}     & \normalfont{\textbf{(fermions)}}
    \end{array}\right.\hspace{-1mm},
\end{align}
which are commonly referred to as the symplectic and orthogonal complement of $A$ in $V$, respectively. The resulting decomposition $V=A\oplus B$ induces an equivalent dual decomposition $V^*=A^*\oplus B^*$.
\end{definition}

In essence, this definition ensures that the restrictions $G_A$ and $\Omega_A$ are a proper positive definite metric and a proper symplectic form, respectively. Therefore, bosonic or fermionic subsystems $A$ fail to be Kähler spaces if and only if these two structures are incompatible in the sense of definition~\ref{def:kaehler}, \ie they fail to give rise to proper linear complex structure $J_A=G_A\omega_A$, such that $J_A^2=-\id_A$.

\subsubsection{Cartan decomposition}\label{sec:Cartan-decomposition}
For a Kähler space $V$, the Cartan decomposition provides a unique decomposition $M=Tu$ of every group element $M\in\mathcal{G}$ into a piece $u\in\mathrm{U}(N)$ that preserves the Kähler structures and another piece $T$ with $\{T,J\}=TJ+JT=0$.

We begin by fixing compatible Kähler structures $(G,\Omega,J)$ on $V$. For every group element $M\in\mathcal{G}$, which is either the symplectic group $\mathrm{Sp}(2N,\mathbb{R})$ preserving $\Omega$ or the orthogonal group $\mathrm{O}(2N,\mathbb{R})$ preserving $G$, we find new Kähler structures
\begin{align}
    (G_M,\Omega_M,J_M):=(MGM^\intercal,M\Omega M^\intercal,MJM^{-1})\,,
\end{align}
of which $\Omega_M=\Omega$ for bosons and $G_M=G$ for fermions. We can multiply $M$ by an element $u\in\mathrm{U}(N)$, that preserves $(G,\Omega,J)$, without changing $(G_M,\Omega_M,J_M)$, \ie
\begin{align}
    (G_{Mu},\Omega_{Mu},J_{Mu})=(G_M,\Omega_M,J_M)\,.
\end{align}
It defines an equivalence relation on the group $\mathcal{G}$, namely
\begin{align}
    M\sim \tilde{M}\quad\Leftrightarrow\quad \exists u\in\mathrm{U}(N): \tilde{M}=Mu\,,\label{eq:equivalence-relation}
\end{align}
where $J_M$ is the same for all $M\in[M]$ within a given equivalence class.

The Cartan decomposition attempts to fix a unique representative $T\in[M]$ in the equivalence class of $M$. This is always possible for bosons and almost always for fermions, namely if $M\in\mathrm{SO}(2N,\mathbb{R})$, \ie if $M$ is connected to the identity. The basic idea is to search for $T=e^{K_+}$, where $K_+\in\mathfrak{u}_\perp(N)$ defined in~\eqref{eq:uperp}. This ensures that $K_+$ anti-commutes with $J$, so that we find $e^{K_+}J=Je^{-K_+}$. With this, we can compute
\begin{align}
    TJT^{-1}=T^2J=J_M\quad\Rightarrow\quad T^2=-J_MJ\,,
\end{align}
where we used $J^{-1}=-J$. It is useful to define the so-called \emph{relative complex structure}
\begin{align}
    \Delta=-J_MJ\,,
    \label{eq:relative-complex-structure}
\end{align}
which encodes exactly the relative information between $J$ and $J_M$ and is thus independent of the representative $M\in[M]$. Based on the above calculation, we would like to set $T=\sqrt{\Delta}$, but the question is under which conditions this square root is well-defined and unique.

\begin{proposition}\label{prop:relative-delta}
Given Kähler structures $(G,\Omega,J)$ and a group element $M\in\mathcal{G}$ being either symplectic or orthogonal, the relative complex structure $\Delta=-MJM^{-1}J$ has the following properties:
\begin{itemize}
    \item \textbf{Bosons.} All eigenvalues of $\Delta$ are positive and come in pairs of the form $(e^{\rho},e^{-\rho})$ with $\rho\in[0,\infty)$, such that its square root $T=\sqrt{\Delta}$ is unique, satisfies $TJ=JT^{-1}$ and has eigenvalues $(e^{\rho/2},e^{-\rho/2})$.
    \item \textbf{Fermions.} Eigenvalues either come in quadruples $(e^{\ii\theta},e^{\ii\theta},e^{-\ii\theta},e^{-\ii\theta})$ with $\theta\in(0,\pi)$ or in pairs $(1,1)$ or $(-1,-1)$. If $-1$ is not an eigenvalue, we can define $T=\sqrt{\Delta}$ uniquely, such that it satisifes $TJ=JT^{-1}$ and has eigenvalues $(e^{\ii\theta/2},e^{\ii\theta/2},e^{-\ii\theta/2},e^{-\ii\theta/2})$. If the eigenvalue pair $(-1,-1)$ appears an even number of times, we can still define an appropriate $T$, but it will not be unique. If the pair $(-1,-1)$ appears an odd number of times, there is no $T$, such that $T^2=\Delta$ and $TJ=JT^{-1}$ hold at the same time.
\end{itemize}
\end{proposition}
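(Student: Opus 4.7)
The plan is to first establish the fundamental identity $\Delta J = J\Delta^{-1}$, which governs every spectral property of $\Delta$, and then analyze the bosonic and fermionic cases separately using the extra structure carried by $\mathcal{G}$. For the identity, $J^2 = -\id$ gives $J_M^{-1} = -J_M$ and $J^{-1} = -J$, so $\Delta^{-1} = -(J_M J)^{-1} = -J^{-1}J_M^{-1} = -J J_M$; then $J\Delta^{-1} = -J^2 J_M = J_M$ and $\Delta J = -J_M J^2 = J_M$, so $\Delta J = J \Delta^{-1}$ and equivalently $J\Delta J^{-1} = \Delta^{-1}$. Consequently $J$ swaps the $\lambda$- and $1/\lambda$-eigenspaces, and every eigenvalue pairs with its reciprocal.

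\textbf{Bosons.} I would use $J = \Omega g$ to write $J_M = \Omega g_M$ with $g_M := \omega J_M$. The symmetric form $g_M$ is positive definite because $v^\intercal g_M v = (M^{-1}v)^\intercal g (M^{-1}v)$, which follows from $M^\intercal \omega M = \omega$ for $M \in \mathrm{Sp}(2N,\mathbb{R})$. Substituting $\Delta = -\Omega g_M \Omega g$, a short calculation gives $(g\Delta)^\intercal = g\Delta$ and $v^\intercal (g\Delta) v = (\Omega g v)^\intercal g_M (\Omega g v) > 0$, so $\Delta$ is $g$-self-adjoint with strictly positive eigenvalues. Together with the reciprocal pairing, this forces the spectrum to take the form $(e^\rho,e^{-\rho})$ with $\rho \in [0,\infty)$. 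The unique positive square root $T = \sqrt{\Delta}$ is then defined by functional calculus in the $g$-orthogonal eigenbasis of $\Delta$, and $TJ = JT^{-1}$ follows from $J\Delta J^{-1} = \Delta^{-1}$ combined with the uniqueness of the positive square root.

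\textbf{Fermions.} Both $J$ and $M$ are $G$-orthogonal, hence so is $\Delta$, giving unit-modulus eigenvalues; combined with the reciprocal pairing these organize as $e^{\pm\ii\theta}$ with $\theta \in [0,\pi]$. To upgrade pairs with $\theta \in (0,\pi)$ to quadruples I would analyze the $\mathbb{C}$-antilinear operator $\bar J(v) := \overline{Jv}$ on the complex eigenspace $V_{e^{\ii\theta}} \subset V_\mathbb{C}$. Using that $J$ is real and $J^2 = -\id$, one finds $(\bar J)^2 v = \overline{J\overline{Jv}} = \overline{-\bar v} = -v$, so $\bar J$ is an antilinear involution with square $-\id$, i.e.\ a quaternionic structure on $V_{e^{\ii\theta}}$. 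Such a structure forces $\dim_\mathbb{C} V_{e^{\ii\theta}}$ to be even---in particular it cannot exist on a one-dimensional complex space, since an antilinear map $v \mapsto c\bar v$ there satisfies $f^2(v) = |c|^2 v \neq -v$. Hence each $\theta \in (0,\pi)$ contributes a quadruple. For $\theta \in \{0,\pi\}$, the real eigenspace $V_{\pm 1}$ is $J$-invariant with $J^2|_{V_{\pm 1}} = -\id$, so its real dimension is even, giving the $(1,1)$ and $(-1,-1)$ pairs.

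It remains to construct $T$. On each quadruple block I would apply functional calculus with the branch $\sqrt{e^{\pm\ii\theta}} = e^{\pm\ii\theta/2}$, and on the $+1$ block set $T = \id$; both choices are forced and preserve reality. On $V_{-1}$, $T^2 = -\id$ together with $T^{-1} = -T$ turns $TJ = JT^{-1}$ into $\{T,J\} = 0$, so $T|_{V_{-1}}$ must be an antilinear involution with square $-\id$ on the complex space $(V_{-1}, J|_{V_{-1}})$, that is, another quaternionic structure. The same obstruction then gives exactly the claimed dichotomy: such $T|_{V_{-1}}$ exists precisely when $\dim_\mathbb{C}(V_{-1},J)$ is even---equivalently when the pair $(-1,-1)$ appears an even number of times---and when it exists it is far from unique, parametrized by the choice of quaternionic basis. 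Verifying $TJ = JT^{-1}$ on the regular blocks is then a one-line check on eigenvectors, using that $J$ swaps the $e^{\ii\theta}$- and $e^{-\ii\theta}$-eigenspaces together with our branch choice. The main obstacle is this quaternionic-structure argument, applied twice; cleanly separating the complex-linear action of $J$ on $V_\mathbb{C}$ from the antilinear action of complex conjugation is the conceptual crux of the fermionic half, whereas the bosonic half reduces to a standard symmetric-positive analysis.
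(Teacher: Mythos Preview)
Your proposal is correct. The bosonic half is essentially the paper's argument: both show that $\Delta$ is $g$-self-adjoint and positive (the paper writes $\Delta=MGM^\intercal g$, you write $g\Delta=J^\intercal g_M J$; these are equivalent), and then take the unique positive square root.

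The fermionic half is where you genuinely diverge. The paper obtains the even multiplicity of the complex eigenvalues by rewriting $\Delta=M\Omega M^\intercal\omega$ as a product of two antisymmetric matrices and then invoking an external result (the Stenzel condition, cited from~\cite{ikramov2009product}); the analysis of the $(-1,-1)$ blocks is done by writing down explicit $2\times 2$ and $4\times 4$ matrices and checking by hand which admit a square root with $TJT=J$. Your route is more intrinsic: the antilinear map $\bar J(v)=\overline{Jv}$ is a quaternionic structure on each non-real eigenspace $V_{e^{\ii\theta}}$, forcing even complex dimension, and the same obstruction applied to $(V_{-1},J|_{V_{-1}})$ gives exactly the parity condition on the number of $(-1,-1)$ pairs. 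This is self-contained (no citation needed) and explains \emph{why} the obstruction is the same in both places---it is the same quaternionic-dimension count. The paper's approach, by contrast, gives concrete block forms for $T$ that are useful later when writing Gaussian states explicitly. One small point worth making explicit in your write-up: the reduction to $V_{-1}$ uses that any admissible $T$ commutes with $\Delta=T^2$ and hence preserves each eigenspace, so the block-by-block analysis loses nothing.
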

\begin{proof}
Using $J^{-1}=-J$, we find $\Delta^{-1}=J\Delta J^{-1}$, which implies that $\Delta$ and $\Delta^{-1}$ have the same spectrum, \ie all eigenvalues appear in pairs $(\lambda,\lambda^{-1})$.\\
\textbf{Bosons.} For bosons, we can use $M\Omega M^\intercal=\Omega$ from~\eqref{eq:groups} and $J=-G\omega$ from~\eqref{eq:Kaehler-compatibility} to show that $\Delta=MGM^\intercal g$. The matrix representations of $MGM^\intercal$ and $g$ are both symmetric and positive-definite. Consequently, $\Delta$ is diagonalizable with positive eigenvalues. Therefore, the spectrum of $\Delta$ must consist of pairs $(e^{\rho},e^{-\rho})$, such that $T=\sqrt{\Delta}$ with eigenvalues $(e^{\rho/2},e^{-\rho/2})$ is well-defined and unique. We can therefore choose a basis, where $(G,\Omega,J)$ decompose into $2$-by-$2$ blocks in the standard form of~\eqref{eq:kaehler-standard-abs}, such that $\Delta\equiv\oplus_i\Delta^{(i)}$ and $T\equiv\oplus_iT^{(i)}$ are given by
\begin{align}
    \Delta^{(i)}=\begin{pmatrix}
     e^{\rho_i} & 0\\
     0 & e^{-\rho_i}
    \end{pmatrix}\,,\quad T^{(i)}=\begin{pmatrix}
     e^{\rho_i/2} & 0\\
     0 & e^{-\rho_i/2}
    \end{pmatrix}\,.
\end{align}
It also follows that we have $TJ=JT^{-1}$.\\
\textbf{Fermions.} For fermions, we can use $MGM^\intercal=G$ from and $J=\Omega g$ from~\eqref{eq:Kaehler-compatibility} to show that $\Delta=M\Omega M^\intercal \omega$. The matrix representations of $M\Omega M^\intercal$ and $\omega$ are both anti-symmetric and non-degenerate, so the spectrum of $\Delta$ has the same properties as the product of two anti-symmetric matrices. As proven in~\cite{ikramov2009product}, such products satisfy the Stenzel condition, which ensures that every eigenvalue appears an even number of times. Moreover, we have $\Delta\in\mathrm{O}(2N,\mathbb{})$, whose group elements are known to be diagonalizable with eigenvalues of modulus $1$. Therefore, the eigenvalues of $\Delta$ split into quadruples of the form $(e^{\ii \theta},e^{\ii \theta},e^{-\ii \theta},e^{-\ii \theta})$ and possible pairs of the form $(1,1)$ and $(-1,-1)$. Similar to the bosonic, we can decompose $(G,\Omega,J)$ into $4$-by-$4$ and some $2$-by-$2$ blocks of the standard form~\eqref{eq:kaehler-standard-abs}, such that $\Delta=\oplus_i\Delta^{(i)}$ and $T=\oplus_i T^{(i)}$. The respective $4$-by-$4$ blocks then take the form
\begin{align}
    \Delta^{(i)}\equiv\begin{pmatrix}
     \cos{\theta_i} & \sin{\theta_i} & 0 & 0\\
     -\sin{\theta_i} & \cos{\theta_i} & 0 & 0\\
     0 & 0 & \cos{\theta_i} & -\sin{\theta_i}\\
     0 & 0 & \sin{\theta_i} & \cos{\theta_i}
    \end{pmatrix}\,,
\end{align}
whose eigenvalues are $(e^{\ii\theta_i},e^{\ii\theta_i},e^{-\ii\theta_i},e^{-\ii\theta_i})$, such that
\begin{align}
  \hspace{-1mm}  T^{(i)}\equiv\begin{pmatrix}
    \cos{\tfrac{\theta_i}{2}} & \sin{\tfrac{\theta_i}{2}} & 0 & 0\\[1mm]
     -\sin{\tfrac{\theta_i}{2}} & \cos{\tfrac{\theta_i}{2}} & 0 & 0\\[1mm]
     0 & 0 & \cos{\tfrac{\theta_i}{2}} & -\sin{\tfrac{\theta_i}{2}}\\[1mm]
     0 & 0 & \sin{\tfrac{\theta_i}{2}} & \cos{\tfrac{\theta_i}{2}}
\end{pmatrix}\label{eq:Ti-fermions}
\end{align}
and we have $TJ=JT^{-1}$. Consequently, $T$ is unique if all $\theta_i\in[0,\pi)$. For $\theta_i=\pi$ associated to a $(-1,-1,-1,-1)$ eigenvalue quadruple, we have $\Delta^{(i)}\equiv-\id$, for which there is no unique square root $T^{(i)}$, but a whole family
\begin{align}
    T^{(i)}\equiv\begin{pmatrix}
 0 & \cos{\phi} & 0 & \sin{\phi} \\
 -\cos{\phi} & 0 & -\sin{\phi} & 0 \\
 0 & \sin{\phi} & 0 & -\cos{\phi} \\
 -\sin{\phi} & 0 & \cos{\phi} & 0
\end{pmatrix}
\end{align}
parametrized by an angle $\phi\in[0,\pi]$. If there is a remaining eigenvalue pair $(-1,-1)$ that cannot be paired up with another one, we find that the candidates for $T^{(i)}$ satisfying $(T^{(i)})^2\equiv-\id_2$ are given by
\begin{align}
    T^{(i)}\equiv\begin{pmatrix}
     a & b\\
     -\frac{1+a^2}{b} & -a
    \end{pmatrix}\,,
\end{align}
of which none satisfies $T^{(i)}J^{(i)}T^{(i)}=J^{(i)}$. Therefore, such $T$ does not exist. A single eigenvalue block $(-1,-1)$ in $\Delta$ can be created by the group element
\begin{align}
    M^{(i)}\equiv\begin{pmatrix}
     1 & 0\\
     0 & -1
    \end{pmatrix}\,\, \Rightarrow\,\, \Delta^{(i)}\equiv -MJM^{-1}J\equiv-\id_2\,,\label{eq:Mi}
\end{align}
which lies in the part of $\mathrm{O}(2,\mathbb{R})$ that is not connected to the identity with $\det M^{(i)}=-1$. If we have an odd number of such blocks, the resulting matrix $M$ will also have $\det{M}=-1$ and thus lies in the part $\mathrm{O}^{-}(2N,\mathbb{R})$ not connected to the identity. We therefore see that $T=\sqrt{\Delta}$ with $TJ=JT^{-1}$ does not exist if and only if $M\in\mathrm{O}^-(2N,\mathbb{R})$, in which case $\Delta$ has an odd number of eigenvalue pairs $(-1,-1)$.
\end{proof}

\begin{table*}[t]
	\ccaption{Classical theory and Kähler structures}{This table summarizes and compares our methods to describe bosonic and fermionic Gaussian states using Kähler structures covered in section~\ref{sec:classical-Kaehler}.}
	\label{tab:summary-table-I}
	\footnotesize
	\def\arraystretch{1.7}
	\begin{tabular}{r|C{6.5cm}|C{6.5cm}}
		\textbf{structure} & \textbf{bosons} & \textbf{fermions}\\
		\hline
		\hline
		classical phase space & \multicolumn{2}{c}{$\xi^a\in V\simeq\mathbb{R}^{2N}$}\\
		dual phase space & \multicolumn{2}{c}{$w_a\in V^*\simeq\mathbb{R}^{2N}$}\\
		\hline
		defining structure & symplectic form $\Omega^{ab}$ & positive definite metric $G^{ab}$\\
		dual structure & $\omega_{ab}$ with $\Omega^{ac}\omega_{cb}=\delta^a{}_b$ & $g_{ab}$ with $G^{ac}g_{cb}=\delta^a{}_b$ \\
		\hline
		Poisson bracket & $\{f_a,g_b\}_{-}=f_ag_b\,\Omega^{ab}$ & $\{f_a,g_b\}_{+}=f_ag_b\,G^{ab}$\\
		classical algebra & Symmetric algebra $\mathrm{Sym}(V^*)$ generated by $V^*$ & Grassmann algebra  $\mathrm{Grass}(V^*)$ generated by $V^*$\\
		\hline
		\hline
		Kähler structures & \multicolumn{2}{c}{$(G,\Omega,J)$ with $J^2=-\id$ and $J=-G\omega=\Omega g$}\\
		Complex multiplication & \multicolumn{2}{c}{$(z\odot v)=(x\,\id+y\, J)v$ for $z=x+\ii y$}\\
		Complex inner product & \multicolumn{2}{c}{$\braket{u,v}=u^a(g_{ab}-\ii\omega_{ab})v^b$}\\
		\hline
		structure group $\mathcal{G}$ & $\mathrm{Sp}(2N,\mathbb{R})=\{M\in\mathrm{GL}(2N,\mathbb{R})|M\Omega M^\intercal=\Omega\}$ & $\mathrm{O}(2N,\mathbb{R})=\{M\in\mathrm{GL}(2N,\mathbb{R})|MG M^\intercal=G\}$\\
		structure algebra $\mathfrak{g}$ &  $\mathfrak{sp}(2N,\mathbb{R})=\{K\in\mathfrak{gl}(2N,\mathbb{R})|K\Omega+\Omega K^\intercal=0\}$ & $\mathfrak{so}(2N,\mathbb{R})=\{K\in\mathfrak{gl}(2N,\mathbb{R})|KG+GK^\intercal=0\}$\\
		dimension & $N(2N+1)$ & $N(2N-1)$\\
		\hline
		intersecting group & \multicolumn{2}{c}{$\mathrm{U}(N)=\mathrm{Sp}(2N,\mathbb{R})\cap\mathrm{O}(2N,\mathbb{R})=\{M\in\mathcal{G}| [M,J]=0\}$}\\
		intersecting algebra & \multicolumn{2}{c}{$\mathfrak{u}(N)=\mathfrak{sp}(2N,\mathbb{R})\cap\mathfrak{so}(2N,\mathbb{R})=\{M\in\mathfrak{g}| [K,J]=0\}$}\\
		dimension & \multicolumn{2}{c}{$N^2$}\\
		\hline
        orthogonal complement & \multicolumn{2}{c}{$\mathfrak{u}_\perp(N)=\{K\in\mathfrak{g}|\{K,J\}=0\}$}\\
		algebra decomposition & \multicolumn{2}{c}{$\mathfrak{g}=\mathfrak{u}(N)\oplus\mathfrak{u}_\perp(N)$}\\
		element decomposition &  \multicolumn{2}{c}{$K_\pm =\frac{1}{2}(K\pm JKJ)$ with $K_+\in\mathfrak{u}_\perp(N)$ and $K_-\in\mathfrak{u}(N)$}\\
		dimension of $\mathfrak{u}_\perp$ & $N(N+1)$ & $N(N-1)$\\
		\hline
		Subsystems $A\subset V$ & $\omega_A$ non-degenerate & $\dim(A)$ even\\
		\hline
		Cartan decomposition $M=Tu$ & \multicolumn{2}{c}{$T=\sqrt{\Delta}$ with $u=MT^{-1}$}\\
		relative complex structure $\Delta$ & \multicolumn{2}{c}{$\Delta=-J_MJ$ for $J_M=MJM^{-1}$}\\
		spectrum of $\Delta$ & $(e^{\rho_i},e^{-\rho_i})$ & $(e^{\vartheta_i},e^{\vartheta_i},e^{-\vartheta_i},e^{-\vartheta_i})$,  $(1,1)$ or $(-1,-1)$\\
		\hline
		Space of $J$ &  $\mathcal{M}_b=\left\{J\in\mathrm{Sp}(2N,\mathbb{R})\,\big|\,J^2=-\id,J\Omega>0\right\}$ & $\mathcal{M}_f=\left\{J\in\mathrm{O}(2N,\mathbb{R})\,\big|\,J^2=-\id\right\}$\\
		Symmetric space & type DIII: $\mathcal{M}_b\simeq\mathrm{Sp}(2N,\mathbb{R})/\mathrm{U}(N)$ & type CI: $\mathcal{M}_f\simeq\mathrm{O}(2N,\mathbb{R})/\mathrm{U}(N)$\\
		Manifold dimension & $N(N+1)$ & $N(N-1)$\\
    \hline
	\hline
	quantization procedure & $\mathrm{Sym}(V^*)\quad	\longrightarrow \quad\mathrm{Weyl}(V^*,\Omega)$ &
    $\mathrm{Grass}(V^*)\quad\longrightarrow \quad \mathrm{Cliff}(V^*,G)$ \\
    \hline
		linear observables & \multicolumn{2}{c}{$\hat{\xi}^a\qp(q_1,\cdots,q_N,p_1,\cdots,p_N)\aab(\hat{a}_1,\dots,\hat{a}_N,\hat{a}_1^\dagger,\dots,\hat{a}_N^\dagger)\in V\simeq\mathbb{R}^{2N}$}\\
		\hline
		algebra representation $\widehat{K}$ & $-\tfrac{\ii}{2}\omega_{ac}K^c{}_b\hat{\xi}^a\hat{\xi}^b$ & $\tfrac{1}{2}g_{ac}K^c{}_b\hat{\xi}^a\hat{\xi}^b$\\
		group representation $\mathcal{U}(M,z)$ & \multicolumn{2}{c}{$\mathcal{U}^\dagger(M,z)\hat{\xi}^a\mathcal{U}(M,z)=M^a{}_b\hat{\xi}^b+z^a$}\\
	\hline
    total number operator & \multicolumn{2}{c}{$\hat{N}_J=\frac{1}{2}(g_{ab}-\ii\omega_{ab})\hat{\xi}^a\hat{\xi}^b$} \\
	\hline
	unitary equivalence &     \multicolumn{2}{c}{$\braket{ \tilde{J}|\hat{N}_J|\tilde{J}}=
    \frac{1}{4}(g_{ab}-\ii \,\omega_{ab})(\tilde{G}^{ab}+\ii \,\tilde{\Omega}^{ab})<\infty$}\\
    of $\mathcal{F}_{(G,\Omega,J)}$ and $\mathcal{F}_{(\tilde{G},\tilde{\Omega},\tilde{J})}$ & $\braket{\tilde{J}|\hat{N}_J|\tilde{J}}=-\frac{1}{4}\tr(\id-\Delta)$ & $\braket{\tilde{J}|\hat{N}_J|\tilde{J}}=\frac{1}{4}\tr(\id-\Delta)$
	\end{tabular}
\end{table*}

This allows us to define the Cartan decomposition of most group elements $M\in\mathcal{G}$.

\begin{definition}
Given a Kähler space $V$ with structures $(G,\Omega,J)$ and a group element $M$ connected to the identity (\ie $M\in\mathrm{SO}(2N,\mathbb{R})$ for fermions), we define the Cartan decomposition as
\begin{align}
    M=Tu\,\,\text{with}\,\, u\in\mathrm{U}(N)\,\,\text{and}\,\,TJT=J\,.\label{eq:Cartan-decomposition}
\end{align}
\end{definition}

This decomposition is unique for bosons and almost unique for fermions, as discussed in proposition~\ref{prop:relative-delta}. It further follows that $T$ can always be written as $T=e^{K_+}$ with $K_+\in\mathfrak{u}_\perp(N)$. In summary, the Cartan decomposition $M=Tu$ is unique for all group elements $M\in\mathrm{Sp}(2N,\mathbb{R})$ and almost all group elements $M\in\mathrm{SO}(2N,\mathbb{R})$. Only in the special case, where $\Delta=-MJM^{-1}J$ has eigenvalue quadruples $(-1,-1,-1,-1)$, the square root is not unique. Finally, the Cartan decomposition does not exist if $\Delta=-MJM^{-1}J$ has an odd number of eigenvalue pairs $(-1,-1)$.

\subsubsection{Symmetric spaces}\label{sec:symmetric-spaces}
We will see in section~\ref{sec:pure-Gaussian-states} that the manifolds of pure bosonic or fermionic Gaussian states are isomorphic to the inequivalent ways a bosonic or fermionic phase space can be turned into a Kähler space. In this section, we will construct the respective manifolds in purely geometric terms without making an explicit reference to Gaussian states or Hilbert spaces and show that they are so-called symmetric spaces.

Given a symplectic form $\Omega$ for bosons or a positive definite metric $G$ for fermions, we define the submanifolds
\begin{align}
    \mathcal{M}_b&=\left\{J\in\mathrm{Sp}(2N,\mathbb{R})\,\big|\,J^2=-\id,J\Omega>0\right\}\,,\\
    \mathcal{M}_f&=\left\{J\in\mathrm{O}(2N,\mathbb{R})\,\big|\,J^2=-\id\right\}
\end{align}
of $\mathcal{G}$. This definition ensures that for every $J\in\mathcal{M}$, we have a triple of compatible Kähler structures $(G,\Omega,J)$, where $\Omega$ for bosons or $G$ for fermions is fixed a priori. We will now show that these manifolds are isomorphic to the quotient $\mathcal{G}/\mathrm{U}(N)$ and satisfy the conditions of what is known in mathematics as symmetric spaces.

\begin{proposition}
Given a single element $J_0\in\mathcal{M}$, we can generate the full manifold $\mathcal{M}$ as
\begin{align}
    \mathcal{M}=\left\{MJ_0M^{-1}\,\big|\,M\in\mathcal{G}\right\}\,.
    \label{eq:manifold}
\end{align}
For every element $J\in\mathcal{M}$, there exists a whole equivalence class of group elements $\{M\in\mathcal{G}\,|\,J=MJ_0M^{-1}\}$ that map $J_0$ to $J$. Therefore, the manifold $\mathcal{M}$ is isomorphic to the $\mathcal{G}/\mathrm{U}(N)$ with
\begin{align}
    \mathrm{U}(N)=\{u\in \mathcal{G}\,|\, uJ_0u^{-1}=J_0\}\,.
\end{align}
\end{proposition}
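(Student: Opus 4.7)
My plan is to realize $\mathcal{M}$ as the orbit of $J_0$ under the conjugation action of $\mathcal{G}$, and then identify the stabilizer with $\mathrm{U}(N)$, which by the orbit-stabilizer theorem immediately yields the desired isomorphism $\mathcal{M}\simeq\mathcal{G}/\mathrm{U}(N)$. The three things to verify are: (i) well-definedness of the action, (ii) transitivity, and (iii) the stabilizer computation.

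For (i), I would check that $J=MJ_0M^{-1}$ again lies in $\mathcal{M}$. Squaring gives $J^2=MJ_0^2M^{-1}=-\id$ trivially. Membership in the structure group $\mathcal{G}$ is a short computation: in the bosonic case I would use $M\Omega M^\intercal=\Omega$ twice, once for $M$ and once for $M^{-1}$, to reduce $J\Omega J^\intercal$ to $MJ_0\Omega J_0^\intercal M^\intercal=M\Omega M^\intercal=\Omega$, since $J_0\in\mathrm{Sp}(2N,\mathbb{R})$; the fermionic version is the analogous identity with $G$ and $\mathrm{O}(2N,\mathbb{R})$. The positivity condition $J\Omega>0$ in the bosonic case follows from the same identity $M^{-1}\Omega=\Omega M^\intercal$, which yields $J\Omega=M(J_0\Omega)M^\intercal$, a congruence that preserves positive definiteness.

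For (ii), the key input is the simultaneous standard form~\eqref{eq:kaehler-standard-abs}: given any $J\in\mathcal{M}$ I consider the compatible Kähler triple $(G_J,\Omega,J)$ (bosonic case) or $(G,\Omega_J,J)$ (fermionic case) with the defining structure held fixed, and choose a basis in which all three structures are in standard form. Because the defining structure is already in its standard form, the basis change is implemented by some $M\in\mathcal{G}$, and by construction $M^{-1}JM=J_{\mathrm{std}}$. Applying the same argument to $J_0$ and composing gives an element of $\mathcal{G}$ carrying $J_0$ to $J$ by conjugation, establishing transitivity. This step is the main obstacle: one must be careful that the simultaneous normalization does not require transformations outside $\mathcal{G}$, which is precisely what the 2-out-of-3 property guarantees.

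For (iii), the stabilizer is $\{u\in\mathcal{G}\mid uJ_0u^{-1}=J_0\}=\{u\in\mathcal{G}\mid [u,J_0]=0\}$, which is exactly the characterization of $\mathrm{U}(N)$ given in~\eqref{eq:U(1)}. Combining (i)--(iii), the map $\mathcal{G}\to\mathcal{M}$, $M\mapsto MJ_0M^{-1}$, is surjective with fibers precisely the left cosets of $\mathrm{U}(N)$, so it descends to a bijection $\mathcal{G}/\mathrm{U}(N)\to\mathcal{M}$; smoothness in both directions follows from the fact that $\mathcal{G}$ is a Lie group acting smoothly on $\mathcal{M}$ and $\mathrm{U}(N)$ is a closed subgroup, giving the stated homogeneous space structure.
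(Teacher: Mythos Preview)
Your proof is correct, but the transitivity step (ii) takes a different route from the paper. The paper constructs the group element explicitly via the relative complex structure $\Delta=-JJ_0$: invoking Proposition~\ref{prop:relative-delta}, it sets $T=\sqrt{\Delta}$ so that $J=TJ_0T^{-1}$, with a separate patch in the fermionic case when $\Delta$ has an odd number of eigenvalue pairs $(-1,-1)$ (where the square root fails and one inserts a reflection block $M^{(i)}$ from~\eqref{eq:Mi}). Your argument instead reduces both $J$ and $J_0$ to the common standard form~\eqref{eq:kaehler-standard-abs} and composes the two normalizing transformations; since the defining structure ($\Omega$ or $G$) is already standard in both bases, the composite lies in $\mathcal{G}$ automatically. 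This is more elementary and handles the disconnected fermionic component uniformly without the case split. What the paper's approach buys is a \emph{canonical} representative $T=e^{K_+}$ with $K_+\in\mathfrak{u}_\perp(N)$, which ties directly into the Cartan decomposition of Section~\ref{sec:Cartan-decomposition} and is used downstream; your normal-form argument produces some $M$ but not a distinguished one. A minor remark: your invocation of the 2-out-of-3 property in (ii) is a bit off-target---what you actually use is that the basis change preserves the standard form of the defining structure and hence lies in $\mathcal{G}$, which is a direct consequence of~\eqref{eq:groups} rather than 2-out-of-3 per se. You also include the well-definedness check (i), which the paper leaves implicit.
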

\begin{proof}
Given $J_0$, let us show that for every $J\in\mathcal{M}$, there exists a group element $M$ with $J=MJ_0M^{-1}$. We define $\Delta=-JJ_0$ and use the same arguments as in proposition~\ref{prop:relative-delta} to show that there exists a respective $T=\sqrt{\Delta}$, such that $J=TJ_0T^{-1}$ and we can just choose $M=T$. Only if $\Delta$ has an odd number of eigenvalue pairs $(-1,-1)$, we construct $M$ from blocks just like we would construct $T$, but in the last block associated to $(-1,-1)$, we choose $M^{(i)}$ as in~\eqref{eq:Mi}, which does the job.\\
Having shown that for any $J\in\mathcal{M}$, there exists an $M\in\mathcal{G}$ with $J=MJ_0M^{-1}$, let us ask how many there are. Given two $M,\tilde{M}$ with $J=MJ_0M^{-1}=\tilde{M}J_0\tilde{M}^{-1}$, this relation implies that $u:=\tilde{M}^{-1}M$ satisfies $uJ_0u^{-1}=J_0$ and thus $M\sim\tilde{M}$ in the sense of~\eqref{eq:equivalence-relation}. This also shows that the set $\mathcal{M}$ is isomorphic to $\mathcal{G}/\!\!\sim\,=\mathcal{G}/\mathrm{U}(N)$.
\end{proof}

In mathematics, the quotients $\mathcal{M}\simeq\mathcal{G}/\mathrm{U}(N)$ are known as symmetric spaces of type DIII (bosons: $\mathcal{M}_b$) and type\footnote{Note that a symmetric space CI is $\mathrm{SO}(2N)/\mathrm{U}(N)$, so $\mathcal{M}_f$ technically consists of two copies of CI.} CI (fermions: $\mathcal{M}_f$). The fact that they are symmetric spaces follows from the following proposition.

\begin{proposition}
The manifold of Gaussian states is a symmetric space $\mathcal{M}=\mathcal{G}/\mathrm{U}(N)$.
\end{proposition}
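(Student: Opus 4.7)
My plan is to exhibit an explicit Cartan involution of $\mathcal{G}$ whose fixed-point subgroup is (up to connected components) $\mathrm{U}(N)$, and then verify the symmetric-space bracket relations at the Lie algebra level using the decomposition $\mathfrak{g}=\mathfrak{u}(N)\oplus\mathfrak{u}_\perp(N)$ already established in section~\ref{sec:Kaehler-structures}. The homogeneous-space identification $\mathcal{M}\simeq\mathcal{G}/\mathrm{U}(N)$ is granted by the preceding proposition, so what remains is purely the symmetric-space structure.

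First I would define the involution $\sigma:\mathcal{G}\to\mathcal{G}$ by $\sigma(M)=JMJ^{-1}$. This is an inner automorphism because $J$ itself lies in both $\mathrm{Sp}(2N,\mathbb{R})$ and $\mathrm{O}(2N,\mathbb{R})$: the compatibility relations $J\Omega J^\intercal=\Omega$ and $JGJ^\intercal=G$ noted right after definition~\ref{def:kaehler} guarantee membership in the relevant $\mathcal{G}$. The equation $\sigma^2=\mathrm{id}$ follows immediately from $J^2=-\id$ being central in $\mathrm{GL}(2N,\mathbb{R})$. By definition~\eqref{eq:U(1)}, the fixed-point set of $\sigma$ is exactly $\mathrm{U}(N)=\{M\in\mathcal{G}\mid[M,J]=0\}$, so $\mathcal{M}\simeq\mathcal{G}/\mathcal{G}^\sigma$ has the required form.

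Next I would differentiate $\sigma$ to get the Lie-algebra involution $d\sigma(K)=JKJ^{-1}=-JKJ$. Its $+1$ eigenspace is $\{K\in\mathfrak{g}\mid[K,J]=0\}=\mathfrak{u}(N)$ and its $-1$ eigenspace is $\{K\in\mathfrak{g}\mid\{K,J\}=0\}=\mathfrak{u}_\perp(N)$, so the decomposition $\mathfrak{g}=\mathfrak{u}(N)\oplus\mathfrak{u}_\perp(N)$ is the canonical Cartan decomposition associated to $\sigma$. To finish, I verify the three bracket conditions $[\mathfrak{u}(N),\mathfrak{u}(N)]\subseteq\mathfrak{u}(N)$, $[\mathfrak{u}(N),\mathfrak{u}_\perp(N)]\subseteq\mathfrak{u}_\perp(N)$, and $[\mathfrak{u}_\perp(N),\mathfrak{u}_\perp(N)]\subseteq\mathfrak{u}(N)$. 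The first is automatic since $\mathfrak{u}(N)$ is a Lie subalgebra. The other two are routine: for $A\in\mathfrak{u}(N)$ and $B\in\mathfrak{u}_\perp(N)$, a direct computation using $AJ=JA$ and $BJ=-JB$ yields $[A,B]J=-J[A,B]$, while for $A,B\in\mathfrak{u}_\perp(N)$ the same type of manipulation yields $[A,B]J=J[A,B]$. Equivalently, these relations follow abstractly from $d\sigma$ being a Lie algebra homomorphism applied to the $\pm 1$ eigenspace decomposition.

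The only subtlety I expect — not so much an obstacle as a point requiring care — is the fermionic case, where $\mathcal{G}=\mathrm{O}(2N,\mathbb{R})$ has two connected components and $\mathcal{G}^\sigma=\mathrm{U}(N)$ meets both only through its image in $\mathrm{SO}(2N,\mathbb{R})$. Strictly, this is why $\mathcal{M}_f$ is two copies of the type CI symmetric space $\mathrm{SO}(2N,\mathbb{R})/\mathrm{U}(N)$, as the footnote in section~\ref{sec:symmetric-spaces} already flags. I would handle this by working component-wise, noting that the standard definition of a symmetric space only requires $(\mathcal{G}^\sigma)_0\subseteq\mathrm{U}(N)\subseteq\mathcal{G}^\sigma$, which holds here. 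Everything else is formal.
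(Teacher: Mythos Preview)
Your proposal is correct and essentially matches the paper's argument, though you frame it more completely. The paper simply invokes the Lie-algebraic characterization of a symmetric space via the bracket conditions~\eqref{eq:def-cond} and notes that these follow immediately from $[K,J]=0$ for $K\in\mathfrak{u}(N)$ and $\{K,J\}=0$ for $K\in\mathfrak{u}_\perp(N)$; it does not explicitly construct the involution. Your route via the Cartan involution $\sigma(M)=JMJ^{-1}$ is the standard textbook formulation and makes the symmetric-space structure more transparent, while the bracket verification you carry out is exactly the content of the paper's proof. Your discussion of the connected-component subtlety in the fermionic case is also more careful than the paper, which relegates this point to a footnote.
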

\begin{proof}
A quotient manifold $\mathcal{G}/H$ constructed from a subgroup $H\subset \mathcal{G}$ is a symmetric space if and only if we can decompose the Lie algebra as $\mathfrak{g}=\mathfrak{h}\oplus\mathfrak{h}_\perp$, such that \begin{align}
    [\mathfrak{h},\mathfrak{h}]\subset\mathfrak{h}\,,\quad[\mathfrak{h},\mathfrak{h}_\perp]\subset\mathfrak{h}_\perp\,,\quad[\mathfrak{h}_\perp,\mathfrak{h}_\perp]\subset\mathfrak{h}\,.\label{eq:def-cond}
\end{align}
In the case of Gaussian states, we have $\mathfrak{h}=\mathfrak{u}(N)$ and $\mathfrak{h}_\perp=\mathfrak{u}_\perp(N)$ as defined in~\eqref{eq:uperp}. This follows directly from the defining conditions $[K,J]=0$ for $K\in\mathfrak{u}(N)$ and $\{K,J\}=0$ for $K\in\mathfrak{u}_\perp(N)$.
\end{proof}

In summary, we considered a bosonic or fermionic phase space, \ie the vector space $V$ with either a symplectic form $\Omega$ or a metric $G$, and asked: How many inequivalent ways are there to turn this vector space (with given $\Omega$ or $G$) into a Kähler space with compatible structures $(G,\Omega,J)$? The answer turned out to be the manifolds $\mathcal{M}_{b/f}$ which could either be embedded into $\mathcal{G}$ or written as quotient $\mathcal{G}/\mathrm{U}(N)$.

\subsection{Quantum theory}\label{sec:Fockbasis}
We have seen that the classical bosonic and fermionic phase space is already equipped with one of the three structures that form a Kähler space, namely a symplectic form or a positive definite metric, respectively. We will use these structures to construct the quantum theory in two steps. First, we deform the classical algebra of observables to obtain a Weyl or Clifford algebra, promoting Poisson brackets to commutators and anti-commutators. Second, we build a representation of these algebras as Hermitian operators acting on a Hilbert space, defined in terms of Kähler structures.

\subsubsection{Abstract algebra of observables}
Using the ingredients introduced in the previous section, in particular the algebra of classical observables and the Poisson bracket, we can construct the abstract algebra of quantum observables. This is the first step of the quantization procedure, as at this point we have not yet chosen a representation of algebra elements as operators acting on some Hilbert space.  We promote the Poisson brackets to canonical commutation relations (CCR) for bosons and to canonical anticommutation relations (CAR) for fermions, \ie
\begin{equation}
\begin{aligned}\label{eq:CCR-CAR}
 	[\hat{\xi}^a,\hat{\xi}^b]\,&=\ii\, \Omega^{ab}\; \mathbb{1}\,,&&\textbf{(bosons)}\;\\
 	\{\hat{\xi}^a,\hat{\xi}^b\}&=\, G^{ab}\; \mathbb{1}\,,&&\textbf{(fermions)}
\end{aligned}
\end{equation}
where $\mathbb{1}$ is the identity element in the algebra and we adopt units $\hbar=1$. The commutator and the anticommutator are defined as usual: $[\hat{\xi}^a,\hat{\xi}^b]\,=\hat{\xi}^a\hat{\xi}^b-\hat{\xi}^b\hat{\xi}^a$ and $\{\hat{\xi}^a,\hat{\xi}^b\}=\hat{\xi}^a\hat{\xi}^b+\hat{\xi}^b\hat{\xi}^a$. This turns the symmetric algebra of bosonic observables into the Weyl algebra $\mathrm{Weyl}(V^*,\Omega)$ and the Grassmann algebra of fermionic observables into the Clifford algebra $\mathrm{Cliff}(V^*,G)$:
\begin{equation}
\begin{aligned}
 \mathrm{Sym}(V^*)\quad	&\longrightarrow \quad\mathrm{Weyl}(V^*,\Omega) \,,&&\textbf{(bosons)}\\
  \mathrm{Grass}(V^*)\quad&\longrightarrow \quad \mathrm{Cliff}(V^*,G)\,. &&\textbf{(fermions)}
\end{aligned}
\end{equation}

\begin{table}[t]
    \centering
    \renewcommand{\arraystretch}{1.75}
    \begin{tabular}{ p{1.7cm} |p{3.2cm}| p{3.4cm}} 
         & {\bf Real basis} & {\bf Complex basis}   \\
        \hline
        \hline
        {\bf Bosons} & Quadratures $(\hat{q}_j,\hat{p}_k)$\newline
        \textsl{Also: $(\hat{x}_j,\hat{p}_k)$} & 
        CCR operators $(\hat{b}_j,\hat{b}^\dagger_k)$ \newline
        \textsl{Also: $(\hat{a}_j,\hat{a}^\dagger_k)$} \\
        \hline
        {\bf Fermions} & Majorana modes $\hat{m}_a$\newline
        \textsl{Also: $\gamma_a$,  $c_a$, $(c_j,\tilde{c}_k)$} & CAR operators $(\hat{f}_j,\hat{f}^\dagger_k)$\newline
        \textsl{Also: $(\hat{c}_j,\hat{c}^\dagger_k)$} \\
        \hline
        {\bf Unified} & $\hat{\xi}^a\qp(\hat{q}_j,\hat{p}_k)$ &  $\hat{\xi}^a\aab(\hat{a}_j,\hat{a}^\dagger_k)$
    \end{tabular}
    \ccaption{Overview of notations for operator bases}{When treating bosonic or fermionic systems, there are two types of standard bases, namely the real one (bosonic quadrature operators, fermionic Majorana operators) and the complex one (creation/annihilation operators). In our unified notation, we use $\hat{\xi}$ independent of any basis, but will present many examples in both the real basis (indicated by $\qp$) and the complex basis (indicated by $\aab$).}
    \label{tab:phasespace-bases}
\end{table}

Throughout this manuscript, we will consistently present examples with respect to the two standard bases
\begin{align}
   \hat{\xi}^a\;&\qp\;(\hat{q}_1,\cdots,\hat{q}_N,\hat{p}_1,\cdots,\hat{p}_N)\,,\label{eq:xi-Hermitian}\\
   &\aab\;(\hat{a}_1,\cdots,\hat{a}_N,\hat{a}_1^\dagger,\cdots,\hat{a}_N^\dagger)\,.\label{eq:creation-annihilation}
\end{align}
The first basis consists of Hermitian operators that are typically referred to as quadrature operators (bosons) and Majorana operators (fermions), while the second basis consists of bosonic or fermionic creation and annihilation operators, as summarized in table~\ref{tab:phasespace-bases}. The two bases are characterized by the property that the symplectic form (for bosons) and the metric (for fermions) takes the following real standard forms
\begin{equation}
\begin{aligned}
    \Omega^{ab}&\qp\begin{pmatrix}
    0 & \id\\
    -\id & 0
    \end{pmatrix}\aab\begin{pmatrix}
    0 & -\ii\id\\
    \ii\id & 0
    \end{pmatrix}\,,&&\textbf{(bosons)}\\\
    G^{ab}&\qp\begin{pmatrix}
    \id & 0\\
    0 & \id
    \end{pmatrix}\aab\begin{pmatrix}
    0 & \id\\
    \id & 0
    \end{pmatrix}\,,&&\textbf{(fermions)}
\end{aligned}\label{eq:kaehler-standard}
\end{equation}
where $\qp$ and $\aab$ indicate that the RHS corresponds to the matrix representation with respect to one of the two standard bases~\eqref{eq:creation-annihilation} or~\eqref{eq:xi-Hermitian}. Note that these standard bases are only determined up to an overall group transformation in $\mathcal{G}$ that will preserve the respective structures.

\subsubsection{Hilbert space and Fock basis}\label{sec:Fockbasis-def}
A Hilbert space representation of the algebra of observables is obtained via the Fock basis construction. Consider $N$ dual vectors $v_{ia}\in V^*_\mathbb{C}$,  ($i=1,\ldots, N$) and define the associated annihilation and creation operators
\begin{equation}
\hat{a}_i=v_{ia}\,\hat{\xi}^a\;,\qquad \hat{a}^\dagger_i=v^*_{ia}\,\hat{\xi}^a\,.
\label{eq:amode}
\end{equation}
We impose canonical commutation and anticommutation relations for bosonic and for fermionic operators:
\begin{equation}
\begin{aligned}
 	[\,\hat{a}_i,\hat{a}_j]\,=\,0\,,&\quad [\,\hat{a}_i,\hat{a}^\dagger_j]\,=\delta_{ij}\,\mathbb{1}\,,&&\textbf{(bosons)}\;\\
 	\{\hat{a}_i,\hat{a}_j\}\,=\,0\,,&\quad \{\hat{a}_i,\hat{a}^\dagger_j\}\,=\delta_{ij}\,\mathbb{1}\,.&&\textbf{(fermions)}
\end{aligned}
\end{equation}
Due to~\eqref{eq:CCR-CAR}, the dual vectors $v_{ia}$ satisfy the conditions
\begin{equation}
\begin{aligned}
 	\Omega^{ab}\,v_{ia}v_{jb}=\,0\,,&\quad \Omega^{ab}\,v^*_{ia}v_{jb}=\ii\, \delta_{ij}\,,&&\textbf{(bosons)}\;\\
 G^{ab}\,v_{ia}v_{jb}=\,0\,,&\quad G^{ab}\,v^*_{ia}v_{jb}=\delta_{ij}\,.&&\textbf{(fermions)}
\end{aligned}
\label{eq:Omegavv}
\end{equation}
We can then define a state $|0,\ldots,0; v\rangle$ as the vacuum with respect to $v$ annihilated by $\hat{a}_i$,
\begin{equation}
\hat{a}_i\,|0,\ldots,0; v\rangle\,=\,0\qquad i=1,\ldots,N.
\label{eq:b0}
\end{equation}

A unitary representation is constructed by defining the orthonormal Fock basis given by
\begin{equation}
\begin{aligned}
 	 &\left\{\ket{n_1\dots n_N;v}\,\big|\,n_i\in\mathbb{N}\right\}\,,&&\textbf{(bosons)}\;\\[0.5em]
 	 &\left\{\ket{n_1\dots n_N;v}\,\big|\,n_i=0,1\right\}\,,&&\textbf{(fermions)}
\end{aligned}
\end{equation}
such that the action of the operators $\hat{a}_i$ and $\hat{a}_j^\dagger$ onto this basis satisfies
\begin{align}
\begin{split}
    \hat{a}_i\ket{\dots n_i\dots ;v}&=\sqrt{n_i}\ket{ \dots n_i\!-\!1\dots ;v}\,,\\
    \hat{a}^\dagger_i\ket{\dots n_i\dots ;v}&=\sqrt{n_i\!+\!1}\ket{\dots n_i\!+\!1\dots ;v}\,.
\end{split}
\end{align}
Basis vectors can be obtained from the vacuum state $\ket{0,\cdots,0;v}$ via
\begin{align}
    \ket{n_1,\dots,n_N;v}=\prod^N_{i=1}\left(\frac{(\hat{a}_i^\dagger)^{n_i}}{\sqrt{n_i!}}\right)\ket{0,\cdots,0;v}\,,
\end{align}
where we have $n_i\in\mathbb{N}$ for bosons and $n_i\in\{0,1\}$ for fermions. We denote $\mathcal{H}$ the Hilbert space of the system.

\subsubsection{Algebra representation}
Elements of the symplectic algebra $\mathfrak{sp}(2N,\mathbb{R})$ and of the orthogonal algebra $\mathfrak{so}(2N)$ can be represented as quadratic operators on the Hilbert space $\mathcal{H}$. We can represent a Lie algebra element $K$ as an operator $\widehat{K}$ via the identification
\begin{align}
\hspace{-1mm}K^a{}_b\quad\Leftrightarrow\quad\widehat{K}=\left\{\begin{array}{rl}
    -\tfrac{\ii}{2}\omega_{ac}K^c{}_b\hat{\xi}^a\hat{\xi}^b     & \textbf{(bosons)} \\
    \tfrac{1}{2}g_{ac}K^c{}_b\hat{\xi}^a\hat{\xi}^b     & \textbf{(fermions)}
    \end{array}\right.,\label{eq:quadratic-generator}
\end{align}
Using the canonical commutation or anticommutation relations, one can verify that this is indeed a Lie algebra representation satisfying
\begin{align}
\begin{split}
   \hspace{-2mm}[\widehat{K}_1,\widehat{K}_2]&= \reallywidehat{[K_1,K_2]}\\
   &=\left\{\begin{array}{rl}
    -\tfrac{\ii}{2}\omega_{ac}[K_1,K_2]^c{}_b\hat{\xi}^a\hat{\xi}^b    &  \textbf{(bosons)}\\[.5em]
    \tfrac{1}{2}g_{ac}[K_1,K_2]^c{}_b\hat{\xi}^a\hat{\xi}^b    & \textbf{(fermions)}
   \end{array}\right..
\end{split}
\end{align} Next, we will see that exponentiating operators $\widehat{K}$ gives rise to a projective representation of the respective Lie group.

\subsubsection{Projective group representations}\label{sec:projective-group-representation}
The bosonic and fermionic Fock spaces come naturally equipped with projective representations of the symplectic group $\mathrm{Sp}(2N,\mathbb{R})$ and of the orthogonal group $\mathrm{O}(2N)$. For bosons, we also have the (Abelian) group of phase space displacements given by $V$ with its vector addition as group operation, which leads to the extension of $\mathrm{Sp}(2N,\mathbb{R})$ as inhomogeneous symplectic group $\mathrm{ISp}(2N,\mathbb{R})$.

The projective representation $\mathcal{S}: \mathcal{G}\to\mathrm{Lin}(\mathcal{H})$ of the squeezing group $\mathcal{G}$ can be constructed by exponentiating quadratic operators $\widehat{K}$. Given a Lie algebra element $K\in\mathfrak{g}$, we represent the group element $M=e^K$ as
\begin{align}
    \mathcal{S}(e^K)=\pm e^{\widehat{K}}\,,\label{eq:G-generators}
\end{align}
which is only defined up to an overall sign. This definition can be consistently extended to all $M\in\mathcal{G}$, \ie also those which cannot be written as $e^{K}$, by multiplication such that
\begin{align}
    \mathcal{S}(M_1)\,\mathcal{S}(M_2)=\pm \mathcal{S}(M_1\, M_2)\label{eq:double-cover}
\end{align}
holds, as proven in~\cite{berezin:1966,de2006symplectic}. Using the Baker-Campbell-Hausdorff formula, we can verify the relation
\begin{align}
    \mathcal{S}^\dagger(M)\hat{\xi}^a\mathcal{S}(M)=M^a{}_b\hat{\xi}^b\,.
\end{align}
For fermions\footnote{While the group $\mathcal{G}=\mathrm{Sp}(2N,\mathbb{R})$ for bosons is connected and completely generated by~\eqref{eq:G-generators}, the group $\mathcal{G}=\mathrm{O}(2N,\mathbb{R})$ for fermions consists of two disconnected components, of which only the subset $\mathrm{SO}(2N,\mathbb{R})$ connected to the identity is generated by~\eqref{eq:G-generators}. By including the operators~\eqref{eq:G-disconnected}, we can reach group elements $M^a{}_b=v_cG^{ca}v_b-\delta^a{}_b$ with $\mathrm{det}(M)=-1$. To give some intuition, let us note that $M\equiv\mathrm{diag}(1,-1,\cdots,-1)$ with respect to an orthonormal basis, in which $v\equiv(\sqrt{2},0,\cdots,0)$.}, we also need to include the representation of a group element with $\mathrm{det}(M)=-1$, which defines the operator
\begin{align}
\mathcal{S}(M_w)=w_a\hat{\xi}^a\,,&&\textbf{(fermions)}\label{eq:G-disconnected}
\end{align}
where $(M_w)^a{}_b=w_cG^{ca}w_b-\delta^a{}_b$  and $w_a\in V^a$ is assumed to satisfy $w_aG^{ab}w_b=2$.

Furthermore, the set of all operators $\pm\mathcal{S}(M)$ can be understood as a faithful representation of the double cover of $\mathcal{G}$, called the metaplectic group $\mathrm{Mp}(2N,\mathbb{R})$ for bosons and the pin group $\mathrm{Pin}(2N)$ for fermions, where $\mathrm{Pin}(2N)$ relates to $\mathrm{Spin}(2N)$ just as $\mathrm{O}(2N)$ to $\mathrm{SO}(2N)$.

The group of phase space translations $V$ is represented as displacement operators $\mathcal{D}: V\to\mathrm{Lin}(\mathcal{H})$ satisfying
\begin{align}
    \mathcal{D}(z)=\begin{cases}
    e^{-\ii z^a\omega_{ab}\hat{\xi}^b} & \textbf{(bosons)}\\
    e^{-z^ag_{ab}\hat{\xi}^b} & \textbf{(fermions)}
    \end{cases}\,,\label{eq:displacement-operator}
\end{align}
which satisfies the relations
\begin{align}
    \mathcal{D}(z_1)\mathcal{D}(z_2)=\begin{cases}
    e^{-\frac{\ii}{2} z^a_1\omega_{ab}z^b_2}\,\mathcal{D}(z_1+z_2) & \textbf{(bosons)}\\
    e^{-\frac{1}{2}z^a_1 g_{ab}z^b_2}\,\mathcal{D}(z_1+z_2) & \textbf{(fermions)}
    \end{cases}\,,
\end{align}
and thus forms a projective representation. Note that for fermions, the phase space vector $z^a$ is Grassmann valued and thus not physical. Consequently, we will be mostly interested in the bosonic case, but fermionic displacements can still be used as a calculational tool, as we will see.

We can extend the group $\mathcal{G}$ to its inhomogeneous version $\mathrm{I}\mathcal{G}$, whose elements are pairs $(M,z)$ with $M\in\mathcal{G}$ and $z^a\in V$ with the group action
\begin{align}
    (M_1,z_1)\cdot(M_2,z_2)=(M_1\cdot M_2,z_1+M_1z_2)\,.
\end{align}
We can define $\mathcal{U}: \mathrm{I}\mathcal{G}\to\mathrm{Lin}(\mathcal{H})$ by
\begin{align}
    \mathcal{U}(M,z)=\mathcal{D}(z)\mathcal{S}(M)\,,
\end{align}
which satisfies the relations
\begin{align}
    \mathcal{U}(M_1,z_1)\mathcal{U}(M_2,z_2)\simeq\mathcal{U}(M_1\cdot M_2,z_1+M_1z_2)
\end{align}
of a projective representation. Again, we will be mostly interested in the bosonic case, where $\mathrm{I}\mathcal{G}=\mathrm{ISp}(2N,\mathbb{R})$ is the inhomogeneous symplectic group. We can use the Baker-Campbell-Hausdorff formula to show that the so constructed projective representation satisfies
\begin{align}
    \mathcal{U}^\dagger(M,z)\hat{\xi}^a\mathcal{U}(M,z)=M^a{}_b\,\hat{\xi}^b+z^a\,.\label{eq:Utra}
\end{align}
The following proposition shows the importance of this relation.
\begin{proposition}
Condition~\eqref{eq:Utra} determines the unitary operator $\mathcal{U}$ uniquely up to its complex phase.
\end{proposition}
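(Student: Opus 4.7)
The plan is to use a standard Schur-type argument: show that if two unitaries $\mathcal{U}_1$ and $\mathcal{U}_2$ both implement the affine transformation $\hat{\xi}^a\mapsto M^a{}_b\hat{\xi}^b+z^a$, then their ratio commutes with every $\hat\xi^a$ and is therefore a scalar on the Fock space.

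\textbf{Step 1: Reduce to the trivial transformation.} Suppose $\mathcal{U}_1$ and $\mathcal{U}_2$ are two unitaries satisfying~\eqref{eq:Utra} for the same $(M,z)$. Defining $W=\mathcal{U}_2\mathcal{U}_1^\dagger$, a direct computation using~\eqref{eq:Utra} twice gives
\begin{align}
W\,\hat{\xi}^a\,W^\dagger=\hat{\xi}^a \qquad\text{for all }a=1,\ldots,2N,
\end{align}
i.e.\ $W$ commutes with every linear observable $\hat{\xi}^a$.

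\textbf{Step 2: Commute with the whole algebra.} Because the $\hat{\xi}^a$ generate the full Weyl algebra $\mathrm{Weyl}(V^*,\Omega)$ for bosons or the full Clifford algebra $\mathrm{Cliff}(V^*,G)$ for fermions as an associative algebra with unit, and because $W$ commutes with each generator, $W$ commutes with every element of the observable algebra. In particular, $W$ commutes with every number operator $\hat{a}_i^\dagger\hat{a}_i$ and with every creation and annihilation operator $\hat{a}_i,\hat{a}_i^\dagger$ defined in section~\ref{sec:Fockbasis-def}.

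\textbf{Step 3: Apply Schur's lemma.} The Fock representation constructed in section~\ref{sec:Fockbasis-def} is irreducible: starting from the vacuum $\ket{0,\dots,0;v}$, repeated application of the $\hat{a}_i^\dagger$ produces a dense (for bosons) or total (for fermions) set of vectors in $\mathcal{H}$, so no nontrivial closed subspace is invariant under the action of the algebra. Any bounded operator $W$ commuting with every element of the algebra must then be a multiple of the identity, $W=e^{\ii\phi}\,\mathbb{1}$ for some $\phi\in\mathbb{R}$ (the phase is real since $W$ is unitary). Hence $\mathcal{U}_2=e^{\ii\phi}\mathcal{U}_1$, establishing the claim.

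\textbf{Main obstacle.} The one step that requires care, rather than being a one-line invocation, is the irreducibility used in Step~3. For bosons the generators $\hat{\xi}^a$ are unbounded, so one must either work with the bounded Weyl operators $\mathcal{D}(z)$ (and invoke the Stone--von Neumann theorem, which gives irreducibility of the Schrödinger representation) or justify the commutation passage on a common invariant dense domain such as the finite-particle subspace. For fermions the algebra is finite-dimensional and irreducibility of the Fock representation is elementary. Everything else is bookkeeping with the defining relation~\eqref{eq:Utra}.
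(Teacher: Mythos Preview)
Your proof is correct. Both your argument and the paper's rest on the same underlying fact---that the algebra generated by the $\hat{\xi}^a$ acts irreducibly on the Fock space---but you package it differently. The paper argues that every operator, in particular every rank-one projector $\ket{\psi}\bra{\psi}$, can be written as a (formal) function of the $\hat{\xi}^a$; since conjugation by $\mathcal{U}$ is determined on the generators, it is determined on every projector, and hence $\mathcal{U}\ket{\psi}$ is fixed up to a phase. You instead form $W=\mathcal{U}_2\mathcal{U}_1^\dagger$, observe it commutes with all generators, and invoke Schur's lemma for the irreducible Fock representation. Your route is the textbook one and is arguably cleaner: it avoids the paper's somewhat informal ``every operator is $f(\hat{\xi}^a)$'' step and replaces it with the standard commutant-is-trivial statement. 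Your explicit flag about unbounded generators in the bosonic case (and the suggested fix via the Weyl operators and Stone--von~Neumann) is also a point the paper glosses over.
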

\begin{proof}
First, let us note that any operator $\mathcal{O}: \mathcal{H}\to\mathcal{H}$ can be formally written as a function $\mathcal{O}=f(\hat{\xi^a})$ satisfying\footnote{For a bosonic or fermionic system with annihilation operators $\hat{a}_i$ and associated vacuum $\ket{0}$, we have the formal function $\ket{0}\bra{0}=f(\hat{\xi}^a)=\lim_{\beta\to\infty}e^{-\beta\sum_i\hat{a}_i^\dagger\hat{a}}(\tfrac{e^{\beta}-1}{e^{\beta}})^N$ from which we can construct any other linear operator by applying creation operators from the left and annihilation operators from the right. Clearly, such functions satisfy $\mathcal{U}^\dagger f(\hat{\xi}^a)\mathcal{U}=f(\mathcal{U}^\dagger \hat{\xi}^a\mathcal{U})$.} $\mathcal{U}^\dagger f(\hat{\xi}^a)\mathcal{U}=f(\mathcal{U}^\dagger \hat{\xi}^a\mathcal{U})$. Second, if we take $\mathcal{O}=\ket{\psi}\bra{\psi}$, our previous observation shows that  $\mathcal{U}^\dagger\ket{\psi}\bra{\psi}\mathcal{U}=\tilde{\mathcal{U}}^\dagger\ket{\psi}\bra{\psi}\tilde{\mathcal{U}}$ for all $\ket{\psi}\in\mathcal{H}$ implies $\mathcal{U}\ket{\psi}=e^{\ii\varphi}\tilde{\mathcal{U}}\ket{\psi}$ and thus $\mathcal{U}^\dagger\hat{\xi}^a\mathcal{U}=\tilde{\mathcal{U}}^\dagger\hat{\xi}^a\tilde{\mathcal{U}}$. Third, we observe that $\tilde{\mathcal{U}}=e^{\ii\varphi}\mathcal{U}$ satisfies~\eqref{eq:Utra}, which is thus both necessary and sufficient to characterize $\mathcal{U}$ up to a complex phase.
\end{proof}

\subsubsection{Mode functions}
Mode functions $u_i^a$ are defined by the expansion\footnote{We use the conventions based on~\cite{wald1994quantum}, while other authors switch the role of $u$ and $u^*$.}
\begin{equation}
\hat{\xi}^a=\,z^a\,+\;\sum_{i=1}^N \big(u_i^a\,\hat{a}_i+u_i^{*a}\,\hat{a}_i^\dagger\big)\,,
\label{eq:ximode}
\end{equation}
with $z^a=0$ for fermions. The requirement that $\hat{\xi}^a$ and $\hat{a}_i$ satisfy the defining relations for bosons (CCR) and for fermions (CAR), results in the following conditions:
\begin{equation}
\begin{aligned}
\omega_{ab}\,u^a_i u^b_j&=0  \,,\;  & \omega_{ab}\,u^{*a}_i u^b_j&=\ii \,\delta_{ij} \,,     &&\textbf{(bosons)}\;\\[0.5em]
g_{ab}\,u^a_i u^b_j&=0 \,, \; & g_{ab}\,u^{*a}_i u^b_j&=\,\delta_{ij}\,.    &&\textbf{(fermions)}
\end{aligned}
\label{eq:omegavv}
\end{equation}
In section~\ref{sec:Fockbasis-def}, we introduced vectors $v_{ia}\in V^*_{\mathbb{C}}$ to define annihilation operators  \eqref{eq:amode}. We have
\begin{equation}
\hat{a}_i=v_{ia}\,(\hat{\xi}^a-z^a)
\end{equation}
with $v_{ia}\in V^*_{\mathbb{C}}$. The two dual basis $v_{ia}$ and $u_i^a$ satisfy the relations
\begin{equation}
v_{ia}\,u^{*a}_j=0 \,,\quad  v_{ia}\,u^a_j=\delta_{ij}\,,
\end{equation}
which allow us to express one basis in terms of the other using the canonical structures $\Omega^{ab}$ for bosons and $G^{ab}$ for fermions,
\begin{equation}
\begin{aligned}
u_i^a&=\ii \,\Omega^{ab}v^*_{ib}\,, \quad  &&\textbf{(bosons)}\;\\[0.5em]
u_i^a&=G^{ab}v^*_{ib}\,.  \quad  &&\textbf{(fermions)}
\end{aligned}
\label{eq:uGv}
\end{equation}

Given the Fock vacuum $|v\rangle$ annihilated by $\hat{a}_i$, we can compute the correlation functions in terms of mode functions
\begin{equation}
\langle v|\hat{\xi}^a\hat{\xi}^b|v\rangle=\sum_{i=1}^N u_i^a u_i^{*b} \;= \;\frac{1}{2}(G^{ab}+\ii\, \Omega^{ab})
\end{equation}
where the metric $G^{ab}$ is
\begin{equation}
G^{ab}=\sum_{i=1}^N \big( u_i^a u_i^{*b}+u_i^{*a} u_i^b \big)
\label{eq:Guu}
\end{equation}
and the symplectic structure $\Omega^{ab}$ is
\begin{equation}
\Omega^{ab}=-\ii\,\sum_{i=1}^N \big( u_i^a u_i^{*b}-u_i^{*a} u_i^b \big)\,.
\label{eq:Omegauu}
\end{equation}
We can also express the complex structure $J^a{}_b$ as
\begin{equation}
J^a{}_b=-\ii \,\sum_{i=1}^N\big( u_i^a v_{ib}-u_i^{*a}v^*_{ib}\big)\,.
\end{equation}
Together with the expression of the identity,
\begin{equation}
\delta^a{}_b=\sum_{i=1}^N\big( u_i^a v_{ib}+u_i^{*a}v^*_{ib}\big)\,,
\end{equation}
we find that a phase-space covariant version $\hat{\xi}^a_-$ of the annihilation operator $\hat{a}_i$  can be introduced:
\begin{align}
&\hat{\xi}^a_-\equiv \frac{1}{2}(\delta^a{}_b+\ii J^a{}_b)(\hat{\xi}^b-z^b)=\sum_i u^a_i \,\hat{a}_i\,,\label{eq:ximinus-def}\\
&\hat{\xi}^a_+\equiv \frac{1}{2}(\delta^a{}_b-\ii J^a{}_b)(\hat{\xi}^b-z^b)=\sum_i u^{a*}_i \,\hat{a}_i^\dagger\,.\label{eq:xiplus-def}
\end{align}
Therefore we conclude that, up to a phase, the Gaussian state defined in \eqref{eq:G-def2} in terms of the complex structure $J$ and the Fock vacuum associated to the mode function $u$ coincide, $\ket{v}=|J,z\rangle$. A different choice $\tilde{u}^a_i$ of mode functions is associated to a different set of creation and annihilation operators $\hat{b}^\dagger_i,\hat{b}_i$, 
\begin{equation}
\hat{\xi}^a=\,z^a\,+\;\sum_{i=1}^N \big(\tilde{u}_i^a\,\hat{b}_i+\tilde{u}_i^{*a}\,\hat{b}_i^\dagger\big)\,.
\end{equation}

The linear relation between the two sets of operators can be expressed in terms of Bogoliubov coefficients $\alpha_{ij}$ and $\beta_{ij}$,
\begin{equation}
\hat{b}_i=\sum_{j=1}^N (\alpha_{ij} \hat{a}_j + \beta_{ij} \hat{a}^\dagger_j )
\end{equation}
where
\begin{align}
&\alpha_{ij}=-\ii\, \omega_{ab}\,\tilde{u}^a_i u^b_j \,,\;  \;
&\beta_{ij}=-\ii\, \omega_{ab}\,\tilde{u}^{*a}_i u^b_j \,,  \quad 
&\textbf{(bosons)}\nonumber\\[0.5em]
&\alpha_{ij}=g_{ab}\,\tilde{u}^a_i u^b_j \,,
&\beta_{ij}=g_{ab}\,\tilde{u}^{*a}_i u^b_j\,.\;\qquad
&\textbf{(fermions)}
\end{align}
These expressions are equivalent to the relation \eqref{eq:manifold} between two complex structures $J$ and $MJM^{-1}$.

\subsubsection{Total number operator}
As discussed in~\eqref{sec:Kaehler-structures}, the linear complex structure $J$ as linear map $J: V\to V$ on the classical phase space satisfies the conditions
\begin{align}
    J\Omega J^\intercal&=\Omega\quad\text{and}\quad J\Omega=-\Omega J^\intercal\,,&&\textbf{(bosons)}\\
    JG J^\intercal&=G\quad\text{and}\quad JG=-G J^\intercal\,.&&\textbf{(fermions)}
\end{align}
This implies that $J$ represents both a group and an algebra element, so we can formally write $J\in\mathcal{G}$ and $J\in\mathfrak{g}$. The latter also implies that we can uniquely identify $J$ with the anti-Hermitian operator $\widehat{J}$ using~\eqref{eq:quadratic-generator}. If we multiply by $\ii$, we can define
\begin{align}
    \hspace{-2mm}\hat{N}_J=\frac{1}{2}(g_{ab}-\ii\omega_{ab})\hat{\xi}^a\hat{\xi}^b=\begin{cases}
    \ii\widehat{J}-\frac{N}{2} & \textbf{(bosons)}\\[1mm]
    \ii\widehat{J}+\frac{N}{2} & \textbf{(fermions)}\\
    \end{cases}\label{eq:NJ}
\end{align}
which turns out to be a positive-definite Hermitian operator with integer spectrum and ground state $\braket{J|\hat{N}_J|J}=0$. If we choose creation/annihilation operators $\hat{a}_i$ and number operators $\hat{n}_i=\hat{a}_i^\dagger \hat{a}_i$ associated to $\ket{J}$, we have
\begin{align}
    \hat{N}_J=\sum^N_{i=1}\hat{n}_i\,,
\end{align}
\ie we recognize $\hat{N}_J$ as the total number operator of the system, which is in one-to-one correspondence to $J$. While the choice of a Gaussian state $\ket{J}$ does not fix the individual creation/annihilation or number operators due to the allowed $\mathrm{U}(N)$ transformations that would mix them among themselves, the total number operator $\hat{N}_J$ is uniquely defined as the quadratic operator (up to a constant) with integer spectrum that has $\ket{J}$ as ground state.

\subsubsection{Unitary equivalence}\label{sec:fieldtheory}
In the definition of the Fock representation we choose a basis $v_{ai}$, \eqref{eq:amode}.
If we had chosen a different basis $\tilde{v}$, it will be related to $v$ by some linear map $M$ that is symplectic or orthogonal, \ie satisfies $M\Omega M^\intercal=\Omega$ for bosons or $MGM^\intercal=G$ for fermions. We can relate the Fock basis $\ket{\{n_i\};\tilde{v}}$ with the original one $\ket{\{n_i\};v}$ using the unitary representation $\mathcal{S}(M)$. This leads to the identification
\begin{align}
    \ket{\{n_i\};\tilde{v}}\cong \mathcal{S}(M)\ket{\{n_i\};v}\,,
\end{align}
where we still have the choice of a complex phase. We can verify that this identification preserves all commutation relations. The vacuum state $\ket{\tilde{J}}=\ket{0,\cdots,0;\tilde{v}}$ can be identified with the squeezed vacuum
\begin{align}
    \ket{\tilde{J}}=e^{\ii\varphi}\,\mathcal{S}(M)\ket{J}\,.
\end{align}
In the case of infinitely many degrees of freedom, $N\to\infty$, the Fock construction of the Hilbert space of states requires additional care as unitarily inequivalent representations arise. The phenomenon has a classical origin and can be described in terms of Kähler structures.

In quantum field theory, the Fock vacuum of free fields is often defined in terms of mode functions. Different Fock vacua are then related by Bogoliubov transformations \cite{Berezin:1966nc, wald1994quantum, Parker:2009uva}. We illustrate the relation between the formulation in terms of mode functions and the formulation in terms of Kähler structures discussed here and used in the context of quantum fields in curved spacetimes \cite{ashtekar1975quantum, Ashtekar:1980ya, Ashtekar:1980yw, Agullo:2015qqa, Much:2018ehc, Cortez:2019orm}.

In the finite-dimensional case, defining symplectic transformations on a bosonic phase space simply requires the notion of a symplectic structure $\Omega$; similarly,  defining orthogonal transformations on a finite-dimensional  fermionic phase space simply requires the notion of a metric $G$. In the infinite-dimensional case however this is not enough: it is useful to introduce a Kähler structure $(G,\Omega,J)$ already at the classical level. First, we turn phase space in a real Hilbert space via Cauchy completion with respect to the metric $G$. This allows us to restrict linear observable $f(\xi)=w_a \xi^a$ to the ones with \emph{normalizable} $w_a\in V^*$, \ie $G^{ab}w_a w_b<\infty$. Second, we restrict the class of symplectic and orthogonal transformations. Given a linear map $L:V\to V$, the adjoint with respect to the metric $G$ is $L^\dagger=G L^\intercal g$ and the Hilbert-Schmidt norm is $\| L\|^2_G=\tr (L L^\dagger)$. \emph{Restricted symplectic} transformations $M\in\mathrm{Sp}_J(V)$ and \emph{restricted orthogonal} transformations $M\in\mathrm{O}_J(V)$ are defined as linear transformations in $\mathrm{Sp}_J(V)$ and in $\mathrm{O}(V)$ that satisfy the condition\footnote{This expression is equivalent to the condition $\|J-J_M\|^2_G<\infty$ with $J_M=MJM^{-1}$ and $M$ bounded, as shown in~\cite{Derezinski:2013dra}.}
\begin{equation}
\|MJ-JM\|^2_G<\infty\,,
\end{equation}
with respect to the Kähler structure $(G,\Omega,J)$. These restricted transformations play a central role in the Shale \cite{shale1962linear} and Shale-Stinespring theorems \cite{shale1964states}.

\bigskip

In the quantum theory, a Fock space $\mathcal{F}_{(G,\Omega,J)}$ associated to the Kähler structure $(G,\Omega,J)$  is constructed starting from a vacuum given by the Gaussian state $|J\rangle$. The two-point correlation function is
\begin{equation}
\langle J|\hat{\xi}^a\hat{\xi}^b |J\rangle=\frac{1}{2}(G^{ab}+\ii \,\Omega^{ab})\,,
\end{equation}
and linear observables $w_a\hat{\xi}^a$ with normalizable $w_a$ have finite dispersion in the state $|J\rangle$. Moreover, in the Fock space $\mathcal{F}_{(G,\Omega,J)}$ we have a notion of total number operator
\begin{equation}
\hat{N}_J=\frac{1}{2}(g_{ab}-\ii \,\omega_{ab})\,\hat{\xi}^a \hat{\xi}^b\,.
\end{equation}
Given a Gaussian state $|\tilde{J}\rangle$, we can express the expectation value of the total number operator in terms of the relative complex structure $\Delta=-\tilde{J}J$ introduced in \eqref{eq:relative-complex-structure}. In the case of bosons and of fermions, we find
\begin{align}
    \langle \tilde{J}|\hat{N}_J|\tilde{J}\rangle&=
    \frac{1}{4}(g_{ab}-\ii \,\omega_{ab})(\tilde{G}^{ab}+\ii \,\tilde{\Omega}^{ab})\\[.5em]
 &  = \begin{cases}
    -\frac{1}{4}\tr(\id-\Delta) & \textbf{(bosons)}\\[1mm]
   +\frac{1}{4}\tr(\id-\Delta)& \textbf{(fermions)}.\\
    \end{cases}
\end{align}

Two Fock representations $\mathcal{F}_{(G,\Omega,J)}$ and $\mathcal{F}_{(\tilde{G},\tilde{\Omega},\tilde{J})}$ are unitarily equivalent if and only if the expectation value of the number operator $\hat{N}_J$ in the vacuum $|\tilde{J}\rangle$ is finite \cite{fulling1989aspects},
\begin{equation}
\langle \tilde{J}|\hat{N}_J|\tilde{J}\rangle<\infty\,.
\label{eq:unitary-equivalence}
\end{equation}
This condition coincides with the notion of restricted symplectic transformations and of restricted orthogonal transformations as we have the equality
\begin{equation}
\langle \tilde{J}|\hat{N}_J|\tilde{J}\rangle=\frac{1}{8}\|MJ-JM\|^2_G\,,
\end{equation}
with  $\tilde{J}=MJM^{-1}$. The condition of unitary equivalence between Fock space representations \eqref{eq:unitary-equivalence} can then be expressed in terms of Bogoliubov coefficients as
\begin{equation}
 \langle \tilde{J}|\hat{N}_J|\tilde{J}\rangle=\sum_{ij}|\beta_{ij}|^2<\infty\,.
\end{equation}

In the bosonic case we can also consider Gaussian states with non-vanishing expectation value of linear observables, $|J,z\rangle$. In this case the number operator is 
\begin{equation}
\hat{N}_{J,z}=\frac{1}{2}(g_{ab}-\ii \,\omega_{ab})(\hat{\xi}^a-z^a)(\hat{\xi}^b-z^b)\,,
\end{equation}
and the expectation value on the state $|\tilde{J},\tilde{z}\rangle$ is 
\begin{align}
\langle \tilde{J},\tilde{z}|\hat{N}_{J,z}|\tilde{J},\tilde{z}\rangle=&\,-\frac{1}{4}\tr(\id-\Delta)\\
&+\frac{1}{2}g_{ab}(z^a-\tilde{z}^a)(z^b-\tilde{z}^b)\,.
\end{align}
Unitary equivalence of representations then results in the additional requirement that the shift $z-\tilde{z}$ has finite norm in the metric $G^{ab}$.

\section{Gaussian states}\label{sec:quantum-Gaussian}
We introduce Gaussian states in a unified formalism to describe bosons and fermions using Kähler structures. While the relationship between Kähler structures and Gaussian states (under the name of quasi-free states) is well known in the mathematical physics literature~\cite{Derezinski:2013dra,woit2015quantum}, the goal of the following section is to make these tools available to the broader physics community with particular emphasis on quantum information (entanglement theory) and non-equilibrium physics (quantum dynamics).

\subsection{Pure Gaussian states}\label{sec:Gaussian-states}
Having introduced bosonic and fermionic quantum systems and the mathematical notion of Kähler structures, we can now introduce a unified formalism to describe pure bosonic and fermionic Gaussian states in terms of Kähler structures on the classical phase space. We will then extend our formalism to also describe mixed Gaussian states by violating the Kähler condition in a controlled way. While we characterize bosonic and fermionic Gaussian states through their Kähler structures $(G,\Omega,J)$, there exists a large zoo of different representations ranging from characteristic functions and quasi-probability distributions to Bogoliubov transformations and wave functions. A comprehensive dictionary between different representations and conventions can be found in~\cite{windt2020local}.

\subsubsection{Definition}\label{sec:pure-Gaussian-states}
We consider a normalized state vector $\ket{\psi}\in\mathcal{H}$, for which we define the one- and two-point functions
\begin{align}
\begin{split}
    z^a&=\braket{\psi|\hat{\xi}^a|\psi}\,,\\
    C_2^{ab}&=\braket{\psi|(\hat{\xi}-z)^a(\hat{\xi}-z)^b|\psi}\,,\\
    &\hspace{-1.4cm}\text{(Requirement: $z^a=0$ for \textbf{(fermions)}).}\label{eq:pure-1-2-point}
\end{split}
\end{align}
While there certainly exist fermionic states with $z^a\neq 0$, we only restrict to those $\ket{\psi}$ with $z^a=0$, as we will later show that there are no physical fermionic Gaussian states with $z\neq0$, \ie states are either non-Gaussian or only make sense if one takes $z$ to be Grassmann-valued in which case the Gaussian state does not live in the physical Hilbert space.\footnote{One can make sense of $z^a\neq0$ for fermionic Gaussian states, but it requires to extend Hilbert space by allowing the multiplication with Grassmann numbers. In this case, fermionic Gaussian states can have Grassmann valued displacements $z^a$. We will consider Grassmann displacements only as a calculational tool, but our formalism can be seamlessly extended to also include them for fermionic Gaussian states and we will comment on this in the following sections. See~\cite{klauder1960action} for more details.}

We can decompose the two-point function $C_2^{ab}$ as
\begin{align}
    C_2^{ab}=\frac{1}{2}(G^{ab}+\ii \Omega^{ab})\,,\label{eq:Cab2}
\end{align}
where $G^{ab}$ and $\Omega^{ab}$ are the symmetric or anti-symmetric parts, respectively, such that
\begin{align}
\begin{split}
    G^{ab}&=C_2^{ab}+C_2^{ba}=\braket{\psi|\hat{\xi}^a\hat{\xi}^b+\hat{\xi}^b\hat{\xi}^a|\psi}-z^az^b\\
    \ii\Omega^{ab}&=C_2^{ab}-C_2^{ba}=\braket{\psi|\hat{\xi}^a\hat{\xi}^b-\hat{\xi}^b\hat{\xi}^a|\psi}\label{eq:C2-G-Omega}
\end{split}
\end{align}

The properties of the Hermitian inner product imply that $G$ is symmetric and positive definite, while $\Omega$ must be antisymmetric. Note that this does not imply that $G$ and $\Omega$ are compatible Kähler structures. Further note, that for bosons $\Omega$ is already fixed by the canonical commutation relations of $\hat{\xi}^a$, while for fermions $G$ is fixed by the canonical anticommutation relations, such that our decomposition is compatible with our definition from~\eqref{eq:CCR-CAR}. We will also see in footnote~\ref{fn:fermions} that fermionic Gaussian states will require $z^a=0$.

In summary, only one of the two structures will depend on the state, which we therefore define as the bosonic or fermionic covariance matrix
\begin{align}
    \Gamma^{ab}=\left\{\begin{array}{ll}
    G^{ab}     &  \textbf{(bosons)}\\
    \Omega^{ab}     & \textbf{(fermions)}
    \end{array}\right.\,.
\end{align}
With this in hand, we can now present two equivalent definitions of Gaussian states:
\begin{definition}\label{def:pure-Gaussian}
A normalized state vector $\ket{\psi}$ is Gaussian
\begin{enumerate}
    \item[(a)] if $J^a{}_b=\Omega^{ac}g_{cb}$ computed from~\eqref{eq:Cab2} satisfies
    \begin{align}
        J^2=-\id\,,\label{eq:G-def1}
    \end{align}
\end{enumerate}
or equivalently,
\begin{enumerate}    
    \item[(b)] if $\ket{\psi}$ is a solution to the equation
    \begin{align}
	\frac{1}{2}(\delta^a{}_b+\ii J^a{}_b)(\hat{\xi}-z)^b\ket{\psi}=0\,,\label{eq:G-def2}
	\end{align}
	for some $z^a\in V$ and a linear map $J^a{}_b: V\to V$, which turns out to imply $J^a{}_b=\Omega^{ac}g_{cb}$.
\end{enumerate}
We denote $\ket{\psi}$ by $\ket{J,z}$, which is unique up to a complex phase. Note that $z^a=0$ for fermions.
\end{definition}
\begin{proof}
In order to prove the equivalence of the two definitions and $z^a=0$ for fermions, it is useful to introduce $\hat{\xi}^a_{\pm}=\frac{1}{2}(\delta^a{}_b\mp\ii J^a{}_b)(\hat{\xi}^b- z^b)$, which satisfy $\hat{\xi}^a=\hat{\xi}^a_{+}+\hat{\xi}^a_{-}+z^a$ and $\hat{\xi}_{\pm}^\dagger=\hat{\xi}_\mp$. To relate~\eqref{eq:G-def1} and~\eqref{eq:G-def2}, we compute
\begin{align}
\begin{split}
	\braket{J,z|\hat{\xi}^a_+\hat{\xi}_{-}^b|J,z}&=\frac{1}{2}(\id-\ii J)^a{}_cC_2^{cb}\,,
\end{split}
\end{align}
whose real and imaginary parts are given by
\begin{align}
\begin{split}\label{eq:ReImZero}
    4\,\mathrm{Re}\braket{J,z|\hat{\xi}^a_+\hat{\xi}^b|J,z}&=G+J\Omega\,,\\
	4\,\mathrm{Im}\braket{J,z|\hat{\xi}^a_+\hat{\xi}^b|J,z}&=\Omega-JG\,.
\end{split}
\end{align}
With this in hand, we can now show both directions:
\begin{itemize}
    \item[$\Rightarrow$] The conditions of (a) state $J^2=-\id$ and $J=\Omega g$. We can thus rewrite $J^2=J\Omega g=-\id$, which we can solve for $J\Omega=-G$. If we multiply by $J$ from the left, we also find $\Omega=JG$. Together, these relations imply~\eqref{eq:ReImZero} to vanish and thus (b).
    \item[$\Leftarrow$] Equation~\eqref{eq:G-def2} of (b) is equivalent to $\hat{\xi}^a_-\ket{J,z}=0$, whose adjoint is $\bra{J,z}\hat{\xi}^a_+=0$, which thus implies ~\eqref{eq:ReImZero} to vanish. This implies the relations $G=-J\Omega$ and $\Omega=JG$. Plugging the second relation into the first gives
    \begin{align}
        (\id+J^2)G=0\,.
    \end{align}
    As $G$ is non-degenerate, we conclude $J^2=-\id$.\\
    In a second step, we can now compute
    \begin{align}
        \hspace{6.5mm}C_2^{ab}\!=\!\left\{\begin{array}{ll}
	\!\!(\id+\ii J)^a{}_c\Omega^{cd}(\id-\ii J^\intercal)_d{}^b     & \!\textbf{(bosons)} \\
	\!\!(\id+\ii J)^a{}_cG^{cd}(\id-\ii J^\intercal)_d{}^b     & \!\textbf{(fermions)}
	\end{array}\right.\hspace{-1mm}\label{eq:C2-explicit}
    \end{align}
    implying $\Omega=J\Omega J^\intercal$ and $\Omega J^\intercal-J\Omega=2G$ for bosons and $G=JGJ^\intercal$ and $JG-GJ^\intercal=2\Omega$ for fermions. Together with $J^2=-\id$, this leads in either case to $\Omega=JG$, which implies (a).
\end{itemize}
This proves the equivalence.
\end{proof}

It is remarkable how~\eqref{eq:G-def2} together with the canonical commutation or anticommutation relations of $\hat{\xi}^a$ suffices to prove (a). We already introduced
\begin{align}
    \hat{\xi}^a_{\pm}=\frac{1}{2}(\delta^a{}_b\mp\ii J^a{}_b)(\hat{\xi}^b- z^b)\label{eq:xipm}
\end{align}
as first step in the above proof and in \eqref{eq:ximinus-def}, but they turn out to be rather useful in general calculations. They can be defined with respect to any pure Gaussian state $\ket{J,z}$ and~\eqref{eq:C2-explicit} implies the relations\footnote{Note that~\eqref{eq:G-def2} and~\eqref{eq:xipm-fermions} for fermions together imply
\begin{align*}
\hspace{-2mm}\hat{\xi}_{+}^a\hat{\xi}_{+}^b\ket{J,z}=z^az^b\ket{J,z}=-z^bz^a\ket{J,z}=\hat{\xi}_-^a\hat{\xi}_-^b\ket{J,z}\hspace{-1mm}
\end{align*}
and thus $z^a=0$, unless $z^a$ is a Grassmann variable.\label{fn:fermions}}
\begin{equation}
\begin{aligned}
[\hat{\xi}^a_\pm,\hat{\xi}^b_\pm]&=0\,,& [\hat{\xi}^a_-,\hat{\xi}^b_+]&=C_2^{ab}\,,&& \textbf{(bosons)}\\
\{\hat{\xi}^a_\pm,\hat{\xi}^b_\pm\}&=0\,,& \{\hat{\xi}^a_-,\hat{\xi}^b_+\}&= C_2^{ab}\,.&& \textbf{(fermions)}\label{eq:xipm-fermions}
\end{aligned}
\end{equation}
Here, $\hat{\xi}^a_\pm$ represents the appropriately by $z^a$ shifted eigenvectors of $J$, \ie we have
\begin{align}
    J^a{}_b\hat{\xi}^b_{\pm}=\pm\ii \hat{\xi}^a_{\pm}\,.
\end{align}

Let us give some intuition on what the linear complex structure $J$ and the respective $\hat{\xi}_\pm$ actually do. As already discussed around~\eqref{eq:Vplusminus}, we can decompose the (complexified) classical phase space into the eigenspaces
\begin{align}
	V_{\mathbb{C}}=V^+\oplus V^-\quad\text{and}\quad V_{\mathbb{C}}^*=(V^*)^+\oplus (V^*)^-\,.
\end{align}
From the perspective of operators, the term $P_{\pm}=\frac{1}{2}(\id\mp\ii J)$ in~\eqref{eq:G-def2} is a projector $V_\mathbb{C}\to V^\pm$, which projects the operator-valued vector $(\hat{\xi}-z)^a$ onto the space of creation and annihilation operators $\hat{\xi}^a_{\pm}$, respectively. Put differently, the eigenspaces $(V^*)^\pm$ represent the $N$-dimensional complex spaces of creation or annihilation operators. While $z^a$ describes the displacement, $J$ encodes precisely which (complex) linear combinations of observables $\hat{\xi}^a$ form creation and annihilation operators. For a given state vector $\ket{J}=\ket{J,0}$, it is illuminating to express $\hat{\xi}^a_\pm$ in a basis, in which both $\Omega$ and $G$ simultaneously take the standard forms~\eqref{eq:kaehler-standard}, such that\footnote{Complex conjugation of the basis $\hat{\xi}^a$ satisfies $\hat{\xi}^{\dagger a}=\mathcal{C}^a{}_b\hat{\xi}^b$ implying $\hat{\xi}^{\dagger a}_\pm=\mathcal{C}^a{}_b\hat{\xi}^b_{\mp}$. We have the conjugation matrix
\begin{align*}
    \mathcal{C}\qp\begin{pmatrix}\id & 0\\ 0&\id\end{pmatrix}\aab\begin{pmatrix}0 & \id\\ \id&0\end{pmatrix}\,.
\end{align*}
}
\begin{align}
&\footnotesize\hat{\xi}_-\qp\begin{pmatrix}\tfrac{\hat{a}_1^{}}{\sqrt{2}},\dots,\tfrac{\hat{a}_N}{\sqrt{2}},\tfrac{-\ii\hat{a}_1}{\sqrt{2}},\dots,\tfrac{-\ii\hat{a}_N}{\sqrt{2}}\end{pmatrix}\aab\begin{pmatrix}\hat{a}_1,\dots,\hat{a}_N,0,\dots,0\end{pmatrix}\,,\nonumber\\
&\footnotesize\hat{\xi}_+\qp\begin{pmatrix}\tfrac{\hat{a}^\dagger_1}{\sqrt{2}},\dots,\tfrac{\hat{a}^\dagger_N}{\sqrt{2}},\tfrac{\ii\hat{a}^\dagger_1}{\sqrt{2}},\dots,\tfrac{\ii\hat{a}^\dagger_N}{\sqrt{2}}\end{pmatrix}\aab\begin{pmatrix}0,\dots,0,\hat{a}_1^\dagger,\dots,\hat{a}_N^\dagger\end{pmatrix}\,,
\end{align}
where we see explicitly that $\hat{\xi}^a_\pm$ is spanned by creation or annihilation operators, respectively. If we had taken $\ket{J,z}$ instead, each component had been appropriately displaced by $(P_\pm z)^a$.

In summary, a normalized pure Gaussian state $\ket{J,z}$ is (up to a complex phase) uniquely characterized by its displacement vector $z^a\in V$ and either its complex structure $J$ or equivalently its covariance matrix
\begin{align}
    \Gamma^{ab}=\left\{\begin{array}{rcl}
    -J^a{}_c\Omega^{cb}   &  &  \textbf{(bosons)}\\
    J^a{}_cG^{cb}  &   & \textbf{(fermions)}
    \end{array}\right.\,,
\end{align}
where $\Omega$ and $G$ are fixed background structures for bosons or fermions, respectively.

The choice of a Fock space vacuum is equivalent to selecting a Gaussian state with $\braket{\hat{\xi}^a}=0$. In the case of infinitely many degrees of freedom, two Gaussian states $\ket{J}$ and $\ket{\tilde{J}}$ give rise to unitarily equivalent Fock space representations if the Hilbert-Schmidt norm of $J-\tilde{J}$ is finite.\footnote{Note that the definition of Hilbert-Schmidt norm requires an inner product on the classical phase space. For fermions, we can use the one induced by the background structure $G$, while for bosons we can equivalently use $G^{ab}=J^a{}_c\Omega^{cb}$ induced by $J$ or $\tilde{G}^{ab}=\tilde{J}^a{}_c\Omega^{cb}$ induced by $\tilde{J}$. See section~\ref{sec:fieldtheory} for further details.}

\begin{example}[Single mode pure Gaussian bosonic states]\label{ex:boson1}
We consider a single bosonic mode with $\hat{\xi}\qp(\hat{q},\hat{p})\aab(\hat{a},\hat{a}^\dagger)$. With respect to the number eigenvectors $\ket{n}$, the most general Gaussian state vector with $z^a=0$ is
\begin{align}
    \ket{J}=\frac{1}{\sqrt{\cosh{\tfrac{\rho}{2}}}}\sum^\infty_{n=0}\tfrac{\sqrt{(2n)!}}{2^nn!}\left(-e^{\ii\phi}\tanh{\tfrac{\rho}{2}}\right)^n\ket{2n}\,,
\end{align}
where $\phi\in[0,2\pi]$ and $\rho\in[0,\infty)$. With respect to above bases, one finds
\begin{align}
\begin{split}
    \hspace{-1mm}G&\qp\begin{pmatrix}
    \cosh{\rho}+\cos{\phi}\sinh{\rho} & \sin{\phi}\sinh{\rho}\\
    \sin{\phi}\sinh{\rho} & \cosh{\rho}-\cos{\phi}\sinh{\rho}
    \end{pmatrix}\,,\\
    &\aab\begin{pmatrix}
    e^{\ii\phi}\sinh{\rho} & \cosh{\rho}\\
    \cosh{\rho} & -e^{-\ii\phi}\sinh{\rho}
    \end{pmatrix}\,,
\end{split}\label{eq:G-1bosons}\\
\begin{split}
    J&\qp\begin{pmatrix}
    -\sin{\phi}\sinh{\rho} & \cos{\phi}\sinh{\rho}+\cosh{\rho} \\
    \cos{\phi} \sinh{\rho} -\cosh{\rho} & \sin{\phi}\sinh{\rho}
    \end{pmatrix}\,,\\
    &\aab\begin{pmatrix}
    -\ii \cosh{\rho} & \ii e^{\ii\phi}\sinh{\rho}\\
    -\ii e^{-\ii\phi}\sinh{\rho} & \ii\cosh{\rho}
    \end{pmatrix}\,.
\end{split}
\end{align}
In summary, Gaussian states of a single bosonic mode form a two-dimensional plane parametrized by polar coordinates $(\rho,\phi)$.
\end{example}

\begin{example}
[Single and two mode
pure Gaussian fermionic 
states]
\label{ex:fermion1}
We consider a single fermionic mode with $\hat{\xi}\qp(\hat{q},\hat{p})\aab(\hat{a},\hat{a}^\dagger)$. There are only two distinct pure Gaussian states, which are characterized by the state vectors
\begin{align}
    \left\{\begin{array}{cc}
    \ket{J_+}=\ket{0}\\[1mm]
    \ket{J_-}=\ket{1}
    \end{array}\right\}\,,
\end{align}
whose covariance matrix and complex structures
\begin{align}
    \Omega_\pm&\qp\begin{pmatrix}
    0 & \pm1\\
    \mp1 & 0
    \end{pmatrix}\aab\begin{pmatrix}
    0 & \mp\ii\\
    \pm\ii & 0
    \end{pmatrix}\,,\\
    J_\pm&\qp\begin{pmatrix}
    0 & \pm1\\
    \mp1 & 0
    \end{pmatrix}\aab\begin{pmatrix}
    \mp\ii & 0\\
    0 & \pm\ii
    \end{pmatrix}\,.
\end{align}
In summary, there are only two distinct Gaussian pure states for a single fermionic mode. We consider also two fermionic modes with $\hat{\xi}\qp(\hat{q}_1,\hat{q}_2,\hat{p}_1,\hat{p}_2)\aab(\hat{a}_1,\hat{a}_2,\hat{a}^\dagger_1,\hat{a}^\dagger_2)$, where the most general Gaussian state
vectors are
\begin{align}
    \left\{\begin{array}{cc}
    \ket{J_+}=\cos{\tfrac{\theta}{2}}\ket{0,0}+e^{\ii\phi}\sin{\tfrac{\theta}{2}}\ket{1,1}\\[1mm]
    \ket{J_-}=\cos{\tfrac{\theta}{2}}\ket{1,0}+e^{\ii\phi}\sin{\tfrac{\theta}{2}}\ket{0,1}
    \end{array}\right\}
\end{align}
with $\theta\in[0,\pi]$ and $\phi\in[0,2\pi]$. Their covariance matrix and complex structure are
\begin{align}
\begin{split}
\footnotesize\Omega_\pm\qp\left(
\begin{array}{cccc}
0 & \mp\sin{\theta} \sin{\phi} & \pm\cos{\theta} & \pm\sin{\theta}\cos{\phi} \\
\pm\sin{\theta} \sin{\phi} & 0 & -\sin{\theta}\cos{\phi} & \cos{\theta} \\
\mp\cos{\theta} & \sin{\theta}\cos{\phi} & 0 & \sin{\theta}\sin{\phi} \\
\mp\sin{\theta}\cos{\phi} &  -\cos{\theta} & -\sin{\theta}\sin{\phi} & 0
\end{array}
\right)\\
\footnotesize\aab\left(
\begin{array}{cccc}
0 & \ii e^{\ii\phi}\sin{\theta} & -\ii\cos{\theta} & 0\\
-\ii e^{\ii\phi}\sin{\theta} & 0 & 0 & -\ii\cos{\theta}\\
\ii\cos{\theta} & 0 & 0 & -\ii e^{-\ii\phi}\sin{\theta}\\
0 & \ii\cos{\theta} & \ii e^{-\ii\phi} \sin{\theta} & 0
\end{array}
\right)\,,
\end{split}\\
\begin{split}
\footnotesize J_\pm\qp\left(
\begin{array}{cccc}
0 & \mp\sin{\theta} \sin{\phi} & \pm\cos{\theta} & \pm\sin{\theta}\cos{\phi} \\
\pm\sin{\theta} \sin{\phi} & 0 & -\sin{\theta}\cos{\phi} & \cos{\theta} \\
\mp\cos{\theta} & \sin{\theta}\cos{\phi} & 0 & \sin{\theta}\sin{\phi} \\
\mp\sin{\theta}\cos{\phi} &  -\cos{\theta} & -\sin{\theta}\sin{\phi} & 0
\end{array}
\right)\hspace{3mm}\\
\tiny\aab\left(
\begin{array}{cccc}
\mp\ii\cos{\theta} & \ii\delta_\mp e^{-\ii\phi}\sin{\theta} & 0 & \ii \delta_{\pm}e^{\ii\phi}\sin{\theta}\\
\ii \delta_{\mp}e^{\ii\phi}\sin{\theta} & -\ii\cos{\theta} & -\ii\delta_{\pm} e^{\ii\phi}\sin{\theta} & 0\\
0 & -\ii \delta_{\pm}e^{-\ii\phi}\sin{\theta} & \pm\ii\cos{\theta} & -\ii\delta_\mp e^{-\ii\phi}\sin{\theta}\\
\ii \delta_{\pm}e^{-\ii\phi} \sin{\theta} & 0 & -\ii\delta_\mp e^{\ii\phi}\sin{\theta} & \ii\cos{\theta}
\end{array}
\right)
\end{split}
\end{align}
with $\delta_{\pm}=\frac{1\pm1}{2}$, \ie $\delta_+=1$ and $\delta_-=0$. In summary, Gaussian states of two fermionic modes form two disconnected unit spheres parametrized by angles $(\theta,\phi)$, where we further distinguish the Gaussian state vectors of type $\ket{J_+}$ and $\ket{J_-}$. The two sets are distinguished by the parity operator $\hat{P}=\exp(\ii\pi\hat{N})$, as the total number operator $\hat{N}=\sum_i\hat{a}_i^\dagger\hat{a}_i$ is even for $\ket{J_+}$ and odd for $\ket{J_-}$
\end{example}

The projective representations $\mathcal{U}(M,z)$ of group elements $M\in\mathcal{G}$ are called Gaussian transformations, because they map Gaussian states into Gaussian states, as we will prove next. They are also known as Bogoliubov transformations, where they are often written in terms of creation and annihilation operators. Any two Gaussian states are related by Gaussian transformations, from which we will uniquely identify a canonical one. This will also enable us to relate the manifold of pure Gaussian states with symmetric spaces, as introduced in section~\ref{sec:symmetric-spaces}.

\begin{proposition}
The unitary transformation $\mathcal{U}(M,z)$ defined section~\ref{sec:projective-group-representation} applied to a Gaussian state $\ket{J_0,z_0}$ will map to another Gaussian state $\ket{J_1,z_1}=\mathcal{U}(M,z)\ket{J_0,z_0}=\ket{MJ_0M^{-1},Mz_0+z_1}$, \ie Gaussian transformations map Gaussian states to Gaussian states.
\end{proposition}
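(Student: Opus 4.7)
The cleanest route is to apply the characterization of Gaussian states from definition~\ref{def:pure-Gaussian}(b): a normalized vector $\ket{\psi}$ is the Gaussian state $\ket{J_1,z_1}$ (up to phase) if and only if $\tfrac{1}{2}(\delta^a{}_b+\ii J_1^a{}_b)(\hat{\xi}^b-z_1^b)\ket{\psi}=0$. So the plan is simply to verify this annihilation condition for $\ket{\psi_1}:=\mathcal{U}(M,z)\ket{J_0,z_0}$ with the candidate values $J_1=MJ_0M^{-1}$ and $z_1=Mz_0+z$, and then check that $J_1$ is a bona fide compatible complex structure.

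For the first step, I would insert $\mathcal{U}(M,z)\mathcal{U}^\dagger(M,z)=\id$ between $(\hat{\xi}^b-z_1^b)$ and $\mathcal{U}(M,z)$ and use the intertwining identity~\eqref{eq:Utra}, $\mathcal{U}^\dagger(M,z)\hat{\xi}^b\mathcal{U}(M,z)=M^b{}_c\hat{\xi}^c+z^b$, to pull $\mathcal{U}(M,z)$ out on the left. The translation piece $z^b$ cancels against the $z^b$-contribution inside $z_1^b=(Mz_0)^b+z^b$, leaving the combination $M^b{}_c(\hat{\xi}^c-z_0^c)$. The remaining key algebraic manipulation is the factorization
\begin{equation}
(\delta^a{}_b+\ii J_1^a{}_b)\,M^b{}_c = M^a{}_d(\delta^d{}_c+\ii J_0^d{}_c)\,,
\end{equation}
which is an immediate consequence of $J_1=MJ_0M^{-1}$. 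After this factorization the expression takes the form $\mathcal{U}(M,z)\,M^a{}_d\cdot\tfrac{1}{2}(\delta^d{}_c+\ii J_0^d{}_c)(\hat{\xi}^c-z_0^c)\ket{J_0,z_0}$, which vanishes because $\ket{J_0,z_0}$ by hypothesis satisfies the defining equation~\eqref{eq:G-def2}.

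The second step is to confirm that $J_1=MJ_0M^{-1}$ is an admissible complex structure. The condition $J_1^2=-\id$ is immediate from $J_0^2=-\id$. Compatibility with the relevant background structure follows from $M\in\mathcal{G}$: for bosons,
\begin{equation}
J_1\Omega J_1^\intercal = MJ_0M^{-1}\Omega(M^{-1})^\intercal J_0^\intercal M^\intercal = MJ_0\Omega J_0^\intercal M^\intercal = M\Omega M^\intercal = \Omega\,,
\end{equation}
and the analogous computation with $G$ handles the fermionic case. Positivity of $J_1\Omega$ for bosons is preserved under conjugation by $M$, so $J_1\in\mathcal{M}$.

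There is no substantial obstacle here: the whole statement is essentially a direct corollary of the intertwiner~\eqref{eq:Utra} together with characterization~(b). The only bookkeeping subtlety is tracking how the constant shift $z^b$ in~\eqref{eq:Utra} combines with the shift in the defining equation so as to reconstitute $\hat{\xi}-z_0$ acting on $\ket{J_0,z_0}$; once this cancellation is identified, the factorization through $M$ does the rest. Uniqueness up to phase of the resulting Gaussian state then follows from the uniqueness clause already established in definition~\ref{def:pure-Gaussian}.
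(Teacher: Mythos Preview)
Your proof is correct and takes a genuinely different route from the paper. The paper works via characterization~(a) of definition~\ref{def:pure-Gaussian}: it uses the intertwiner~\eqref{eq:Utra} to compute the transformed $1$- and $2$-point functions of $\ket{\psi}=\mathcal{U}(M,z)\ket{J_0,z_0}$, reads off $J=-\Omega g=MJ_0M^{-1}$ from $C_2^{ab}=M^a{}_cC_2^{cd}(M^\intercal)_d{}^b$, and then checks $J^2=-\id$. You instead work via characterization~(b), directly verifying the annihilation condition~\eqref{eq:G-def2} for the candidate $(J_1,z_1)$ by conjugating through $\mathcal{U}(M,z)$ and using the factorization $(\id+\ii J_1)M=M(\id+\ii J_0)$. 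Your argument is slightly more direct in that it bypasses the correlation-function bookkeeping; the paper's version has the minor advantage of making the transformed covariance explicit along the way. Your additional check that $J_1$ is a compatible complex structure is harmless but actually redundant: the equivalence proof in definition~\ref{def:pure-Gaussian} already guarantees that any $J$ solving~\eqref{eq:G-def2} for a normalized state automatically satisfies $J^2=-\id$ and equals $\Omega g$.
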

\begin{proof}
Using~\eqref{eq:Utra}, we compute the 1- and 2-point functions of the resulting state $\ket{\psi}=\mathcal{U}(M,z)\ket{J_0,z_0}$ as
\begin{align}
    z^a&=\braket{\psi|\hat{\xi}^a|\psi}=M^a{}_bz_0^b+z^a\,,\\
    C_2^{ab}&=\braket{\psi|\hat{\xi}^a\hat{\xi}^b|\psi}-z^az^b=M^a{}_c C_2^{cd} (M^\intercal)_d{}^b\,.
\end{align}
Decomposing $C_2=\frac{1}{2}(G+\ii\Omega)$ and computing $J=-\Omega g$ as in section~\ref{sec:pure-Gaussian-states} yields
\begin{align}
    J=-M\Omega_0M^\intercal (M^{-1})^\intercal g M^{-1}=MJ_0M^{-1}\,.
\end{align}
From this, it is easy to compute $J^2=M(-J_0^2)M^{-1}=-\id$, which proves that the resulting state $\ket{\psi}\cong\ket{J,z}$ is Gaussian and thus implies that $\mathcal{U}(M,z)$ maps Gaussian states onto Gaussian states.
\end{proof}

We can now reverse the argument and ask how to find the Gaussian transformation $\mathcal{U}(M,z)$ that transforms a fixed reference state $\ket{J_0,z_0}$ into an arbitrary Gaussian state $\ket{J_1,z_1}$ of our choice, \ie $\ket{J_1,z_1}\cong\mathcal{U}(M,z)\ket{J_0,z_0}$. It is easy to see for the displacement as $z=z_1-z_0$, which is unique. For $M\in\mathcal{M}$, we find the requirement
\begin{align}
    J_1=MJ_0M^{-1}\,,\label{eq:J0-to-J}
\end{align}
which does not determine $M$ uniquely, as we can recall from section~\ref{sec:Cartan-decomposition}. Instead, we find that there is an equivalence class $[M]\subset\mathcal{G}$ of group elements that transform $J_0$ into $J$. Applying $\mathcal{U}(M,z)\ket{J_0,z_0}$ for different $M\in[M]$ will only differ in its complex phase, so that the manifold $\mathcal{M}$ of pure Gaussian states is given by
\begin{align}
    \mathcal{M}=\begin{cases}
    \mathcal{M}_b\times V & \textbf{(bosons)}\\
    \mathcal{M}_f & \textbf{(fermions)}
    \end{cases}\,,
\end{align}
where we also include displacements for bosons. The dimensions of these manifold can be deduced from the respective symmetric spaces $\mathcal{M}_{f/b}$ and $V$, \ie
\begin{align}
    \dim\mathcal{M}=\begin{cases}
    N(N+1)+2N & \textbf{(bosons)}\\
    N(N-1) & \textbf{(fermions)}
    \end{cases}\,.
\end{align}
We further recall that $\mathcal{M}_b$ is diffeomorphic to $\mathbb{R}^{N(N+1)}$, which implies $\mathcal{M}\simeq\mathbb{R}^{N(N+3)}$ for bosons. For fermions, we have $\mathcal{M}=\mathcal{M}_f\simeq\mathrm{O}(2N,\mathbb{R})/\mathrm{U}(N)$, which is a non-contractible and generally topologically non-trivial manifold consisting of two disconnected components (associated to the two parity sectors).

\begin{example}[Bosonic Gaussian single-mode pure states revisited]\label{ex:boson2}
We reconsider Example~\ref{ex:boson1} and choose the reference state vector $\ket{J_0}$ with
\begin{align}
    \footnotesize\hspace{-2mm}G_0\qp\begin{pmatrix}
    1 & 0\\
    0 & 1
    \end{pmatrix}\aab\begin{pmatrix}
    0 & 1\\
    1 & 0
    \end{pmatrix}\hspace{1mm},\,J_0\qp\begin{pmatrix}
    0 & 1\\
    -1 & 0
    \end{pmatrix}\aab\begin{pmatrix}
    \ii & 0\\
    0 & -\ii
    \end{pmatrix}\,.\label{eq:Gamma0-boson}
\end{align}
A general symplectic transformation $\mathcal{G}=\mathrm{Sp}(2,\mathbb{R})$ is
\begin{equation}
\begin{aligned}
\footnotesize M\qp\begin{pmatrix}
    \cos{\tau}\,\cosh{\tfrac{\rho}{2}}-\sin{\theta}\,\sinh{\tfrac{\rho}{2}} & -\sin{\tau}\,\cosh{\tfrac{\rho}{2}}+\cos{\theta}\,\sinh{\tfrac{\rho}{2}}\\[1mm]
    \sin{\tau}\,\cosh{\tfrac{\rho}{2}}+\cos{\theta}\,\sinh{\tfrac{\rho}{2}} & \cos{\tau}\,\cosh{\tfrac{\rho}{2}}+\sin{\theta}\,\sinh{\tfrac{\rho}{2}}
    \end{pmatrix}\nonumber\\
    \footnotesize\aab\begin{pmatrix}
    e^{\ii\tau}\cosh{\tfrac{\rho}{2}} & \ii e^{\ii\theta}\sinh{\tfrac{\rho}{2}}\\[1mm]
    -\ii e^{-\ii\theta}\sinh{\tfrac{\rho}{2}} & e^{-\ii\tau}\cosh{\tfrac{\rho}{2}}
    \end{pmatrix}\,,\hspace*{3cm}\nonumber
\end{aligned}
\end{equation}
for which we have $\ket{J}\cong \mathcal{S}(M)\ket{J_0}$ with $\Gamma$ from~\eqref{eq:G-1bosons}, where $\phi=\tau-\theta$. The stabilizer group of $\ket{J_0}$ consists of
\begin{align}
    u\qp\begin{pmatrix}
    \cos{\varphi} & \sin{\varphi}\\
    -\sin{\varphi} & \cos{\varphi}
    \end{pmatrix}\aab\begin{pmatrix}
    e^{\ii\varphi} & 0\\
    0 & e^{-\ii\varphi}
    \end{pmatrix}\,.
\end{align}
From the relative complex structure $\Delta=T^2=-JJ_0$, we compute the generator
\begin{align}
\footnotesize \hspace{-1mm}K=\log{T}\qp\frac{\rho}{2}\begin{pmatrix}
    \sin{\phi} & \cos{\phi}\\
    \cos{\phi} & -\sin{\phi}
    \end{pmatrix}\aab\frac{\rho}{2}\begin{pmatrix}
    0 & \ii e^{-\ii\phi}\\
    -\ii e^{\ii\phi} & 0
    \end{pmatrix},
\end{align}
such that $\ket{J}\cong e^{\widehat{K}}\ket{J_0}$. We can always change the basis to reach a standard forms $\phi=\tfrac{\pi}{2}$, where we can read off the eigenvalues $(e^{\rho},e^{-\rho})$ of $\Delta$.
\end{example}

\begin{example}[Fermions revisited]
\label{ex:fermion2}
We reconsider Example~\ref{ex:fermion1}. For a single fermionic mode, we choose the reference state vector $\ket{J_0}$ with
\begin{align}
    \footnotesize\hspace{-2mm}\Omega_0\qp\begin{pmatrix}
    0 & 1\\
    -1 & 0
    \end{pmatrix}\aab\begin{pmatrix}
    0 & -\ii\\
    \ii & 0
    \end{pmatrix}\,,\,J_0\qp\begin{pmatrix}
    0 & 1\\
    -1 & 0
    \end{pmatrix}\aab\begin{pmatrix}
    -\ii & 0\\
    0 & \ii
    \end{pmatrix}\!.\hspace{-2mm}
\end{align}
The stabilizer subgroup $\mathrm{U}(1)$ consists of the same elements as in~\eqref{eq:U(1)}, which coincides with the group $\mathrm{SO}(2,\mathbb{R})$. Consequently, the only group elements that transform $\ket{J_0}=\ket{J_+}$ into $\ket{J_-}$ lie in the disconnected component. We also reconsider two fermionic modes with reference state vector $\ket{J_0}$ given by
\begin{align}
\footnotesize \Omega_0\qp\begin{pmatrix}
    0& \id\\
    -\id & 0
    \end{pmatrix}\aab\begin{pmatrix}
    0 & -\ii\id\\
    \ii\id & 0
    \end{pmatrix}\,,\,
    J_0\qp\begin{pmatrix}
    0& \id\\
    -\id & 0
    \end{pmatrix}\aab\begin{pmatrix}
    -\ii\id & 0\\
    0 & \ii\id
    \end{pmatrix}\,.\label{eq:Gamma0-fermion}
\end{align}
There is a $4$-dimensional subspace of these generators also satisfying $[K,J_0]$, which generates  $\mathrm{U}(2)\subset\mathrm{O}(4,\mathbb{R})$. We can reach the most general complex structure $J_+$ by a continuous path generated by
\begin{align}
\footnotesize
    K=\frac{1}{2}\log{\Delta}\qp\frac{\theta}{2}
\begin{pmatrix}
 0 & \cos{\phi} & 0 & \sin{\phi} \\
 -\cos{\phi} & 0 & -\sin{\phi} & 0 \\
 0 & \sin{\phi} & 0 & -\cos{\phi} \\
 -\sin{\phi} & 0 & \cos{\phi} & 0
\end{pmatrix}
\end{align}
for $\Delta=-J_+J_0$. To reach state vectors of the form $\ket{J_-}$, we must also apply an additional transformation $\mathcal{S}(M_v)$ with $v\qp(\sqrt{2},0,0,0)\aab(1,0,1,0)$ to find $\ket{J_-}=\mathcal{S}(M_v)\ket{J_+}$. We can always change basis to reach a standard forms $\phi=0$, where we can read off the eigenvalues $(e^{\ii\theta},e^{\ii \theta},e^{-\ii\theta},e^{-\ii\theta})$ of $\Delta$.
\end{example}

The manifolds of bosonic and fermionic Gaussian states can be embedded into projective Hilbert space $\mathcal{P}(\mathcal{H})$, from which it inherits the structures to make $\mathcal{M}$ itself a so-called Kähler manifold. Such manifolds have various desirable properties, but for our purpose it is sufficient to know that each tangent space $\mathcal{T}_{(J,z)}\mathcal{M}$ is a Kähler space equipped with compatible Kähler structures $(\bm{G},\bm{\Omega},\bm{J})$ which are distinct from the ones $(G,\Omega,J)$ on the classical phase space $V$. A detailed review can be found in the application section of~\cite{hackl2020geometry}, where it is also derived how the two types of Kähler structures, \ie $(\bm{G},\bm{\Omega},\bm{J})$ and $(G,\Omega,J)$ are related. Treating the family of Gaussian states as a Kähler manifold is particularly useful when one tries to approximate non-Gaussian states, such as ground states of interacting Hamiltonians or the time evolution under such Hamiltonians. This is known as the Gaussian time-dependent variational principle~\cite{guaita2019gaussian}, which is a special case of more general variational methods~\cite{kramer1980geometry,hackl2020geometry}. Gaussian states can also be understood as group theoretic coherent states~\cite{perelomov1972coherent,gilmore1972geometry,zhang1990coherent} with respect to the symplectic or orthogonal group. This concept recently led to generalizations~\cite{shi2018variational,guaita2021generalization} relevant for variational calculations.

At this stage, we have introduced pure Gaussian states in terms of Kähler structures and in particular, by only specifying the complex structure $J$ (and $z$ in the case of bosons). This specifies the quantum state uniquely, but leaves the complex phase of the state vector $\ket{J,z}$ undetermined, as this phase is not physical. Next, we will discuss how all relevant properties of pure bosonic and fermionic Gaussian states can be expressed in terms of $J$ (and potentially $z$ for bosons). In particular, we will see that many expressions are almost identical for bosons and fermions, leading to a unified description.

\subsubsection{Wick's theorem}\label{sec:Wick}
One of the most important properties of Gaussian states is that we can compute the expectation values of arbitrary operators (written in powers of $\hat{\xi}^a$) from the one- and two-point functions $z^a$ and $C_2^{ab}$, which themselves are fixed by commutation or anti-commutation relations as well as $J$ and $z$. Consequently, the evaluation of expectation values can be performed efficiently using tensors on the classical phase space, instead of representing operators and states on Hilbert space $\mathcal{H}$. This is particularly advantageous for bosonic systems, where $\mathcal{H}$ is infinite dimensional, but also for fermions the Hilbert space dimension grows exponentially with the number of degrees of freedom, which makes numerics on Hilbert space unfeasible.

We define the $n$-point correlation function of a state $\ket{\psi}$ with $z^a=\braket{\psi|\hat{\xi}^a|\psi}$ (requiring $z^a=0$ for fermions) as
\begin{align}
    C^{a_1\cdots a_n}_{n}&=\braket{\psi|(\hat{\xi}-z)^{a_1}\cdots(\hat{\xi}-z)^{a_n}|\psi}\,,
\end{align}
which can be efficiently computed as explained below.

\begin{proposition}[Wick's theorem]\label{prop:pure-Wicks}
For a Gaussian state $\ket{\psi}=\ket{J,z}$, the $n$-point correlation function can be computes according to:
\begin{itemize}
	\item[(a)] Odd correlation functions vanish, \ie $C_{2n+1}=0$.
	\item[(b)] Even correlation functions are given by the sum over all two-contractions
	\begin{align}
		\hspace{4mm} C^{a_1\cdots a_{2n}}_{2n}=\sum_{\sigma}\frac{|\sigma|}{n!}C_2^{a_{\sigma(1)}a_{\sigma(2)}}\hspace{-4mm}\ldots C_2^{a_{\sigma(2n-1)}a_{\sigma(2n)}},
	\end{align}
	where the permutations $\sigma$ satisfy $\sigma(2i-1)<\sigma(2i)$ and $|\sigma|=1$ for bosons and $|\sigma|=\mathrm{sgn}(\sigma)$, called parity, for fermions.
\end{itemize}
\end{proposition}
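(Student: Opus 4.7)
My plan is to exploit the decomposition $(\hat{\xi}-z)^a = \hat{\xi}^a_+ + \hat{\xi}^a_-$ from \eqref{eq:xipm}, combined with the annihilation property $\hat{\xi}^a_-\ket{J,z}=0$ (which is precisely the defining equation \eqref{eq:G-def2}) and its Hermitian adjoint $\bra{J,z}\hat{\xi}^a_+=0$. Together with the identities \eqref{eq:xipm-fermions}---under which the bosonic commutator or fermionic anticommutator of $\hat{\xi}^a_-$ with $(\hat{\xi}-z)^b$ collapses to the c-number $C_2^{ab}$---this lets me proceed by induction on $n$.

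The first step is to substitute $(\hat{\xi}-z)^{a_1} = \hat{\xi}^{a_1}_+ + \hat{\xi}^{a_1}_-$ inside $C_n^{a_1\cdots a_n}$ and drop the $\hat{\xi}^{a_1}_+$ piece, which annihilates the bra. I then push the remaining $\hat{\xi}^{a_1}_-$ to the right through each factor $(\hat{\xi}-z)^{a_k}$ using the (anti-)commutator identity. Each pass past position $k$ contributes a c-number contraction $C_2^{a_1 a_k}$; for fermions, the $k-2$ anticommutations accumulate a sign $(-1)^{k-2}$. The leftover $\hat{\xi}^{a_1}_-$ on the far right annihilates the ket, producing the recursion
\begin{align}
C_n^{a_1\cdots a_n} = \sum_{k=2}^{n} \eta^{k-2}\, C_2^{a_1 a_k}\, C_{n-2}^{a_2\cdots a_{k-1}\, a_{k+1}\cdots a_n},
\end{align}
with $\eta=+1$ for bosons and $\eta=-1$ for fermions.

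Part (a) then follows by strong induction: the base case $C_1^a = \braket{J,z|(\hat{\xi}-z)^a|J,z} = 0$ is immediate from the definition of $z^a$, and if $n$ is odd so is $n-2$, making every $C_{n-2}$ on the right vanish by the inductive hypothesis. Part (b) follows by iterating the recursion down to $C_0=1$: each pairing of $\{1,\dots,2n\}$ arises exactly once, since at every step the smallest remaining index is paired with one of the others. For bosons all signs are trivially $+1$; for fermions, the product of the local signs $(-1)^{k-2}$ collected along the iteration reproduces $\mathrm{sgn}(\sigma)$ for the resulting ordered pairing $\sigma$ with $\sigma(2i-1)<\sigma(2i)$. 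The prefactor $1/n!$ in the statement then accounts for the fact that summing over ordered permutations overcounts each pairing by the $n!$ orderings of the pairs.

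The main technical obstacle I anticipate is the fermionic sign bookkeeping---checking that the product of local signs $(-1)^{k-2}$ accumulated through the recursion matches $\mathrm{sgn}(\sigma)$ for the full ordered pairing. A clean way to close this step is to observe that $C_n$ is manifestly totally antisymmetric in the indices $a_i$ by CAR, so only the antisymmetrization of a product of $C_2$ factors can appear; since each pairing is shown to arise exactly once, it suffices to check the overall sign on a single reference pairing such as $\{(1,2),(3,4),\dots\}$, which the recursion assigns $+1$, thereby pinning down the $\mathrm{sgn}(\sigma)$ convention.
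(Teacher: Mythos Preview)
Your approach is essentially the same as the paper's---both exploit the decomposition $(\hat{\xi}-z)^a=\hat{\xi}^a_++\hat{\xi}^a_-$ and normal-order using the (anti-)commutation relations~\eqref{eq:xipm-fermions}. Where the paper sketches the normal-ordering in one sweep, you organize it as an explicit recursion in $n$, which is arguably cleaner and makes the combinatorics of the pairings more transparent.

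There is one genuine error in your final paragraph. The claim that $C_n$ is ``manifestly totally antisymmetric in the indices $a_i$ by CAR'' is false: the CAR read $\{\hat{\xi}^a,\hat{\xi}^b\}=G^{ab}\neq 0$, so swapping two adjacent indices gives
\begin{align}
C_n^{\cdots a_i a_{i+1}\cdots}=-C_n^{\cdots a_{i+1} a_i\cdots}+G^{a_i a_{i+1}}\,C_{n-2}^{\cdots \widehat{a_i}\,\widehat{a_{i+1}}\cdots},
\end{align}
and already $C_2^{ab}=\tfrac{1}{2}(G^{ab}+\ii\Omega^{ab})$ has a nonvanishing symmetric part. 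So your proposed shortcut for the fermionic sign does not work. Fortunately you do not need it: the direct bookkeeping you outlined is correct and standard. When the smallest remaining index is paired with the element in relative position $k$, the local sign $(-1)^{k-2}$ is exactly the factor by which the sign of the full pairing (as a permutation with $\sigma(2i-1)<\sigma(2i)$) changes under removing that pair---this is the familiar Pfaffian recursion. Verifying this on a couple of small examples (e.g.\ the three pairings of $\{1,2,3,4\}$) and then arguing inductively closes the gap cleanly.
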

\begin{proof}
A covariant proof of Wick's theorem is based on the previously introduced operators $\hat{\xi}^a_{\pm}$ from~\eqref{eq:xipm}, such that
\begin{align}
    \hspace{-2mm}C_n^{a_1\cdots a_n}=\braket{J,z|(\hat{\xi}_++\hat{\xi}_-)^{a_1}\cdots(\hat{\xi}_++\hat{\xi}_-)^{a_n}|J,z}.
\end{align}
We now use their commutation and anticommutation relations~\eqref{eq:C2-explicit} to normal-order the $\hat{\xi}_\pm^{a_i}$, \ie to bring all $\hat{\xi}_{-}^{a_i}$ to the right and all $\hat{\xi}_{+}^{a_i}$ to the left. In doing so, we generate sums of products of $C_2^{ab}$. For odd $n$, every normalordered term contains at least one $\hat{\xi}^{a_i}_\pm$, which annihilates $\ket{J,z}$ or $\bra{J,z}$ and $C_{n}=0$. For even $n$, we find all possible pairings of the $a_i$, but for fermions we will pick up a minus sign for every anti-commutation, we perform in a given term. This gives an overall sign determined by the number of necessary adjacent transpositions, which is known as the parity of the permutation $\sigma$. Note that every commutation or anticommutation keeps the order of the $a_i$, \ie we will never find $C_2^{a_ia_j}$ with $i>j$.
\end{proof}

Note that this is a phase space covariant formulation of Wick's theorem, as it does not require to write $\hat{\xi}^a$ in any basis or expressing it in terms of creation and annihilation operators.

\subsubsection{Baker-Campbell-Hausdorff}\label{sec:BCH-formula}
The philosophy of the present paper is formulate most results in a covariant manner, that are independent from any chosen basis of phase space. However, in order to prove these results, it is typically best to bring operators and matrices in certain standard forms, from which one can read off invariant information, such as eigenvalues and their generalization. For example, we saw in section~\ref{sec:pure-Gaussian-states} that any two Gaussian states $\ket{J_0,z_0}$ and $\ket{J,z}$ define a relative complex structure $\Delta$ whose eigenvalues give rise to invariant squeezing parameters $r_i$. In the following, we will present certain key formulas based on the famous Baker-Campbell-Hausdorff relations which will lay the foundations for deriving such covariant formulas.

We consider a system with $N$ bosonic or fermionic degrees of freedom. We have a Gaussian state $\ket{J,0}$ and an algebra element $K\in\mathfrak{g}$. We can decompose any such $K$ uniquely with respect to $J$ into the sum
\begin{align}
    K=K_++K_-\quad\text{with}\quad K_\pm=\tfrac{1}{2}(K\pm JKJ)\label{eq:Kplus-Kminus}
\end{align}
as explained in the context of~\eqref{eq:Kplusminus1}. Recall that we have
\begin{align}
    \widehat{K}=\left\{\begin{array}{rl}
    -\tfrac{\ii}{2}k_{ab}\hat{\xi}^a\hat{\xi}^b & \textbf{(bosons)} \\
    \tfrac{1}{2}k_{ab}\hat{\xi}^a\hat{\xi}^b     & \textbf{(fermions)}
    \end{array}\right.\,,
\end{align}
where we have the definition $k_{ab}=\omega_{ac}K^c{}_b$ for bosons and $k_{ab}=g_{ac}K^c{}_b$ for fermions from~\eqref{eq:quadratic-generator}. Using the definition of $\hat{\xi}_\pm^a$ with respect to $\ket{J,0}$ and some effort, we can compute
\begin{align}
    \widehat{K}_+=\left\{\begin{array}{rl}
    -\tfrac{\ii}{2}k_{ab}(\hat{\xi}^a_+\hat{\xi}^b_++\hat{\xi}^a_-\hat{\xi}^b_-) & \textbf{(bosons)} \\
    \tfrac{1}{2}k_{ab}(\hat{\xi}^a_+\hat{\xi}^b_++\hat{\xi}^a_-\hat{\xi}^b_-)     & \textbf{(fermions)}
    \end{array}\right.\,,\\
    \widehat{K}_-=\left\{\begin{array}{rl}
    -\tfrac{\ii}{2}k_{ab}(\hat{\xi}^a_+\hat{\xi}^b_-+\hat{\xi}^a_-\hat{\xi}^b_+) & \textbf{(bosons)} \\
    \tfrac{1}{2}k_{ab}(\hat{\xi}^a_+\hat{\xi}^b_-+\hat{\xi}^a_-\hat{\xi}^b_+)     & \textbf{(fermions)}
    \end{array}\right.\,,
\end{align}
We therefore see that $\widehat{K}_+$ is a pure squeezing operator, while $\ket{J,0}$ is eigenstate of $\widehat{K}_-$ with eigenvalue
\begin{align}
    \braket{J,0|\widehat{K}_-|J,0}=\begin{cases}
    -\tfrac{\ii}{4}\Tr(JK_-) & \textbf{(bosons)}\\
    +\tfrac{\ii}{4}\Tr(JK_-) & \textbf{(fermions)}
    \end{cases}\,.
\end{align}
Technically, we can apply the same strategy to a bosonic Gaussian states $\ket{J,z}$ with displacement, in which case we will 

\textbf{Normal-ordered squeezing.} As discussed previously, the simplest non-trivial example of squeezing requires one bosonic and two fermionic degrees of freedom. In both cases, we have two parameters to describe the precise squeezing operation, which correspond to polar coordinate $(r,\theta)$ for bosons (parametrizing a plane) and spherical angles $(r,\theta)$ for fermions (parametrizing a sphere). The following formulas are well-known in the literature~\cite{fisher1984impossibility,truax1985baker} for bosons and can be analogously derived for fermions
\begin{align}
&\begin{array}{l}
     \exp{[\tfrac{r}{2}(e^{\ii\theta}(\hat{a}^\dagger)^2-e^{-\ii\theta}\hat{a}^2)]}\\
    =\exp{[\tfrac{1}{2}e^{\ii\theta}(\tanh{r})(\hat{a}^\dagger)^2]}\\
    \quad\times\exp[-(\ln{\cosh{r}})(\hat{n}+\tfrac{1}{2})]\\
    \quad\times\exp{[-\tfrac{1}{2}(e^{-\ii\theta}\tanh{r})\hat{a}^2]}\,,
\end{array}&&\textbf{(bosons)}\label{eq:BCHsbosons}\\
&\begin{array}{l}
     \exp{[r(e^{\ii\theta}\hat{a}^\dagger_1\hat{a}^\dagger_2+e^{-\ii\theta}\hat{a}_1\hat{a}_2)]}\\
    =\exp{[e^{\ii\theta}\tan{r}\,\hat{a}^\dagger_1\hat{a}^\dagger_2]}\\
    \quad\times\exp[-(\ln{\cos{r}})(\hat{n}_1+\hat{n}_2-1)]\\
    \quad\times\exp{[e^{-\ii\theta}\tan{r}\,\hat{a}_1\hat{a}_2]}\,.
\end{array}&&\textbf{(fermions)}\label{eq:BCHsfermions}
\end{align}
Such formulas are needed to compute expectation values of the form $\braket{J,0|e^{\widehat{K}_+}|J,0}$. We can use~\eqref{eq:BCHsbosons} and~\eqref{eq:BCHsfermions} to derive their covariant normal-ordered counter parts, where we express everything in terms of $\hat{\xi}_{\pm}^a$, namely
\begin{align}
&\begin{array}{l}
    e^{\widehat{K}_+}=e^{-\frac{\ii}{2}\omega_{ac}L^c{}_b\hat{\xi}^a_+\hat{\xi}^b_+}\\
    \quad\times e^{-\frac{\ii}{2}\omega_{ac}\log(\id-L^2)^c{}_b(\hat{\xi}^a_+\hat{\xi}^b_-+\frac{\ii}{4}\Omega^{ab})}\\
    \quad\times e^{-\frac{\ii}{2}\omega_{ac}L^c{}_b\hat{\xi}^a_-\hat{\xi}^b_-}\,,
\end{array}&&\textbf{(bosons)}\label{eq:BCHbosons}\\
&\begin{array}{l}
    e^{\widehat{K}_+}=e^{\frac{1}{2}g_{ac}L^c{}_b\hat{\xi}^a_+\hat{\xi}^b_+}\\
    \quad\times e^{\frac{1}{2}g_{ac}\log(\id-L^2)^c{}_b(\hat{\xi}^a_+\hat{\xi}^b_--\frac{1}{4}G^{ab})}\\
    \quad\times e^{\frac{1}{2}g_{ac}L^c{}_b\hat{\xi}^a_-\hat{\xi}^b_-}\,,
\end{array}&&\textbf{(fermions)}\label{eq:BCHfermions}
\end{align}
where we defined the linear map
\begin{align}
    L=\tanh{K_+}=\tanh(\tfrac{1}{2}\log{\Delta})=\id-\tfrac{2}{\id+\Delta}\,.\label{eq:definingL}
\end{align}
The fastest way to verify these relations is based on block-diagonalizing $K_+$ with eigenvalues $\pm r$ for bosons and $\pm \ii r$ for fermions. Using~\eqref{eq:definingL}, we can conclude that $L$ has eigenvalues $\pm \tanh{r_i}$ for bosons and $\ii\tan{r_i}$ for fermions.

\textbf{Normal-ordered displacement.} We have
\begin{align}
    e^{\alpha \hat{a}^\dagger+\beta\hat{a}}=e^{\alpha \hat{a}^\dagger}e^{\alpha\beta/2}e^{\beta \hat{a}}\,,\label{eq:BCH-linear-a-ad}
\end{align}
which applies to both bosons and fermions (with $\alpha$ and $\beta$ being Grassman variables in the latter case). Relation\ \eqref{eq:BCH-linear-a-ad} follows from $e^{X+Y}=e^{X}e^{Y}e^{-[X,Y]/2}$, which is valid if $X$ and $Y$ commute with $[X,Y]$. We consider $\hat{\xi}^a_\pm$ associated to a state $\ket{J,0}$, \ie there is no displacement in $\hat{\xi}^a_\pm$, to derive the covariant form of\ \eqref{eq:BCH-linear-a-ad} as
\begin{align}
    e^{v_a\hat{\xi}_+^a+w_b\hat{\xi}^b_-}=e^{v_a\hat{\xi}^a_+}e^{\frac{1}{2}v_a (C_2^\intercal)^{ab}w_b}e^{\hat{\xi}^b_-w_b}\,.
\end{align}
We can use this to normal-order the bosonic or fermionic displacement operator defined in\ \eqref{eq:displacement-operator} as
\begin{align}
    \mathcal{D}(z)=\begin{cases}
    e^{-\ii z^a\omega_{ab}\hat{\xi}^b_+}\,e^{-\frac{1}{4}z^ag_{ab}z^b}\,e^{-\ii z^a\omega_{ab}\hat{\xi}^b_-} & \textbf{(bosons)}\\
    e^{-z^ag_{ab}\hat{\xi}^b_+}\,e^{\frac{\ii}{4}z^a\omega_{ab}z^b}\,e^{- z^ag_{ab}\hat{\xi}^b_-} & \textbf{(fermions)}
    \end{cases}\,.\label{eq:BCH-displacement}
\end{align}
which will be crucial for many calculations related to displaced Gaussian states.

\textbf{Normal-ordered displacement and squeezing.} When we consider the interplay between displacement and squeezing, we need to normal-order combinations of them. This is based on
\begin{equation}
\begin{aligned}
    e^{\alpha\hat{a}}e^{\beta (\hat{a}^\dagger)^2}=e^{\beta (\hat{a}^\dagger)^2+2\alpha\beta\hat{a}^\dagger}\,e^{\alpha^2\beta}\,e^{\alpha\hat{a}}\hspace{-.5mm}&\,,&&\textbf{(bosons)}\\
    \begin{array}{rl}
    e^{\alpha_1\hat{a}_1+\alpha_2\hat{a}_2}e^{\beta\hat{a}^\dagger_1\hat{a}^\dagger_2}\hspace{-3mm}&=e^{\beta(\hat{a}^\dagger_1\hat{a}^\dagger_2+\alpha_1\hat{a}_2^\dagger-\alpha_2\hat{a}^\dagger_1)}\\
    &\quad \times e^{\alpha_1\beta\alpha_2}e^{\alpha_1\hat{a}_1+\alpha_2\hat{a}_2}\,,
    \end{array}&
    &&\textbf{(fermions)}
\end{aligned}
\end{equation}
which can be used to find the general covariant form
\begin{equation}
\begin{aligned}
    e^{w_a\hat{\xi}^a_-}e^{v_{bc}\hat{\xi}^b_+\hat{\xi}^c_+}&=e^{v_{bc}\hat{\xi}^b_+\hat{\xi}^c_+}e^{w_a\hat{\xi}^a_-+2w_aC_2^{ab}v_{bc}\hat{\xi}^c_+}\\
    &=e^{v_{bc}\hat{\xi}^b_+\hat{\xi}^c_++2w_aC_2^{ab}v_{bc}\hat{\xi}^c_+}\\
    &\quad\times e^{w_aC_2^{ab}v_{bc}(C_2^\intercal)^{cd}w_d}e^{w_a\hat{\xi}^a_-}\label{eq:normalorder-squeeze-dis}
\end{aligned}
\end{equation}
where $C^{ab}_2$ represents the $2$-point function defined in~\eqref{eq:Cab2}. This allows us to normal-order the expression $\mathcal{D}(z)e^{\widehat{K}_+}$. We need to combine~\eqref{eq:BCHbosons} or~\eqref{eq:BCHfermions} with~\eqref{eq:BCH-displacement} and then apply~\eqref{eq:normalorder-squeeze-dis} to the anti-normal ordered middle term, which can be reordered as
\begin{align}
&\hspace{-5mm}\begin{array}{l}
   e^{-\ii z^a\omega_{ab}\hat{\xi}_-^b}e^{-\frac{\ii}{2}\omega_{ac}L^c{}_b\hat{\xi}_+^a\hat{\xi}^b_+}\\
   =e^{-\frac{\ii}{2}\omega_{ac}L^c{}_b\hat{\xi}_+^a\hat{\xi}^b_++y_a\hat{\xi}^a_+}e^{X}e^{-\ii z^a\omega_{ab}\hat{\xi}_-^b}\,,
\end{array}\hspace{-4mm}&&\textbf{(bosons)}\\
&\hspace{-5mm}\begin{array}{l}
   e^{- z^a g_{ab}\hat{\xi}_-^b}e^{\frac{1}{2}g_{ac}L^c{}_b\hat{\xi}_+^a\hat{\xi}^b_+}\\
   =e^{\frac{1}{2}g_{ac}L^c{}_b\hat{\xi}_+^a\hat{\xi}^b_+-y_a\hat{\xi}^a_+}e^{X}e^{- z^ag_{ab}\hat{\xi}_-^b}\,,
\end{array}\hspace{-4mm}&&\textbf{(fermions)}
\end{align}
where we find $y_a=\frac{1}{2}z^b(g-\ii\omega)_{bc}L^b{}_a$ based on\ \eqref{eq:normalorder-squeeze-dis}. When computing $\braket{J,0|\mathcal{D}(z)e^{\widehat{K}_+}|J,0}$, the most important term is the complex number $X=z^ax_{ab}z^b$, which we compute separately for bosons and fermions. Using~\eqref{eq:definingL},~\eqref{eq:normalorder-squeeze-dis},\ \eqref{eq:normalorder-squeeze-dis} and various Kähler relations as summarized in appendix\ \ref{app:common-formulas}, one finds that $X$ is structurally the same for bosons and fermions and given by
\begin{align}
\begin{split}
    X&=-\tfrac{1}{4}\left[\tfrac{2}{G+\Delta G}-g+\ii\left(\omega-\tfrac{2}{\Omega+\Delta\Omega}\right)\right]_{ab}z^az^b\\
    &=\begin{cases}
    -\tfrac{1}{4}\left[\tfrac{2}{G+\tilde{G}}-g-\tfrac{2\ii}{\Omega+\Delta\Omega}\right]_{ab}z^az^b & \textbf{(bosons)}\\
    -\tfrac{1}{4}\left[\tfrac{2}{G+\Delta G}+\ii\left(\omega-\tfrac{2}{\Omega+\tilde\Omega}\right)\right]_{ab}z^az^b & \textbf{(fermions)}
    \end{cases}\label{eq:X}
\end{split}
\end{align}
with $\frac{1}{G+\Delta G}=(G+\Delta G)^{-1}$ and $\frac{1}{\Omega+\Delta\Omega}=(\Omega+\Delta\Omega)^{-1}$. Note that we have $\Delta G=\tilde{G}$ for bosons and $\Delta \Omega=\tilde{\Omega}$ for fermions, where this refers to the covariance matrix that is reached when applying the group transformation $T=\sqrt{\Delta}=e^{K_+}$ to the state with Kähler structures $(G,\Omega,J)$. This calculation will play an important role, when we want to evaluate the inner product between two Gaussian states.

\textbf{Combined squeezing and displacement.} We further require an important relation between linear and quadratic operators. We consider
\begin{equation}
    \begin{aligned}
    \widehat{w}&=-\ii w_a\hat{\xi}^a\,,&\widehat{K}&=-\ii \omega_{ac}K^c{}_b\hat{\xi}^a\hat{\xi}^b\,, && \textbf{(bosons)}\\
\widehat{w}&=- w_a\hat{\xi}^a\,,&\widehat{K}&=g_{ac}K^c{}_b\hat{\xi}^a\hat{\xi}^b\,, && \textbf{(fermions)}
    \end{aligned}
\end{equation}
where we assume $w_a$ to be Grassmann valued for fermions, as discussed in the previous paragraph. Note that we will allow for $K\in\mathfrak{g}_{\mathbb{C}}$ and $f\in V^*_{\mathbb{C}}$, \ie the resulting operators may not be anti-Hermitian, such that their exponentials may not be unitary. The famous Baker-Campbell-Hausdorff relation allows us to find the operator expression
\begin{align}
    \log(e^{\widehat{K}}e^{\widehat{w}})=\widehat{K}+\widehat{\eta}+w_aB^{ab}w_b\,,\label{eq:BCH-lin-qu}
\end{align}
where we have introduced the following objects
\begin{align}
    \eta_a&=w_b\left(\frac{K}{e^{K}-\id}\right)^b_{\hspace{2mm}a}\,,\label{eq:eta}\\
    B^{ab}&=\begin{cases}
    \ii F(K)^a{}_c\Omega^{cb} & \textbf{(bosons)}\\
    F(K)^a{}_cG^{cb} & \textbf{(fermions)}
    \end{cases}\,,\\
    F(K)&=\frac{1}{4}\frac{K-\sinh{K}}{\id-\cosh{K}}\,.
\end{align}
While these relations appear cumbersome at first, they will be crucial to evaluate characteristic functions of Gaussian states. We can prove \eqref{eq:BCH-lin-qu} using the Dynkin formula~\cite{dynkin1947calculation}, which gives a formal series of~\eqref{eq:BCH-lin-qu} in terms of nested commutators of $\widehat{K}$ and $\widehat{q}$. In our case, only two types of terms survive, namely $[\widehat{w},[\widehat{K},\dots,[\widehat{K},[\widehat{K},\widehat{w}]]\dots]]$ and $[\widehat{K},\dots,[\widehat{K},[\widehat{K},\widehat{w}]]\dots]$. Using $[\widehat{K},\widehat{w}]=\widehat{wK}$ with $(wK)_a=w_bK^b{}_a$, we can expand $\eta_a$ and $B^{ab}$ as a power series in $K$ to deduce above functional expressions.

\subsubsection{Scalar product}
We can also use the linear complex structures to compute the inner product $|\braket{J,z|\tilde{J},\tilde{z}}|^2$ between two normalized Gaussian states. For this, we find again that the relative complex structure introduced in~\eqref{eq:relative-complex-structure} provides a covariant way to encode this information.

\begin{proposition}
The absolute value of the scalar product between two Gaussian states $\ket{J,z}$ and $\ket{\tilde{J},\tilde{z}}$ is given by
\begin{align}
	|\braket{J,z|\tilde{J},\tilde{z}}|^2=e^{-\left|\log\det\frac{\sqrt{\mathbb{1}+\Delta}}{\sqrt{2}\Delta^{1/4}}\right|-\tfrac{1}{2}(z-\tilde{z})^a(\Gamma+\tilde{\Gamma})^{-1}_{ab}(z-\tilde{z})^b}\,.\label{eq:innerproduct-Gaussian}
\end{align}
This expression simplifies for $z=\tilde{z}$ to
\begin{align}
	|\braket{J,z|\tilde{J},z}|^2=\left\{\begin{array}{ll}
	\det\frac{\sqrt{2}\Delta^{1/4}}{\sqrt{\mathbb{1}+\Delta}} & \normalfont{\textbf{(bosons)}}\\[2mm]
	\det\frac{\sqrt{\mathbb{1}+\Delta}}{\sqrt{2}\Delta^{1/4}} & \normalfont{\textbf{(fermions)}}
	\end{array}\right.\,,\label{eq:innerproduct-non-displaced}
\end{align}
\end{proposition}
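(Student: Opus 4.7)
The plan is to use the projective representation of section~\ref{sec:projective-group-representation} together with the Cartan decomposition of section~\ref{sec:Cartan-decomposition} to reduce the matrix element to a vacuum expectation value of $\mathcal{D}(w)\,e^{\widehat{K}_+}$, which can then be evaluated by the Baker-Campbell-Hausdorff identities already established in section~\ref{sec:BCH-formula}. Concretely, set $\Delta=-\tilde J J$ and $T=\sqrt{\Delta}=e^{K_+}$ with $K_+\in\mathfrak{u}_\perp(N)$ as guaranteed by proposition~\ref{prop:relative-delta}, so that $\tilde J=TJT^{-1}$ and therefore $\ket{\tilde J,\tilde z}=e^{\ii\tilde\varphi}\mathcal{D}(\tilde z)\mathcal{S}(T)\ket{J,0}$ and $\ket{J,z}=e^{\ii\varphi}\mathcal{D}(z)\ket{J,0}$. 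Using the composition rule $\mathcal{D}(-z)\mathcal{D}(\tilde z)\propto \mathcal{D}(w)$ with $w=\tilde z-z$, the overall phases cancel in modulus squared and one obtains
\begin{align}
|\braket{J,z|\tilde J,\tilde z}|^2 = |\braket{J,0|\mathcal{D}(w)\,e^{\widehat{K}_+}|J,0}|^2.
\end{align}

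Next I would normal-order $\mathcal{D}(w)\,e^{\widehat{K}_+}$ with respect to the splitting $\hat\xi^a=\hat\xi^a_++\hat\xi^a_-$ induced by the reference state $(J,0)$, so that $\hat\xi^a_-\ket{J,0}=0$ and $\bra{J,0}\hat\xi^a_+=0$. Applying \eqref{eq:BCHbosons}/\eqref{eq:BCHfermions} to split $e^{\widehat{K}_+}$ into its $\hat\xi_+\hat\xi_+$, $\hat\xi_+\hat\xi_-$ and $\hat\xi_-\hat\xi_-$ pieces, then \eqref{eq:BCH-displacement} to split $\mathcal{D}(w)$, and finally commuting the residual $\hat\xi_-$-piece of $\mathcal{D}(w)$ past the $\hat\xi_+\hat\xi_+$-piece via \eqref{eq:normalorder-squeeze-dis}, the product takes the schematic form $e^{\widehat{A}_+}\cdot e^{\mathcal{N}(\Delta,w)}\cdot e^{\widehat{A}_-}$, where $\widehat{A}_\pm$ contain only $\hat\xi^a_\pm$ and $\mathcal{N}$ collects every scalar contribution. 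The operator factors act as the identity in the vacuum matrix element, leaving $\braket{J,0|\mathcal{D}(w)e^{\widehat{K}_+}|J,0}=e^{\mathcal{N}}$.

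The c-number $\mathcal{N}$ decomposes into a squeezing piece, from the middle factor of \eqref{eq:BCHbosons}/\eqref{eq:BCHfermions}, and a displacement piece $X$ given by \eqref{eq:X}. For the squeezing piece, contracting $\omega\otimes\Omega$ (bosons) or $g\otimes G$ (fermions) against $\log(\id-L^2)$ collapses to a scalar proportional to $\log\det(\id-L^2)$, so that $|\mathcal{N}_{\mathrm{sq}}|^2=\det(\id-L^2)^{\pm 1/4}$ with the sign distinguishing the two cases. Substituting $L=\tanh\tfrac{1}{2}\log\Delta=(\Delta-\id)(\Delta+\id)^{-1}$ gives $\id-L^2=4\Delta(\Delta+\id)^{-2}$, and hence
\begin{align}
\det(\id-L^2)^{1/4}=\det\frac{\sqrt{2}\,\Delta^{1/4}}{\sqrt{\id+\Delta}},
\end{align}
which is exactly the bosonic case of \eqref{eq:innerproduct-non-displaced}; for fermions the opposite power inverts the ratio. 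Because proposition~\ref{prop:relative-delta} forces the bosonic determinant to be $\le 1$ and the fermionic one to be $\ge 1$, the two cases are uniformly encoded by the absolute value $|\log\det\tfrac{\sqrt{\id+\Delta}}{\sqrt{2}\Delta^{1/4}}|$ in \eqref{eq:innerproduct-Gaussian}. For the displacement piece, $|e^X|^2=e^{2\mathrm{Re}\,X}$ drops the purely imaginary part of \eqref{eq:X}, and the remaining real symmetric bilinear form in $w$ is to be rewritten as $-\tfrac{1}{2}w^a(\Gamma+\tilde\Gamma)^{-1}_{ab}w^b$.

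The main obstacle is precisely this last algebraic step. The expression \eqref{eq:X} couples four distinct tensors built from $G$, $g$, $\Omega$, $\omega$ and $\Delta$, and collapsing its symmetric real part onto the single tensor $(\Gamma+\tilde\Gamma)^{-1}$ requires iterating the Kähler compatibility relations of definition~\ref{def:kaehler} together with $\Delta G=\tilde G$ (bosons) or $\Delta\Omega=\tilde\Omega$ (fermions). A clean route is to diagonalize $\Delta$ into the $2\times 2$ (bosons) or $4\times 4$ (fermions) blocks of proposition~\ref{prop:relative-delta}, verify the identity on a single block, and then reassemble using that every object in the formula is constructed $\mathcal{G}$-covariantly out of $(G,\Omega,J,\tilde J)$. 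Once this identity is in hand, multiplying the squeezing and displacement factors yields \eqref{eq:innerproduct-Gaussian}, and specializing to $w=0$ gives \eqref{eq:innerproduct-non-displaced}.
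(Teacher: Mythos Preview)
Your approach is essentially the paper's own: reduce to $|\braket{J,0|\mathcal{D}(w)\,e^{\widehat{K}_+}|J,0}|$ with $w=\tilde z-z$ and $K_+=\tfrac12\log\Delta$, then normal-order via \eqref{eq:BCHbosons}/\eqref{eq:BCHfermions}, \eqref{eq:BCH-displacement} and \eqref{eq:normalorder-squeeze-dis}. The squeezing determinant and its rewriting as $\det\big[\sqrt{2}\,\Delta^{1/4}/\sqrt{\id+\Delta}\big]^{\pm1}$ are handled exactly as in the paper.

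There is one small but genuine omission in your bookkeeping of the displacement scalars. You say $\mathcal N$ has two pieces: the squeezing scalar from the middle of \eqref{eq:BCHbosons}/\eqref{eq:BCHfermions} and the reordering scalar $X$ of \eqref{eq:X}. You are missing a third scalar, the middle factor of the normal-ordered displacement \eqref{eq:BCH-displacement}, namely $e^{-\frac14 w^a g_{ab} w^b}$ for bosons (and its fermionic analogue). Without it, $2\,\mathrm{Re}\,X$ alone does \emph{not} collapse to $-\tfrac12\,w^a(\Gamma+\tilde\Gamma)^{-1}_{ab}w^b$; with it, the collapse is immediate. For bosons, the antisymmetric part of \eqref{eq:X} drops out against the symmetric $w^aw^b$, leaving $\mathrm{Re}\,X=-\tfrac14\big[2(G+\tilde G)^{-1}-g\big]_{ab}w^aw^b$, and adding the missing $-\tfrac14 g$ gives exactly $-\tfrac12(G+\tilde G)^{-1}=-\tfrac12(\Gamma+\tilde\Gamma)^{-1}$ for the squared modulus. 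So the ``main obstacle'' you flag is not an obstacle once all three scalars are kept, and the block-diagonal verification you propose, while valid, is unnecessary.
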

\begin{proof}
There are many different ways to prove formula~\eqref{eq:innerproduct-Gaussian}, but we will rely on the decomposition already introduced in section~\ref{sec:BCH-formula}. The relevant information is encoded in the squeezing parameters $r_i$ and the displacement parameters $z_i$ for bosons.\\
We consider the expectation value
\begin{align}
    |\braket{J,z|\tilde{J},z}|=|\braket{J,0|\tilde{J},0}|=|\braket{J,0|e^{\widehat{K}_+}|J,0}|\,,
\end{align}
where $K_+=\log T=\frac{1}{2}\log\Delta$ with $\Delta= -\tilde{J}J$, \ie we use the fact that $\ket{\tilde{J},0}\cong e^{\widehat{K}_+}\ket{J,0}=\ket{TJT^{-1},0}$. Note that we intentionally only refer to the absolute value of this inner product, as we cannot determine the relative complex phase by only writing $\ket{J,0}$ and $\ket{\tilde{J},0}$. We further write $K_+$ in reference to the decomposition $K=K_++K_-$ of~\eqref{eq:Kplus-Kminus}, where our $K_+$ satisfies $\{K_+,J\}=0$ and thus represents pure squeezing from $\ket{J,0}$ to $\ket{\tilde{J},0}$. We can use~\eqref{eq:BCHbosons} and~\eqref{eq:BCHfermions} to compute
\begin{align}
\braket{J,0|e^{\widehat{K}_+}|J,0}=\begin{cases}
\det^{\frac{1}{8}}(\id-L^2) & \textbf{(bosons)}\\
\det^{-\frac{1}{8}}(\id-L^2) & \textbf{(fermions)}
\end{cases}\,,
\end{align}
where we used $e^{\pm\frac{1}{8}\tr\log(\id-L^2)}=\det^{\pm \frac{1}{8}}(\id-L^2)$. We can now express everything in terms of $\Delta$ via $L=\tanh{K_+}=\tanh{\log{T}}=\tanh(\tfrac{1}{2}\log{\Delta})$ and simplify the resulting expression by using the identity
\begin{align}
    L=\tanh{K_+}=\id-\tanh^2\left(\frac{\log{\Delta}}{2}\right)&=\frac{4\Delta}{(1-\Delta^2)}
\end{align}
to find directly~\eqref{eq:innerproduct-non-displaced}.\\
For bosons, we need to do a second step to also include displacement to find~\eqref{eq:innerproduct-Gaussian}. For this, we first compute
\begin{align}
    |\braket{J,z|\tilde{J},\tilde{z}}|&=|\braket{J,z|\mathcal{D}^\dagger(z)\mathcal{D}(\tilde{z})\mathcal{S}(M)|J,0}|\\
    &=|\braket{J,0|\mathcal{D}(\tilde{z}-z)e^{\widehat{K}_+}|J,0}|\,,
\end{align}
where we ignored complex phases due to only considering the absolute value and where we have $K_+=\log{T}=\frac{1}{2}\log\Delta$. At this stage, we normal order $\mathcal{D}(\tilde{z}-z)$ and $e^{\widehat{K}}$ based on~\eqref{eq:BCH-displacement} and~\eqref{eq:BCHbosons} to find
\begin{align}
\begin{split}
    |\braket{J,z|\tilde{J},\tilde{z}}|&=e^{-\frac{1}{4}(z-\tilde{z})^ag_{ab}(z-\tilde{z})^b}e^{\frac{1}{8}\tr\log(\id-L^2)}\\
    &\quad \times|\braket{J,0|e^{-\ii (z-\tilde{z})^a\omega_{ab}\hat{\xi}_-^b}e^{-\frac{\ii}{2}\omega_{ac}L^c{}_b\hat{\xi}_+^a\hat{\xi}^b_+}|J,0}|\,,
\end{split}
\end{align}
where we encounter in the second line exactly the term discussed in~\eqref{eq:X}, which we can normal-order to find $e^{\mathrm{Re}(x)_{ab}(z-\tilde{z})^a(z-\tilde{z})^b}$ with $x_{ab}$ from~\eqref{eq:X}. This combines with the middle term in~\eqref{eq:BCH-displacement}, so that we find exactly\footnote{For completeness, let us mention that we could use~\eqref{eq:X} for fermions to derive a similar expression for fermionic Gaussian states with Grassmann displacement $z^a$ and $\tilde{z}^b$ leading to
\begin{align*}
    |\braket{J,z|\tilde{J},\tilde{z}}|=e^{\frac{\ii}{4}(z-\tilde{z})^a(\Omega+\tilde{\Omega})^{-1}_{ab}(z-\tilde{z})^b}e^{-\frac{1}{8}\tr\log(\id-L^2)}\,,
\end{align*}
which is real in the Grassmann sense as $|\braket{J,z|\tilde{J},\tilde{z}}|^*=|\braket{J,z|\tilde{J},\tilde{z}}|$.}
\begin{align}
    |\braket{J,z|\tilde{J},\tilde{z}}|=e^{-\frac{1}{4}(z-\tilde{z})^a(G+\tilde{G})^{-1}_{ab}(z-\tilde{z})^b}e^{\frac{1}{8}\tr\log(\id-L^2)}\,,
\end{align}
which leads to~\eqref{eq:innerproduct-Gaussian}.
\end{proof}

\subsection{Mixed Gaussian states}\label{sec:mixed-Gaussians}
In the previous sections, we focused on properties of pure Gaussian states. However, many applications in quantum theory also require the consideration of mixed Gaussian states. Mixed Gaussian states can either be considered as a larger class, which contains pure Gaussian states, or they can be considered as the states that arise if one restricts pure Gaussian states to smaller subsystems.

\subsubsection{Definition}
We recall that a mixed state $\rho: \mathcal{H}\to\mathcal{H}$ is a non-negative Hermitian operator, \ie $\rho\geq0$, with $\Tr{\rho}=1$. Only if the state is pure, we have $\Tr{\rho^2}=1$, in which case $\rho=\ket{\psi}\bra{\psi}$ for some normalized $\ket{\psi}\in\mathcal{H}$ with a single non-zero eigenvalue equal to $1$.

Given a mixed state $\rho$, we define its $1$- and $2$-point function in analogy to~\eqref{eq:pure-1-2-point} as
\begin{align}
\begin{split}
    z^a&=\Tr(\rho\hat{\xi}^a)\,,\\
    C_2^{ab}&=\Tr\left(\rho(\hat{\xi}-z)^a(\hat{\xi}-z)^b\right)\,,\\
    &\hspace{-1.4cm}\text{(Requirement: $z^a=0$ for \textbf{(fermions)}),}\label{eq:mixed-1-2-point}
\end{split}
\end{align}
where we restrict once again to those states with $z^a=0$ for fermions, as there are no physical fermionic Gaussian states with $z^a\neq 0$. We decompose $C_2^{ab}=\frac{1}{2}(G^{ab}+\ii \Omega^{ab})$ as in~\eqref{eq:C2-G-Omega} to define the linear map
\begin{align}
    J=\begin{cases}
    -G\omega & \textbf{(bosons)}\\
    +\Omega g & \textbf{(fermions)}
    \end{cases}\,,\label{eq:mixed-state-J}
\end{align}
which in general will not satisfy $J^2=-\id$. Technically, $J$ is therefore not a complex structure, but we may abuse the language and call it a restricted complex structure (as any such $J$ can arise from restricting a complex structure to a subspace), while keeping to use the letter $J$.

We found in definition~\ref{def:pure-Gaussian} that it suffices to compute $C^{ab}_2$ of an arbitrary pure state $\ket{\psi}$ and check if the resulting $J$ satisfies $J^2=-\id$ to check if $\ket{\psi}$ is Gaussian.
While we can still compute a $J$ via $C_2^{ab}$ for a mixed state $\rho$, 
In contrast there is no direct way to read off $J$ if the associated mixed state $\rho$ is Gaussian or not. Instead, we define mixed Gaussian states by the requirement that $\log{\rho}$ is a quadratic operator, as specified next.

\begin{definition}
A mixed state $\rho$ is called Gaussian if and only if there exists a Hermitian quadratic operator\footnote{We could extend the fermionic definition to include Grassmann displacements $z^a$ by having $\hat{Q}=\ii q_{ab}(\hat{\xi}-z)^a(\hat{\xi}-z)^b+c_0$.}
\begin{align}
    \hat{Q}=\begin{cases}
    q_{ab}(\hat{\xi}-z)^a(\hat{\xi}-z)^b+c & \normalfont{\textbf{(bosons)}}\\
    \ii q_{ab}\hat{\xi}^a\hat{\xi}^b+c & \normalfont{\textbf{(fermions)}}
    \end{cases}\,,\label{eq:mixed_state}
\end{align}
such that $\rho=e^{-\hat{Q}}$. In this case, we denote $\rho$ by $\rho_{(J,z)}$, where $z$ and $J$ are computed from~\eqref{eq:mixed-1-2-point}.
\end{definition}

\begin{table}[t!]
    \centering
    \renewcommand{\arraystretch}{1.1}
    \footnotesize
    \begin{tabular}{@{} c| c | c @{}}
             & \textbf{Bosons} & \textbf{Fermions} \\
			\hline
			\hline
			& & \\[-2mm]
			$\rho$ & $\begin{aligned}
			\bigotimes^{N_A}_{i=1}\left(\frac{e^{-2\hat{n}_i\ln\coth{r_i}}}{\cosh{r_i}\sinh{r_i}}\right)
			\end{aligned}$ & $\begin{aligned}
			\bigotimes^{N_A}_{i=1}\left(\cos{r_i}\sin{r_i}e^{2\hat{n}_i\ln\tan{r_i}}\right)
			\end{aligned}$
			\\ & \\[-2mm]
			\hline\\[-2mm]
			$\begin{aligned}
			    J\qp
			\end{aligned}
			$ & 
			$\begin{aligned}
			    \bigoplus^{N_A}_{i=1}\left(\begin{array}{cc}
				0 & \cosh{2r_i} \\
				-\cosh{2r_i} & 0
				\end{array}\right)
			\end{aligned}
			$ &
			$\begin{aligned}
			    \bigoplus^{N_A}_{i=1}\left(\begin{array}{cc}
				0 & \cos{2r_i} \\
				-\cos{2r_i} & 0
				\end{array}\right)
			\end{aligned}
			$\\ & \\[-2mm]
			\hline\\[-2mm]
			$\begin{aligned}
			    J\aab
			\end{aligned}
			$ & 
			$\begin{aligned}
			    \bigoplus^{N_A}_{i=1}\left(\begin{array}{cc}
				-\ii \cosh{2r_i} & 0 \\
				0 & \ii\cosh{2r_i}
				\end{array}\right)
			\end{aligned}
			$ &
			$\begin{aligned}
			    \bigoplus^{N_A}_{i=1}\left(\begin{array}{cc}
				-\ii\cos{2r_i} & 0 \\
				0 & \ii\cos{2r_i}
				\end{array}\right)
			\end{aligned}
			$\\ & \\[-2mm]
			\hline\\[-2mm]
			$\begin{aligned}
			    G\qp
			\end{aligned}
			$ & 
			$\begin{aligned}
			    \bigoplus^{N_A}_{i=1}\left(\begin{array}{cc}
				\cosh{2r_i} & 0\\
 				0 & \cosh{2r_i}
				\end{array}\right)
			\end{aligned}
			$ &
			$\begin{aligned}
			    \bigoplus^{N_A}_{i=1}\left(\begin{array}{cc}
				1 & 0 \\
				0 & 1
				\end{array}\right)
			\end{aligned}
			$\\ & \\[-2mm]
			\hline\\[-2mm]
			$\begin{aligned}
			    G\aab
			\end{aligned}
			$ & 
			$\begin{aligned}
			    \bigoplus^{N_A}_{i=1}\left(\begin{array}{cc}
				0 & \cosh{2r_i}\\
 				\cosh{2r_i} & 0
				\end{array}\right)
			\end{aligned}
			$ &
			$\begin{aligned}
			    \bigoplus^{N_A}_{i=1}\left(\begin{array}{cc}
				0 & 1 \\
				1 & 0
				\end{array}\right)
			\end{aligned}
			$\\ & \\[-2mm]
			\hline\\[-2mm]
			$\begin{aligned}
			    \Omega\qp
			\end{aligned}
			$ & 
			$\begin{aligned}
			    \bigoplus^{N_A}_{i=1}\left(\begin{array}{cc}
				0 & 1 \\
				-1 & 0
				\end{array}\right)
			\end{aligned}
			$ &
			$\begin{aligned}
			    \bigoplus^{N_A}_{i=1}\left(\begin{array}{cc}
				0 & \cos{2r_i} \\
				-\cos{2r_i} & 0
				\end{array}\right)
			\end{aligned}
			$\\ & \\[-2mm]
			\hline\\[-2mm]
			$\begin{aligned}
			    \Omega\aab
			\end{aligned}
			$ & 
			$\begin{aligned}
			    \bigoplus^{N_A}_{i=1}\left(\begin{array}{cc}
				0 & -\ii \\
				\ii & 0
				\end{array}\right)
			\end{aligned}
			$ &
			$\begin{aligned}
			    \bigoplus^{N_A}_{i=1}\left(\begin{array}{cc}
				0 & -\ii\cos{2r_i} \\
				\ii\cos{2r_i} & 0
				\end{array}\right)
			\end{aligned}
			$\\ & \\[-2mm]
			\hline\\[-2mm]
			$\begin{aligned}
			    q\qp
			\end{aligned}
			$ & 
			$\begin{aligned}
			    \bigoplus^{N_A}_{i=1}\left(\begin{array}{cc}
				\ln\coth{r_i} & 0 \\
 				0 & \ln\coth{r_i}
				\end{array}\right)
			\end{aligned}
			$ &
			$\begin{aligned}
			    \bigoplus^{N_A}_{i=1}\left(\begin{array}{cc}
				0 & \ln\tan{r_i} \\
 				-\ln\tan{r_i} & 0
				\end{array}\right)
			\end{aligned}
			$\\ & \\[-2mm]
			\hline\\[-2mm]
			$\begin{aligned}
			    q\aab
			\end{aligned}
			$ & 
			$\begin{aligned}
			    \bigoplus^{N_A}_{i=1}\left(\begin{array}{cc}
				0 & \ln\coth{r_i} \\
 				\ln\coth{r_i} & 0
				\end{array}\right)
			\end{aligned}
			$ &
			$\begin{aligned}
			    \bigoplus^{N_A}_{i=1}\left(\begin{array}{cc}
				0 & \ii\ln\tan{r_i} \\
 				-\ii\ln\tan{r_i} & 0
				\end{array}\right)
			\end{aligned}
			$\\ & \\[-2mm]
			\hline\\[-2mm]
			$\begin{aligned}
			    c
			\end{aligned}
			$ & 
			$\begin{aligned}
			    \sum^N_{i=1}\log{(\cosh{r_i}\sinh{r_i})}
			\end{aligned}
			$ &
			$\begin{aligned}
			    -\sum^N_{i=1}\log{(\cos{r_i}\sin{r_i})}
			\end{aligned}
			$
		\end{tabular}
    \ccaption{Mixed Gaussian states}{We list the standard forms of $J$, $G$, $\Omega$, $q$ and $c$ for a mixed Gaussian state $\rho_{(J,z)}=e^{-c-q_{ab}(\hat{\xi}_A-z_A)^a(\hat{\xi}_A-z_A)^b}$. This table matches the one in~\cite{windt2020local}.}
    \label{tab:res-standard}
\end{table}

To derive properties of mixed Gaussian states, it is useful to bring $q_{ab}$ into block diagonal form, which can always be achieved by an appropriate group transformation $M\in\mathcal{G}$. Put differently, there always exist a basis, such that
\begin{align}
    q\qp\begin{cases}
    \bigoplus^N_{i=1}\begin{pmatrix}
    \beta_i & 0\\
    0 & \beta_i
    \end{pmatrix} & \textbf{(bosons)}\\[5mm]
    \bigoplus^N_{i=1}\begin{pmatrix}
    0 & \beta_i\\
    -\beta_i & 0
    \end{pmatrix} & \textbf{(fermions)}\\
    \end{cases}\,,\label{eq:q-explicit}
\end{align}
which follows for bosons from the well-known Williamson's theorem~\cite{williamson1936algebraic} and for fermions from the block-diagonalization of anti-symmetric matrices under orthogonal transformations. This allows us to write
\begin{align}
    \rho_{(J,z)}\equiv\frac{e^{-2\sum^N_{i=1}\beta_i\hat{n}_i}}{\Tr{\exp(-2\sum^N_{i=1}\beta_i\hat{n}_i})}\,,\label{eq:rho-explicit}
\end{align}
where the $2$ in the exponent is convention. From~\eqref{eq:rho-explicit}, we can read off the spectrum as the diagonal form of $\rho$ is
\begin{align}
    \rho_{(J,z)}\equiv \sum_{n_1\dots n_N} \lambda_{n_1}\dots \lambda_{n_N}\ket{n_1,\dots,n_N;v}\bra{n_1,\dots,n_N;v}
\end{align}
where we can read off the eigenvalues from~\eqref{eq:rho-explicit} as
\begin{align}
    \lambda_{n_i}=\begin{cases}
    (1-e^{-2\beta_i})e^{-2\beta_in_i} & \textbf{(bosons)}\\
    (1+e^{-2\beta_i})^{-1}e^{-2\beta_in_i} & \textbf{(fermions)}
    \end{cases}\,.\label{eq:Gaussian-spectrum}
\end{align}
This equation implies that mixed Gaussian states $\rho_{(J,z)}$ have a very particular spectrum constructed from powers of $e^{-\beta_i}$. This type of spectrum is called \emph{Gaussian spectrum} and if we find a mixed state $\rho$ with such spectrum for appropriately chosen $\beta_i$, we can always find a Gaussian state $\rho_{(J,z)}$ and a unitary $U$, such that $\rho=U^{\dagger}\rho_{(J,z)}U$.

For bosons, we compute the $1$-point correlation function to be
\begin{align}
    z^a=\Tr(\rho_{(J,z)}\hat{\xi}^a)\,,
\end{align}
\ie the $z^a$ appearing in the definition $\rho$ is indeed its $1$-point function. For both bosons and fermions, we can compute the $2$-point correlation function
\begin{align}
    C^{ab}_2=\Tr\left(\rho_{(J,z)}(\hat{\xi}-z)^a(\hat{\xi}-z)^b\right)=\frac{1}{2}(G^{ab}+\ii\Omega^{ab})\,,
\end{align}
\ie we perform just the same decomposition as for pure Gaussian states. Using the explicit form~\eqref{eq:rho-explicit} for $\rho$, we can compute the respective bosonic and fermionic covariance matrix to be
\begin{equation}
\begin{aligned}
G&\qp\bigoplus^N_{i=1}\begin{pmatrix}
\coth{\beta} & 0\\
0 & \coth{\beta}
\end{pmatrix}\,, &&\textbf{(bosons)}\\
\Omega&\qp\bigoplus^N_{i=1}\begin{pmatrix}
0 & \tanh{\beta}\\
-\tanh{\beta} & 
\end{pmatrix}\,.&&\textbf{(fermions)}\label{eq:covariance-explicit}
\end{aligned}
\end{equation}
Recall our definition~\eqref{eq:mixed-state-J}, which only is the same for fermions and bosons if the Gaussian state is pure, \ie $J^2=-\id$. We can use the explicit forms of $q$ from~\eqref{eq:q-explicit} and of the covariance matrices in~\eqref{eq:covariance-explicit} to deduce the covariant relation
\begin{align}
    J=\begin{cases}
    -\cot{\Omega q}=-\ii \coth{\ii\Omega q} & \textbf{(bosons)}\\
    +\tan{Gq}=-\ii\tanh{\ii Gq} & \textbf{(fermions)}
    \end{cases}\,,\label{eq:J-mixed-from-q}
\end{align}
where the respective functions are applied as matrix functions, as explained in appendix~\ref{app:tensor-operations}.

We see that the complex structure $J$ of the mixed Gaussian state characterized by $q_{ab}$ is computed from the Lie algebra generator
\begin{align}
    K=\begin{cases}
    \Omega q & \textbf{(bosons)}\\
    Gq & \textbf{(fermions)}
    \end{cases}\,.
\end{align}
The mixed Gaussian state $\rho_{(J,z)}$ becomes pure in the limit where the eigenvalues of $K_q$ diverge, such that the eigenvalues of $J$ approach $\pm \ii$. It is this limit, in which the density operator $\rho_{(J,z)}$ becomes a projector onto the ground state $\ket{J,z}$ of $\hat{Q}$.

A mixed state complex structure $J$ is characterized by the property that its eigenvalues appear in conjugate pairs $\pm \ii \lambda_i$ with $\lambda_i\in[0,\infty)$ for bosons and $\lambda_i\in[0,1]$ for fermions. The choice of a mixed Gaussian state therefore corresponds to equipping the classical phase space with a metric $G$ and a symplectic form $\Omega$ that potentially violate the Kähler condition~\eqref{eq:Kaehler-compatibility}, \ie they do not give rise to a proper linear complex structure $J$ with $J^2=-\id$. Instead, the more the eigenvalues of $J$ defined in~\eqref{eq:mixed-state-J} depart from $\pm\ii$, the more mixed will the corresponding state $\rho_{(J,z)}$ be. From a geometric perspective, we can therefore think of mixed Gaussian states as equipping the classical phase space with specifically incompatible Kähler structures $(G,\Omega,J)$, where we have $-J^2\geq \id $ for bosons and $-J^2\leq \id$ for fermions. It is exactly the intersection of these two sectors that describes pure Gaussian states. Compatible Kähler structures $(G,\Omega,J)$ in this set can describe both, a bosonic or fermionic Gaussian state. Interestingly, the two sectors (mixed bosonic Gaussian states vs. mixed fermionic Gaussian states) are related under the duality transformation
\begin{align}
    J\quad\Leftrightarrow\quad -J^{-1}\,,
\end{align}
which maps mixed bosonic complex structures onto fermionic ones and vice versa\footnote{For fermions, there exist complex structures $J$ with vanishing eigenvalues that are mapped to infinity under this duality. This relates a maximally mixed fermionic mode to a maximally mixed bosonic mode, which only makes sense in the limit, as the bosonic Hilbert space is infinite dimensional.}. This relation can be used to relate the spectrum of bosonic and fermionic mixed states (and thus their entanglement) in supersymmetric systems~\cite{jonsson2021entanglement}.

\begin{example}
We consider a single bosonic mode, for which the most general positive-definite quadratic Hamiltonian is characterized by
\begin{align}
 \small \hspace{-1mm}  q\qp\beta\left(\begin{array}{cc}
     \cosh{\rho}-\cos{\phi}\sinh{\rho} & -\sin{\phi}\sinh{\rho}\\
     -\sin{\phi}\sinh{\rho} & \cosh{\rho}+\cos{\phi}\sinh{\rho}
    \end{array}\right)\hspace{-2mm}\\
 \small  \aab\beta\left(\begin{array}{cc}
     -e^{\ii\phi}\sinh{\rho} & \cosh{\rho}\\
     \cosh{\rho} & e^{-\ii\phi}\sinh{\rho}
    \end{array}\right)\,.
\end{align}
Using formula~\eqref{eq:J-mixed-from-q}, we can deduce the respective mixed state complex structure $J$ and the associated covariance matrix $G$ to be given by
\begin{align}
    J&\qp\coth{\beta}\begin{pmatrix}
    -\sin{\phi}\sinh{\rho} & \cos{\phi}\sinh{\rho}+\cosh{\rho} \\
    \cos{\phi} \sinh{\rho} -\cosh{\rho} & \sin{\phi}\sinh{\rho}
    \end{pmatrix}\nonumber\\
    &\aab\coth{\beta}\begin{pmatrix}
    -\ii \cosh{\rho} & \ii e^{\ii\phi}\sinh{\rho}\\
    -\ii e^{-\ii\phi}\sinh{\rho} & \ii\cosh{\rho}
    \end{pmatrix}\,,\\
    \hspace{-1mm}G&\qp\coth{\beta}\begin{pmatrix}
    \cosh{\rho}+\cos{\phi}\sinh{\rho} & \sin{\phi}\sinh{\rho}\\
    \sin{\phi}\sinh{\rho} & \cosh{\rho}-\cos{\phi}\sinh{\rho}
    \end{pmatrix}\nonumber\\
    &\aab\coth{\beta}\begin{pmatrix}
    e^{\ii\phi}\sinh{\rho} & \cosh{\rho}\\
    \cosh{\rho} & -e^{-\ii\phi}\sinh{\rho}
    \end{pmatrix}\,,
\end{align}
For a single mode, covariance matrix and complex structure are proportional to the ones of a pure state, but rescaled with $\coth{\beta}$, which approaches $1$ for $\beta\to\infty$. For mixed states of several modes, each eigenvalue pair in $J$ is appropriately rescaled by a factor $\coth{\beta_i}$. Requiring that $\hat{Q}$ must be bounded from below implies that $\beta\in(0,\infty)$, such that the mixed Gaussian state becomes more and more mixed in the limit $\beta\to0$, but there is no maximally mixed state in an infinite Hilbert space. The manifold of mixed bosonic Gaussian states of a single mode (assuming $z^a=0$ here) is diffeomorphic to a three-dimensional half-space, \ie $\mathbb{R}^2\times\mathbb{R}_{\geq0}$, where the boundary plane represents pure states with $\beta\to\infty$.
\end{example}

\begin{example}
We consider a single fermionic mode. The most general quadratic Hamiltonian is here given by
\begin{align}
    q\equiv\begin{pmatrix}
    0 & \beta\\
    -\beta & 0
    \end{pmatrix}\,.
\end{align}
The resulting mixed Gaussian state $\rho=e^{\hat{Q}}$ is characterized by the following complex structure $J$ and covariance matrix $\Omega$:
\begin{align}
    J&\qp\tanh{\beta}\begin{pmatrix}
    0 & 1\\
    -1 & 0
    \end{pmatrix}\aab\tanh{\beta}\begin{pmatrix}
    -\ii & 0\\
    0 & \ii
    \end{pmatrix}\,,\\
    \Omega_\pm&\qp\tanh{\beta}\begin{pmatrix}
    0 & 1\\
    -1 & 0
    \end{pmatrix}\aab\tanh{\beta}\begin{pmatrix}
    0 & -\ii\\
    \ii & 0
    \end{pmatrix}\,.
\end{align}
In contrast to bosons, we can choose the parameter $\beta\in\mathbb{R}$, as the respective $\hat{Q}$ will always be bounded from below. Choosing $\beta=0$ corresponds to the maximally mixed state in the fermionic Hilbert space with $J=0$. This shows that the family of mixed Gaussian states connects the two parity sectors of pure Gaussian states, as every mixed Gaussian state can be connected to the maximally mixed state by rescaling $J\to 0$. The manifold of mixed fermionic Gaussian states of a single bosonic mode is diffeomorphic to an interval, \ie $[-1,1]$, where the boundary points represent the two fermionic Gaussian states of different parity.
\end{example}

The geometry of mixed Gaussian state is more intricate than the one of pure Gaussian states. Generically, all eigenvalue pairs $\pm\ii\lambda_i$ of $J$ will be different, such that the subgroup
\begin{align}
    \mathrm{Sta}_J=\{M\in\mathcal{G}\,|\, MJM^{-1}=J\}
\end{align}
is isomorphic to $\mathrm{U}(1)^{\otimes N}$. More specifically, if we have $s$ distinct eigenvalue pairs $\pm \ii\lambda_i$ with degeneracies $d_i$, the stabilizer subgroup of $J$ is isomorphic to
\begin{align}
    \mathrm{Sta}_J=\bigoplus^s_{i=1}\mathrm{U}(d_i)\,,
\end{align}
such that $\sum^s_{i=1}d_i=N$. One can repeat the same arguments as in section~\ref{sec:symmetric-spaces} to find that $\mathcal{M}_{b/f}=\mathcal{G}/\mathrm{Sta}_J$, which consists of all mixed Gaussian states characterized by the respective spectrum of $\lambda_i$ and their degeneracies. The full manifold can be foliated by $\mathcal{M}_{b/f}$ to form the manifold $\mathcal{M}_{\mathrm{mixed}}$ of mixed Gaussian states with
\begin{align}
    \dim\mathcal{M}_{\mathrm{mixed}}=\begin{cases}
    N(2N+1)+2N & \textbf{(bosons)}\\
    N(2N-1) & \textbf{(fermions)}
    \end{cases}\,.
\end{align}
This manifold has a complicated boundary consisting of various lower dimensional surfaces, corners etc. In particular, pure Gaussian states form a small corner of this manifold, just like pure quantum states form a small corner of the convex set of mixed states.

\subsubsection{Characteristic function}
We introduce the characteristic function of an operator $\mathcal{O}$ given by
\begin{align}
    \chi(w)=\begin{cases}
    \Tr(\mathcal{O}e^{-\ii w_a\hat{\xi}^a}) & \textbf{(bosons)}\\
    \Tr(\mathcal{O}e^{-w_a\hat{\xi}^a}) & \textbf{(fermions)}\\
    \end{cases}\label{eq:characteristic-function}
\end{align}
which is defined for both bosonic and fermionic systems. For fermionic systems, $w_a$ is Grassmann valued, which anti-commutes with itself and with linear operators $\hat{\xi}^a$, \ie we have $\{w_a,w_b\}=\{w_a,\hat{\xi}^b\}=0$. Note that $e^{-w_a\hat{\xi}^a}$ behaves similar to a fermionic displacement operator from~\eqref{eq:displacement-operator}, such that $e^{w_a\hat{\xi}^a}\hat{\xi}^ae^{-w_a\hat{\xi}^a}=\hat{\xi}^a+G^{ab}w_b$ with $w_a$ being Grassmann-valued.

Let us further discuss an important subtlety about traces of $e^{\widehat{K}}$ for fermions. If we consider the fermionic operator $e^{\widehat{w}}$ satisfying $e^{-\widehat{w}}\hat{\xi}^ae^{\widehat{w}}=\hat{\xi}^a+G^{ab}w_b$, we find
\begin{align}
    \Tr(e^{-\widehat{w}}e^{\widehat{K}}e^{\widehat{w}})&=\Tr(e^{\widehat{K}})e^{w_a \left(\tanh{\frac{K}{2}}\right)^a{}_bG^{bc}w_c}\,,&& \textbf{(fermions)}\label{eq:fermions-trace-Grassmann}
\end{align}
which can be derived by block-diagonalizing $K$ and then expanding $e^{\pm\widehat{w}}$ for individual degrees of freedom.

With this in hand, we can compute the characteristic function $\chi(w)$ of general mixed Gaussian states.

\begin{proposition}
The characteristic function of a mixed Gaussian state $\rho_{(J,z)}$ is given by
\begin{align}
    \chi(w)=\begin{cases}
    e^{-\frac{1}{4}w_aG^{ab}w_b-\ii w_az^a} & \normalfont{\textbf{(bosons)}}\\
    e^{-\frac{\ii}{4}w_a\Omega^{ab}w_b} & \normalfont{\textbf{(fermions)}}
    \end{cases}\,,\label{eq:characteristic-functions}
\end{align}
where $G$ and $\Omega$ are the respective covariance matrices.
\end{proposition}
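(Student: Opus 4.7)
The plan is to leverage the manifest covariance of the claimed formula under symplectic/orthogonal transformations, reduce to a normal-mode basis, perform an elementary single-mode calculation, and reassemble. First, the trace defining $\chi(w)$ is invariant under simultaneous conjugation of $\rho_{(J,z)}$ and $e^{\mp\ii w_a\hat\xi^a}$ by a Gaussian unitary $\mathcal{U}(M,0)$ with $M\in\mathcal{G}$; under such a transformation $q$ can be brought to the block-diagonal standard form of \eqref{eq:q-explicit}, while $w$ transforms by $M^{-\intercal}$ so that the quadratic forms $w_aG^{ab}w_b$ and $w_a\Omega^{ab}w_b$ on the RHS remain unchanged. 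For bosons the displacement $z$ is absorbed by writing $\rho_{(J,z)}=\mathcal{D}(z)\rho_{(J,0)}\mathcal{D}(z)^\dagger$ and using $\mathcal{D}(z)^\dagger e^{-\ii w_a\hat\xi^a}\mathcal{D}(z)=e^{-\ii w_a z^a}e^{-\ii w_a\hat\xi^a}$, which simply extracts the $e^{-\ii w_a z^a}$ prefactor via cyclicity of the trace. Hence it suffices to verify the formula in a normal-mode basis with $z=0$.

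In this basis $\rho$ factorizes as a tensor product of single-mode states and the exponential in $\chi(w)$ becomes a product of commuting single-mode exponentials. For each bosonic mode one has $\rho_i=(1-e^{-2\beta_i})e^{-2\beta_i\hat n_i}$; normal-ordering the displacement via \eqref{eq:BCH-displacement} and summing the resulting Fock-basis geometric series yields the standard thermal characteristic function $\chi_i=\exp\!\bigl(-\tfrac14\coth(\beta_i)(w_{q_i}^2+w_{p_i}^2)\bigr)$, which matches the single-mode block of $\exp(-\tfrac14 w_aG^{ab}w_b)$ after reading off $G$ from \eqref{eq:covariance-explicit}. For each fermionic mode, since $w_a$ is Grassmann-valued the exponential $e^{-w_{q_i}\hat q_i - w_{p_i}\hat p_i}$ truncates at order two in the generators, and evaluating the trace against $\rho_i=(1+e^{-2\beta_i})^{-1}e^{-2\beta_i\hat n_i}$ in the two-dimensional Fock space together with $\Tr(\rho_i\hat q_i\hat p_i)=\tfrac{\ii}{2}\tanh\beta_i$ gives $\chi_i=\exp\!\bigl(-\tfrac{\ii}{2}\tanh(\beta_i)\,w_{q_i}w_{p_i}\bigr)$, matching the single-mode block of $\exp(-\tfrac{\ii}{4}w_a\Omega^{ab}w_b)$ by \eqref{eq:covariance-explicit}.

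Reassembling over modes requires only that the single-mode factors commute: for bosons the exponents are $c$-numbers, while for fermions each $w_{q_i}w_{p_i}$ is an even element of the Grassmann algebra and therefore commutes with contributions from other modes. Since in the normal-mode basis $G$ (bosons) and $\Omega$ (fermions) are block-diagonal across modes, $\prod_i\chi_i$ assembles into $\exp(-\tfrac14 w_a G^{ab}w_b)$ or $\exp(-\tfrac{\ii}{4}w_a\Omega^{ab}w_b)$ respectively, and combined with the bosonic displacement prefactor from the first step this yields \eqref{eq:characteristic-functions}. The main technical obstacle is the fermionic Grassmann bookkeeping in the single-mode calculation---tracking signs when anticommuting $w_a$ through $\hat\xi^b$ and when expanding the Fock-space trace---but the identity \eqref{eq:fermions-trace-Grassmann} established earlier already encodes precisely this manipulation and can be invoked to bypass the explicit expansion, providing a cleaner basis-independent route to the same conclusion.
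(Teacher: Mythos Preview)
Your proof is correct and takes a genuinely different route from the paper. The paper's argument stays covariant throughout: it writes $\chi(w)=\Tr(e^{\widehat{K}}e^{\widehat{w}})/Z$ (after shifting out $z$ for bosons), applies the Baker--Campbell--Hausdorff identity~\eqref{eq:BCH-lin-qu} to combine the exponentials, completes the square in $\hat\xi$, and then identifies the resulting bilinear form $\tilde B^{ab}$ (plus, for fermions, the extra piece coming from~\eqref{eq:fermions-trace-Grassmann}) with $-\tfrac14 G^{ab}$ or $-\tfrac{\ii}{4}\Omega^{ab}$ via matrix-function simplifications involving~\eqref{eq:J-mixed-from-q}. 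Your approach instead exploits covariance only once, to pass to the normal-mode basis of~\eqref{eq:q-explicit}, and then reduces everything to an elementary single-mode thermal-state calculation followed by a tensor product.

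What each buys: your argument is shorter and requires no heavy BCH machinery---the bosonic single-mode result is textbook, and the fermionic one is a two-line Grassmann expansion. The paper's argument, by contrast, never chooses a basis and showcases the covariant BCH identity~\eqref{eq:BCH-lin-qu} as a computational tool, which is thematically aligned with the manuscript's emphasis on basis-independent derivations. One small remark: your closing suggestion that~\eqref{eq:fermions-trace-Grassmann} alone ``encodes precisely this manipulation'' somewhat understates what is needed---the paper does invoke that identity, but only after the BCH step and square-completion have reshaped the trace into the conjugated form $\Tr(e^{-\ii q_{ab}(\hat\xi-y)^a(\hat\xi-y)^b})$; it is not a direct replacement for your single-mode expansion but a distinct covariant strategy.
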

\begin{proof}
We consider bosons and fermions separately.\\
\textbf{Bosons.} We have
\begin{align}
\begin{split}
    \chi(w)&=\Tr(\tfrac{e^{-(\hat{\xi}-z)^aq_{ab}(\hat{\xi}-z)^b}}{Z}e^{-\ii w_a\hat{\xi}^a})\\
    &=\Tr(\tfrac{e^{-\hat{\xi}^aq_{ab}\hat{\xi}^b}}{Z}e^{-\ii w_a(\hat{\xi}+z)^a})\,,
\end{split}
\end{align}
where we used the displacement operators satisfying $\mathcal{D}^\dagger(z)\hat{\xi}^a\mathcal{D}(z)=(\hat{\xi}+z)^a$ to apply a shift to the whole expression without changing its trace. We can define $K=-2\ii \Omega q$, such that $\widehat{K}=-q_{ab}\hat{\xi}^a\hat{\xi}^b$ which is Hermitian and thus represents a complexified algebra element. This allows us to write
\begin{align}
\begin{split}
    \chi(w)&=\tfrac{e^{-\ii w_az^a}}{Z}\Tr(e^{\widehat{K}}e^{\widehat{w}})\\
    &=\tfrac{e^{-\ii w_az^a}}{Z}\Tr(e^{\widehat{K}+\widehat{\eta}+w_aB^{ab}w_b})
\end{split}
\end{align}
where we applied~\eqref{eq:BCH-lin-qu}. The exponent reads
\begin{align}
    -q_{ab}\hat{\xi}^a\hat{\xi}^b-\ii \eta_a\hat{\xi}^a+w_aB^{ab}w_b\,,
\end{align}
where we can complete the square to rewrite it as
\begin{align}
    -q_{ab}(\hat{\xi}-y)^a(\hat{\xi}-y)^b+\underbrace{q_{ab}y^ay^b+w_aB^{ab}w_b}_{=:w_a\tilde{B}^{ab}w_b}\,,
\end{align}
where we have $y^a=\tfrac{\ii}{2}Q^{ab}\eta_b$ with $Q^{ab}=(q^{-1})^{ab}$, which is invertible for a mixed Gaussian state. Using the explicit form~\eqref{eq:eta} of $\eta$ in terms of $w$, we find
\begin{align}
\begin{split}
    \chi(w)&=e^{-\ii w_az^a+w_a \tilde{B}^{ab}w_b}\Tr(\tfrac{e^{-q_{ab}(\hat{\xi}-y)^a(\hat{\xi}-y)^b}}{Z})\\
    &=e^{-\ii w_az^a+w_a \tilde{B}^{ab}w_b}\,,\label{eq:chiBtilde}
\end{split}
\end{align}
where we used that the shift in $y^a$ does not change the trace. More precisely, we argue that
\begin{align}
\begin{split}
    \Tr(\tfrac{e^{-q_{ab}(\hat{\xi}-y)^a(\hat{\xi}-y)^b}}{Z})&=\Tr(\mathcal{D}^{-1}\tfrac{e^{-q_{ab}(\hat{\xi}-y)^a(\hat{\xi}-y)^b}}{Z}\mathcal{D})\\
    &=\Tr(\tfrac{e^{-q_{ab}(\hat{\xi})^a(\hat{\xi})^b}}{Z})\\
    &=\Tr{\rho}=1\,,
\end{split}
\end{align}
where we used that the operator $\mathcal{D}=e^{-\ii y^a\omega_{ab}\hat{\xi}^b}$ with $\mathcal{D}^{-1}\hat{\xi}^a\mathcal{D}=\hat{\xi}^a+y^a$ does not change the trace\footnote{Note that $y^a$ is a vector in $V_{\mathbb{C}}$, such that $\mathcal{D}$ is not a unitary displacement operator satisfying $\mathcal{D}^\dagger=\mathcal{D}^{-1}$. However, our argument does not require this.} (which will turn out to be not true for fermions!). The new bilinear form $\tilde{B}$ from~\eqref{eq:chiBtilde} is
\begin{align}
\begin{split}
\tilde{B}^{ab}&=-\frac{1}{4}\left(\tfrac{K}{e^{K}-\id}Q\tfrac{K^\intercal}{e^{K^\intercal}-\id}-\ii\tfrac{K-\sinh{K}}{\id-\cosh{K}}\Omega\right)^{ab}\\
&=-\frac{\ii}{4}\left(\tfrac{2K}{(e^{K}-\id)(e^{-K}-\id)}-\tfrac{K-\sinh{K}}{\id-\cosh{K}}\right)^a_{\hspace{2mm}c}\Omega^{cb}\,,\\
&=\frac{\ii}{4}\coth(-\ii \Omega q)^a{}_c\Omega^{cb}=\tfrac{1}{4}(J\Omega)^{ab}=-\tfrac{1}{4}G^{ab}
\end{split}
\end{align}
where we used $KQ=-2\ii\Omega$ and $\Omega K^\intercal=-K\Omega$ in the second step, combined the functions to find $\coth(K/2)$ and then used the expressions~\eqref{eq:J-mixed-from-q} to express everything in terms of $J$ and eventually $G$.\\
\textbf{Fermions.} The derivation for fermions follows the one for bosons closely with $\widehat{K}=-\ii q_{ab}\hat{\xi}^a\hat{\xi}^b$, but we now have $z^a=0$ and $w_a$ is a Grassmann number. We can largely follow the same strategy, but need to replace $\ii w_a\to w_a$, $q_{ab}\to \ii q_{ab}$ and $Q^{ab}\to -\ii Q^{ab}$. With this, we arrive at the analogue of~\eqref{eq:chiBtilde} given by
\begin{align}
    \chi(w)=e^{w_a\tilde{B}^{ab}w_b}\Tr(\tfrac{e^{-\ii q_{ab}(\hat{\xi}-y)^a(\hat{\xi}-y)^b}}{Z})\,,
\end{align}
where we use $K=-2iGq$ and $QK^\intercal=2\ii G$ to get
\begin{align}
\begin{split}
\tilde{B}^{ab}&=\frac{1}{4}\left(\tfrac{K}{e^{K}-\id}\ii Q\tfrac{K^\intercal}{e^{K^\intercal}-\id}+\tfrac{K-\sinh{K}}{\id-\cosh{K}}G\right)^{ab}\\
&=\frac{1}{4}\left(\tfrac{-2K}{(e^{K}-\id)(e^{-K}-\id)}+\tfrac{K-\sinh{K}}{\id-\cosh{K}}\right)^a_{\hspace{2mm}c}G^{cb}\\
&=\frac{1}{4}\coth(\tfrac{K}{2})^a{}_cG^{b}
\end{split}
\end{align}
As discussed around in the context of~\eqref{eq:fermions-trace-Grassmann}, we have
\begin{align}
\begin{split}
    \Tr(\tfrac{e^{-\ii q_{ab}(\hat{\xi}-y)^a(\hat{\xi}-y)^b}}{Z})&=e^{y^ag_{ab}\left(\tanh{\frac{K}{2}}\right)^b{}_cy^c}\\
    &=e^{-\frac{1}{2}w_a\sinh^{-1}(K)^a{}_cG^{cb}w_b}
\end{split}
\end{align}
where we used $y^a=\tfrac{1}{2}Q^{ab}\eta_b=\tfrac{1}{2}Q^{ab}\left(\tfrac{K^\intercal}{e^{K^\intercal}-\id}\right)_b^{\hspace{2mm}c}\,w_c$.
Consequently, we can combine the different terms to find $\chi(w)=e^{w_a(\tilde{B}+\tilde{C})^{ab}w_b}$ with
\begin{align}
\begin{split}
    (\tilde{B}+\tilde{C})^{ab}&=\frac{1}{4}(\coth{\tfrac{K}{2}}-2\sinh^{-1}{K})^a{}_cG^{cb}\\
    &=\frac{1}{4}\tanh(-\ii Gq)^a{}_cG^{cb}\\
    &=\tfrac{1}{4}(-\ii J G)^{ab}=-\tfrac{\ii}{4}\Omega^{ab}\,,
\end{split}
\end{align}
where we followed the same strategy as for bosons to finally arriv at $\chi(w)$ from~\eqref{eq:characteristic-functions}.
\end{proof}

Characteristic functions are closely related to quasi-probability distribution on the classical phase space, which can be used as an alternative description of the quantum theory. Translation recipes to describe Gaussian states with such quasi-probability distributions can be found in~\cite{windt2020local}. However, phase space distributions can also be used to describe general quantum states and allow for more efficient calculations in certain settings~\cite{corney2004gaussian,joseph2019entropy}, such as boson sampling.

\subsubsection{Wick's theorem}
In the previous section, we derived the representation of mixed Gaussian states as characteristic functions $\chi(w)$ defined on the dual phase space. This will enable us to prove Wick's theorem for mixed Gaussian states.

\begin{proposition}
Given a mixed Gaussian state $\rho_{(J,z)}$, a general $n$-point function $C_n^{a_1\dots a_n}$ is computed in the same way as for pure Gaussian states, as explained in proposition~\ref{prop:pure-Wicks}. The $2$-point function $C^{ab}_2=\frac{1}{2}(G^{ab}+\ii\Omega^{ab})$ is related to the mixed state complex structure $J$ via
\begin{equation}
    \begin{aligned}
    G^{ab}&=-J^a{}_c\Omega^{cb}&&\normalfont{\textbf{(bosons)}}\\
    \Omega ^{ab}&=J^a{}_cG^{cb}&&\normalfont{\textbf{(fermions)}}\\
    \end{aligned}
\end{equation}
where $\Omega$ for bosons and $G$ for fermions is fixed.
\end{proposition}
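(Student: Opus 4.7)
The plan is to use the characteristic function $\chi(w)$ from the preceding proposition as a generating functional. Because $\chi$ is a Gaussian exponential of a quadratic in $w$, the standard combinatorial identity for derivatives of a Gaussian will automatically produce the sum-over-pairings formula, so the only real task is to set up the right generating object whose multi-derivatives reproduce the \emph{ordered} product $\hat{\xi}^{a_1}\cdots\hat{\xi}^{a_n}$ rather than the symmetric one that $\chi$ itself encodes. For bosons I first reduce to $z=0$ by conjugating $\rho$ with $\mathcal{D}(z)$: this shifts $\hat{\xi}^a\to\hat{\xi}^a+z^a$ and turns $C_n^{a_1\cdots a_n}$ for $\rho$ into $\Tr(\rho'\hat{\xi}^{a_1}\cdots\hat{\xi}^{a_n})$ for the translated state $\rho'=\mathcal{D}^\dagger(z)\rho\mathcal{D}(z)$, whose $G$, $\Omega$, $J$ agree with those of $\rho$; for fermions $z=0$ holds by assumption.

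Next I introduce the multi-parameter object
\begin{equation*}
\Xi(w_1,\ldots,w_n) = \Tr\!\bigl(\rho\, e^{-\ii w_{1,a}\hat{\xi}^a}\cdots e^{-\ii w_{n,a}\hat{\xi}^a}\bigr)
\end{equation*}
for bosons, and its Grassmann-valued analogue with $-\ii w_{i,a}\hat{\xi}^a$ replaced by $-w_{i,a}\hat{\xi}^a$ for fermions, so that
\begin{equation*}
\Tr\!\bigl(\rho\,\hat{\xi}^{a_1}\cdots\hat{\xi}^{a_n}\bigr) = \ii^{\,n}\,\partial_{w_{1,a_1}}\cdots\partial_{w_{n,a_n}}\Xi\big|_{w=0}.
\end{equation*}
Because $[-\ii w_{i,a}\hat{\xi}^a,-\ii w_{j,b}\hat{\xi}^b] = -\ii w_{i,a}\Omega^{ab}w_{j,b}$ is central for bosons (and the fermionic analogue $\{-w_{i,a}\hat{\xi}^a,-w_{j,b}\hat{\xi}^b\}=w_{i,a}G^{ab}w_{j,b}$ is likewise a c-number), iterated application of the elementary BCH identity $e^A e^B = e^{A+B}e^{[A,B]/2}$ collapses $\Xi$ into a product of a purely quadratic Gaussian prefactor in the $w_i$'s and $\chi(w_1+\cdots+w_n)$, which is itself Gaussian by the preceding proposition. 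Hence $\Xi$ is Gaussian in the joint variable $(w_1,\ldots,w_n)$.

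The remainder is the universal Gaussian-derivative identity: all derivatives of $e^{Q(w)}$ at $w=0$ vanish for odd total order, and for even order $n$ equal the sum over perfect pairings of products of single two-variable second derivatives. Evaluating such a second derivative from the closed form of $\Xi$ yields, after multiplication by $\ii^{\,n}$, exactly $C_2^{a_ia_j}=\tfrac{1}{2}(G^{a_ia_j}+\ii\Omega^{a_ia_j})$ per pair; restricting the pairings to those with $\sigma(2k-1)<\sigma(2k)$ reproduces proposition~\ref{prop:pure-Wicks}. The case $n=2$ then immediately identifies $C_2^{ab}=\tfrac{1}{2}(G^{ab}+\ii\Omega^{ab})$, and definition~\eqref{eq:mixed-state-J} gives $J=-G\omega$ for bosons and $J=\Omega g$ for fermions as stated. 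The main obstacle I anticipate is the Grassmann bookkeeping in the fermionic case: confirming that left-differentiation through the pairings exactly reproduces the parity sign $|\sigma|=\mathrm{sgn}(\sigma)$, and that the BCH step survives with anticommuting coefficients. Both are true, but require careful sign tracking rather than any new ingredient.
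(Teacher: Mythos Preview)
Your approach is correct and close in spirit to the paper's, but the implementation differs in a meaningful way. The paper differentiates the \emph{single-variable} characteristic function $\chi(w)$ and therefore obtains only the totally symmetrized (bosons) or antisymmetrized (fermions) correlators $C_n^{(a_1\cdots a_n)}$ and $C_n^{[a_1\cdots a_n]}$; it then restores the ordered product by commuting or anticommuting adjacent $\hat{\xi}$'s, which supplies the missing $\ii\Omega^{ab}$ or $G^{ab}$ pieces to complete $C_2^{ab}$. You instead introduce $n$ independent parameters $w_1,\ldots,w_n$ and use BCH to write $\Xi$ as a single Gaussian in the joint variable, so that the ordered product and the full $C_2^{ab}=\tfrac12(G^{ab}+\ii\Omega^{ab})$ fall out in one step. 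Your route is more direct and avoids the reordering argument; the paper's route stays with one parameter and trades that final reshuffling for simpler bookkeeping. One small correction: in the fermionic case the operators $A_i=-w_{i,a}\hat{\xi}^a$ are \emph{even} in the Grassmann-extended algebra (odd times odd), so the relevant identity is their \emph{commutator} $[A_i,A_j]=-w_{i,a}G^{ab}w_{j,b}$ being central, not the anticommutator you wrote; this is exactly what makes the bosonic BCH formula $e^Ae^B=e^{A+B}e^{[A,B]/2}$ applicable and is presumably what you intended.
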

\begin{proof}
We recall the definition of the characteristic function~\eqref{eq:characteristic-function}, where $w_a$ is Grassmann-valued for fermions. If we define the derivative operator
\begin{align}
    \mathcal{F}_w^{a_1\dots a_n}=\begin{cases}\left(\tfrac{\ii\partial}{\partial w_{a_1}}\right)\cdots \left(\tfrac{\ii\partial}{\partial w_{a_n}}\right) & \textbf{(bosons)}\\
    \left(\tfrac{\partial}{\partial w_{a_1}}\right)\cdots \left(\tfrac{\partial}{\partial w_{a_n}}\right) & \textbf{(fermions)}\\
    \end{cases}\,,
\end{align}
we find
\begin{align}
    \mathcal{F}_w^{a_1\dots a_n}\,\chi(w)\big|_{w=0}=\begin{cases}
    \Tr\left(\rho\hat{\xi}^{(a_1}\dots\hat{\xi}^{a_n)}\right) & \textbf{(bosons)}\\
    \Tr\left(\rho\hat{\xi}^{[a_1}\dots\hat{\xi}^{a_n]}\right) & \textbf{(fermions)}
    \end{cases}\,,\label{eq:derivatives}
\end{align}
where $\hat{\xi}^{(a_1}\dots\hat{\xi}^{a_n)}=\mathrm{SYM}(\hat{\xi}^{a_1}\dots\hat{\xi}^{a_n})$ represents the totally symmetrized and $\hat{\xi}^{[a_1}\dots\hat{\xi}^{a_n]}=\mathrm{ASYM}(\hat{\xi}^{a_1}\dots\hat{\xi}^{a_n})$ represents the totally anti-symmetrized tensor, \eg $\hat{\xi}^{(a}\hat{\xi}^{b)}=\frac{1}{2}(\hat{\xi}^{a}\hat{\xi}^{b}+\hat{\xi}^{b}\hat{\xi}^{a})$ and $\hat{\xi}^{[a}\hat{\xi}^{b]}=\frac{1}{2}(\hat{\xi}^{a}\hat{\xi}^{b}-\hat{\xi}^{b}\hat{\xi}^{a})$ etc.\\
When we apply~\eqref{eq:derivatives} to the characteristic functions derived in~\eqref{eq:characteristic-functions}, we find that $C_n^{(a_1\dots a_n)}$ for bosons and $C_n^{[a_1\dots a_n]}$ satisfy Wick's theorem for $C_2^{(ab)}=\frac{1}{2}G^{ab}$ and $C_2^{[ab]}=\frac{\ii}{2}\Omega^{ab}$, respectively. Note that for bosons, the displacement of $z^a$ is automatically removed from $C_n^{(a_1\dots a_n)}$ by the linear term $-\ii w_az^a$ in the exponential. Finally, if we are interested in computing the regular (\ie neither symmetrized nor anti-symmetrized) $n$-point correlation functions, we just need to commute or anti-commute the respective terms of $\hat{\xi}^{a_i}$ in the symmetrized or anti-symmetrized expressions, which will yield additional commutators $\ii\Omega^{ab}$ for bosons and anti-commutators $G^{ab}$ for bosons, such that $C_n^{a_1\dots a_n}$ will satisfy Wick's theorem in the same way as pure states with $2$-point function $C_2=\frac{1}{2}(G^{ab}+\ii\Omega^{ab})$.
\end{proof}

\begin{table*}[t]
	\ccaption{Gaussian states}{This table summarizes and compares our methods to describe bosonic and fermionic Gaussian states using Kähler structures covered in section~\ref{sec:quantum-Gaussian}.}
	\label{tab:summary-table-II}
	\footnotesize
	\def\arraystretch{1.7}
	\begin{tabular}{r|C{6.4cm}|C{6.4cm}}
		\textbf{structure} & \textbf{bosons} & \textbf{fermions}\\
		\hline
		\hline
		$1$-point function & $z^a=\braket{\psi|\hat{\xi}^a|\psi}=\Tr (\rho\hat{\xi}^a)$ & Requirement: $z^a=\braket{\psi|\hat{\xi}^a|\psi}=\Tr (\rho\hat{\xi}^a)=0$\\
		$2$-point function & \makecell[c]{$C^{ab}_2=\braket{\psi|(\hat\xi-z)^a(\hat\xi-z)^b|\psi}$\\ \,\,\,\,\,\,\,$=\Tr\rho(\hat\xi-z)^a(\hat\xi-z)^b$} & $C^{ab}_2=\braket{\psi|\hat{\xi}^a\hat{\xi}^b|\psi}=\Tr(\rho\hat{\xi}^a\hat{\xi}^b)$\\
		decomposition & \multicolumn{2}{c}{$C_2^{ab}=\frac{1}{2}(G^{ab}+\ii \Omega^{ab})$}\\
		\hline
		covariance matrix $\Gamma^{ab}$ & \makecell[c]{$\Gamma^{ab}=G^{ab}=\braket{\psi|\hat{\xi}^a\hat{\xi}^b+\hat{\xi}^b\hat{\xi}^a|\psi}-2z^az^b$\rule[0em]{0pt}{1.5em}\\ \hspace{1.38cm}$=\Tr \rho(\hat{\xi}^a\hat{\xi}^b+\hat{\xi}^b\hat{\xi}^a)-2z^az^b$ } & \makecell[c]{$\Gamma^{ab}=\Omega^{ab}=\braket{\psi|\hat{\xi}^a\hat{\xi}^b-\hat{\xi}^b\hat{\xi}^a|\psi}$\rule[0em]{0pt}{1.5em}\\ \hspace{1.29cm} $=\Tr \rho(\hat{\xi}^a\hat{\xi}^b-\hat{\xi}^b\hat{\xi}^a)$}\\
		relation to $J$ & $\Gamma^{ab}=-J^a{}_c\Omega^{cb}$ & $\Gamma^{ab}=J^a{}_cG^{cb}$\\
		\hline
		\hline
	    pure Gaussian $\ket{J,z}$ & \multicolumn{2}{c}{$\frac{1}{2}(\delta^a{}_b-\ii J^a{}_b)(\hat{\xi}^b-z^b)\ket{J,z}=0$ with $J^2=-\id$}\\
	    dimension & $N(N+1)$ plus $2N$ displacements & $N(N-1)$\\
	    \hline
		covariant ladder operators & \multicolumn{2}{c}{$\hat{\xi}^a_\pm=\frac{1}{2}(\delta^a{}_b\mp\ii J^a{}_b)(\hat{\xi}^b-z^b)$ with $J^a{}_b\hat{\xi}^b_\pm=\pm\ii\hat{\xi}^a_\pm$}\\
	    & $[\hat{\xi}^a_\pm,\hat{\xi}^b_\pm]=0$, $[\hat{\xi}^a_-,\hat{\xi}^b_+]=C_2^{ab}$ & $\{\hat{\xi}^a_\pm,\hat{\xi}^b_\pm\}=0$, $\{\hat{\xi}^a_-,\hat{\xi}^b_+\}=C_2^{ab}$\\
	    \hline
	    $n$-point function & \multicolumn{2}{c}{$C^{a_1\cdots a_n}_{n}=\braket{\psi|(\hat{\xi}-z)^{a_1}\cdots(\hat{\xi}-z)^{a_n}|\psi}$}\\
	Wick's theorem & \multicolumn{2}{c}{$C_{2n+1}=0$ and $C^{a_1\cdots a_{2n}}_{2n}=\sum_{\sigma}\frac{|\sigma|}{n!}C_2^{a_{\sigma(1)}a_{\sigma(2)}}\hspace{-4mm}\ldots C_2^{a_{\sigma(2n-1)}a_{\sigma(2n)}}$}\\
	\hline
	\makecell[r]{normal-ordered\\ squeezing (explicit)} & \makecell{$e^{\frac{r}{2}(e^{\ii\theta}(\hat{a}^\dagger)^2-e^{-\ii\theta}\hat{a}^2)}=e^{\frac{1}{2}e^{\ii\theta}(\tanh{r})(\hat{a}^\dagger)^2}$\\$\times e^{-(\ln{\cosh{r}})(\hat{n}+\frac{1}{2})}
    e^{-\frac{1}{2}(e^{-\ii\theta}\tanh{r})\hat{a}^2}$} & \makecell{$e^{r(e^{\ii\theta}\hat{a}^\dagger_1\hat{a}^\dagger_2+e^{-\ii\theta}\hat{a}_1\hat{a}_2)}=e^{e^{\ii\theta}\tan{r}\,\hat{a}^\dagger_1\hat{a}^\dagger_2}$\\$\times e^{-(\ln{\cos{r}})(\hat{n}_1+\hat{n}_2-1)}e^{e^{-\ii\theta}\tan{r}\,\hat{a}_1\hat{a}_2}$\rule[-1.5em]{0pt}{3.8em}}\\
    \hline
    \makecell[r]{normal-ordered\\ squeezing (covariant)} &
    \makecell{$e^{\widehat{K}_+}=e^{-\frac{\ii}{2}\omega_{ac}L^c{}_b\hat{\xi}^a_+\hat{\xi}^b_+}$\\$\times e^{-\frac{\ii}{2}\omega_{ac}\log(\id-L^2)^c{}_b(\hat{\xi}^a_+\hat{\xi}^b_-+\frac{\ii}{4}\Omega^{ab})}e^{-\frac{\ii}{2}\omega_{ac}L^c{}_b\hat{\xi}^a_-\hat{\xi}^b_-}$} & \makecell{$e^{\widehat{K}_+}=e^{\frac{1}{2}g_{ac}L^c{}_b\hat{\xi}^a_+\hat{\xi}^b_+}$\\$\times e^{\frac{1}{2}g_{ac}\log(\id-L^2)^c{}_b(\hat{\xi}^a_+\hat{\xi}^b_--\frac{1}{4}G^{ab})}e^{\frac{1}{2}g_{ac}L^c{}_b\hat{\xi}^a_-\hat{\xi}^b_-}$} \\
    \hline
    \makecell[r]{normal-ordered\\
    displacement (explicit)} & \multicolumn{2}{c}{$e^{\alpha \hat{a}^\dagger+\beta\hat{a}}=e^{\alpha \hat{a}^\dagger}e^{\alpha\beta/2}e^{\beta \hat{a}}$\rule[-.9em]{0pt}{2.5em}}\\
    \hline
    \makecell[r]{normal-ordered\\
    displacement (covariant)} & \multicolumn{2}{c}{$e^{v_a\hat{\xi}_+^a+w_b\hat{\xi}^b_-}=e^{v_a\hat{\xi}^a_+}e^{\frac{1}{2}v_a (C_2^\intercal)^{ab}w_b}e^{\hat{\xi}^b_-w_b}$\rule[-.9em]{0pt}{2.5em}}\\
    \hline
    \makecell[r]{normal-ordered\\
    middle term (explicit)} & $e^{\alpha\hat{a}}e^{\beta (\hat{a}^\dagger)^2}=e^{\beta (\hat{a}^\dagger)^2+2\alpha\beta\hat{a}^\dagger}\,e^{\alpha^2\beta}\,e^{\alpha\hat{a}}$\rule[-1.4em]{0pt}{3.5em} &
    \makecell{$e^{\alpha_1\hat{a}_1+\alpha_2\hat{a}_2}e^{\beta\hat{a}^\dagger_1\hat{a}^\dagger_2}=e^{\beta(\hat{a}^\dagger_1\hat{a}^\dagger_2+\alpha_1\hat{a}_2^\dagger-\alpha_2\hat{a}^\dagger_1)}$\\$\times e^{\alpha_1\beta\alpha_2}e^{\alpha_1\hat{a}_1+\alpha_2\hat{a}_2}$}\\
    \hline
    \makecell[r]{normal-ordered\\
    middle term (covariant)} & \multicolumn{2}{c}{$e^{w_a\hat{\xi}^a_-}e^{v_{bc}\hat{\xi}^b_+\hat{\xi}^c_+}=e^{v_{bc}\hat{\xi}^b_+\hat{\xi}^c_++2w_aC_2^{ab}v_{bc}\hat{\xi}^c_+}\,e^{w_aC_2^{ab}v_{bc}(C_2^\intercal)^{cd}w_d}e^{w_a\hat{\xi}^a_-}$\rule[-.9em]{0pt}{2.5em}}\\
    \hline
    \multirow{3}{*}{\makecell[r]{combining squeezing\\ and displacement}} & \multicolumn{2}{c}{$\log(e^{\widehat{K}}e^{\widehat{w}})=\widehat{K}+\widehat{\eta}+w_aB^{ab}w_b$ with $\eta_a=w_b\left(\frac{K}{e^{K}-\id}\right)^b_{\hspace{2mm}a}$, $F(K)=\frac{1}{4}\frac{K-\sinh{K}}{\id-\cosh{K}}$,\rule[-.9em]{0pt}{2.5em}}\\
    & $\widehat{w}=-\ii w_a\hat{\xi}^a$, $\widehat{K}=-\ii \omega_{ac}K^c{}_b\hat{\xi}^a\hat{\xi}^b$ and $B^{ab}=\ii F(K)^a{}_c\Omega^{cb}$ & $\widehat{w}=-w_a\hat{\xi}^a$, $\widehat{K}=g_{ac}K^c{}_b\hat{\xi}^a\hat{\xi}^b$ and $B^{ab}=F(K)^a{}_cG^{cb}$\\
    \hline
    scalar product $|\braket{J,z|\tilde{J},\tilde{z}}|^2$ & $\displaystyle\det\left(\frac{\sqrt{2}\Delta^{1/4}}{\sqrt{\mathbb{1}+\Delta}}\right)\,e^{-\frac{1}{2}(z-\tilde{z})^a(\Gamma+\tilde{\Gamma})^{-1}_{ab}(z-\tilde{z})^b}$ & $\displaystyle \det\left(\frac{\sqrt{\mathbb{1}+\Delta}}{\sqrt{2}\Delta^{1/4}}\right)$\rule[-1.5em]{0pt}{3.5em}\\
	    \hline
	    \hline
	    mixed Gaussian $\rho_{(J,z)}=e^{-\hat{Q}}$ & $\hat{Q}=q_{ab}(\hat{\xi}-z)^a(\hat{\xi}-z)^b+c$ & $\hat{Q}=\ii q_{ab}\hat{\xi}^a\hat{\xi}^b+c$\\
	    dimension & $N(2N+1)$ plus $2N$ displacements & $N(2N-1)$\\
	    \hline
	    finding $q$ & $q=-\omega \operatorname{arccot}{J}=-\ii \omega\,\operatorname{arccoth}{\ii J}$ & $q=g \operatorname{arctan}{J}=-\ii g\,\operatorname{arctanh}{\ii J}$\\
	    finding $J$ & $J=-\cot{\Omega q}=-\ii \coth{\ii\Omega q}$ & $J=\tan{Gq}=-\ii\tanh{\ii Gq}$\\
	    finding $c$ & $c=\tfrac{1}{4}\log\det\left(\tfrac{\id+J^2}{4}\right)$ & $c=-\tfrac{1}{4}\log\det\left(\tfrac{\id+J^2}{4}\right)$\\
	    eigenvalues $\pm\ii \lambda_i$ of $J$ & $\lambda_i\in[1,\infty)$ & $\lambda_i\in[0,1]$\\
    \hline
    characteristic function & $\chi(w)=e^{-\frac{1}{4}w_aG^{ab}w_b-\ii w_az^a}$ & $\chi(w)=e^{-\frac{\ii}{4}w_a\Omega^{ab}w_b}$\rule[-.9em]{0pt}{2.5em}\\
    \hline
    $n$-point function & \multicolumn{2}{c}{$C^{a_1\cdots a_n}_{n}=\Tr\left(\rho(\hat{\xi}-z)^{a_1}\cdots(\hat{\xi}-z)^{a_n}\right)$}\\
	Wick's theorem & \multicolumn{2}{c}{$C_{2n+1}=0$ and $C^{a_1\cdots a_{2n}}_{2n}=\sum_{\sigma}\frac{|\sigma|}{n!}C_2^{a_{\sigma(1)}a_{\sigma(2)}}\hspace{-4mm}\ldots C_2^{a_{\sigma(2n-1)}a_{\sigma(2n)}}$}
	\end{tabular}
	\vspace{-2.2cm}
\end{table*}

We found that $n$-point correlation functions for mixed Gaussian states are computed in the same way as for pure Gaussian states via Wick's theorem. The only difference is that the respective $J$ does not satisfy $J^2=-\id$, which can be used distinguish pure and mixed Gaussian states. Next, we will see how this relation can be used to show that mixed Gaussian states arise when we reduce pure Gaussian states to subsystems.

\newpage

${}$

\newpage

${}$

\newpage

\section{Applications}\label{sec:applications}
The goal of this section is to demonstrate how the formalism of Kähler structures can be used for applications in quantum information and non-equilibrium physics.

\subsection{Entanglement and complexity}\label{sec:information}
We derive a number of compact formulas to describe quantum-information properties, such as entanglement and complexity, of Gaussian states in terms of their complex structure $J$. While Gaussian states have been heavily used in quantum information~\cite{bravyi2004lagrangian,weedbrook2012gaussian,adesso2014continuous}, so far Kähler structures have been rarely used to describe their properties.

\subsubsection{Algebraic definition of a subsystem}
The observables of a quantum system form an algebra $\mathcal{A}$, given by the  Weyl algebra $\mathrm{Weyl}(V^*,\Omega)$ in the bosonic case and by the Clifford algebra $\mathrm{Cliff}(V^*,G)$ in the fermionic case.

A subalgebra $\mathcal{A}_A\subset \mathcal{A}$ defines a subsystem $A$ in terms of its observables. In general, the subsystem $A$ and its complement $B$ share a set of observables, corresponding to the fact that the subalgebra $\mathcal{A}_A$ has a center in $\mathcal{A}$. We identify sufficient conditions for the absence of a center.

The set of observables that commute with all elements of $\mathcal{A}_A$, \ie its commutant, define a subsystem $B$ with algebra
\begin{equation}
\mathcal{A}_B=\left\{b \in \mathcal{A} \, \big|  \, [b,a] =0 \, \forall\, a \in \mathcal{A}_A\right\}\,.
\end{equation}
In general, the subsystems $A$ and $B$ have a center
\begin{equation}
\mathcal{Z}=\mathcal{A}_A \cap \mathcal{A}_B\,.
\end{equation}
As a result, the Hilbert space of the system decomposes as a direct sum of tensor products~\cite{Casini:2013rba,Bianchi:2019stn},
\begin{equation}
\textstyle \mathcal{H} = \bigoplus_{\zeta} \Big(\mathcal{H}_A(\zeta)\otimes\mathcal{H}_B(\zeta)\Big)\,,
\end{equation}
where the sum is over the spectrum of $\mathcal{Z}$. 

Here, we consider subsystems defined by a Weyl algebra $\mathcal{A}_A=\mathrm{Weyl}(V_A^*,\Omega_A)$ in the bosonic case and by a Clifford algebra $\mathcal{A}_A=\mathrm{Cliff}(V_A^*,G_A)$ in the fermionic case. This restriction results in a trivial center $\mathcal{A}_A \cap \mathcal{A}_B=\{\id\}$ and a tensor-product decomposition of the Hilbert space of the system, $ \mathcal{H} = \mathcal{H}_A\otimes\mathcal{H}_B$.

\subsubsection{Subsystem decomposition}
Given a bosonic or fermionic system with $N$ degrees of freedom, we can always decompose the classical phase space $V$ into two complementary subsystems $A$ and $B$ with $V=A\oplus B$ satisfying the conditions of section~\ref{sec:non-Kaehler-subspace}. A decomposition $V=A\oplus B$ into symplectic or orthogonal complements for bosons or fermions, respectively, induces a dual decomposition $V^*=A^*\oplus B^*$. More precisely, we have
\begin{equation}
\begin{aligned}
\omega_{ab}\xi^a_A\xi^b_B&=0\,\,\forall\,\, \xi_A\in A, \xi_B\in B\,,&&\textbf{(bosons)}\\
g_{ab}\xi^a_A\xi^b_B&=0\,\,\forall\,\, \xi_A\in A, \xi_B\in B\,.&&\textbf{(fermions)}
\end{aligned}
\end{equation}
We further have $A^*=\{\omega_{ab}\xi^b_A|\xi\in A\}$ and $B^*=\{\omega_{ab}\xi^b_A|\xi\in B\}$ for bosons and $A^*=\{g_{ab}\xi^b_A|\xi\in A\}$ and $B^*=\{g_{ab}\xi^b_A|\xi\in B\}$ for fermions.

Any phase space decomposition $V=A\oplus B$, such that $A$ and $B$ are either symplectic complements for bosonic systems or orthogonal complements for fermionic systems, induces a tensor product decomposition
\begin{align}
     \mathcal{H}=\mathcal{H}_{A}\otimes\mathcal{H}_{B}\,.
\end{align}
It is induced by quantizing $A$ and $B$ (with the respective restricted symplectic form $\Omega$ or metric $G$) individually and then naturally identifying tensor products of states with elements in $\mathcal{H}$.

Of course, there are infinitely many other ways, one can write an infinite dimensional Hilbert space as a tensor product of two other infinite dimensional Hilbert spaces. However, for physical applications, we typically use above subsystem definition constructed from a subset $A^*\subset V^*$ of linear observables, which naturally gives rise to the decomposition described above.

\begin{proposition}\label{prop:restricted-J}
Given a pure Gaussian state $\ket{J,z}$ and a subsystem decomposition $V=A\oplus B$ according to definition~\ref{def:non-kaehler}, we can decompose $J$ according to
\begin{align}
	\begin{split}
	J=\left(\begin{array}{c|c}
	J_A & J_{AB}\\
	\hline
	J_{BA} & J_B
	\end{array}\right)\quad\text{with}\quad\\
	\begin{array}{rcc}
	J_A:& A\to A:& a\mapsto \mathbb{P}_A(Ja)\,,\\
	J_B:& B\to B:& b\mapsto \mathbb{P}_B(Jb)\,,\\
	J_{AB}:& B\to A:& b\mapsto \mathbb{P}_A(Jb)\,,\\
	J_{BA}:& A\to B:& a\mapsto \mathbb{P}_B(Ja)\,,
	\end{array}
	\end{split}
\end{align}
where $\mathbb{P}_A$ and $\mathbb{P}_B$ are the respective projections onto $A$ and $B$, respectively, such that $\id=\mathbb{P}_A+\mathbb{P}_B$. We can then always choose the bases $\hat{\xi}_A^a\equiv(\hat{q}_1^A,\hat{p}_1^A,\dots,\hat{q}_{N_A}^A,\hat{p}_{N_A}^A)$ and $\hat{\xi}_B^a\equiv(\hat{q}_1^B,\hat{p}_1^B,\dots,\hat{q}_{N_A}^B,\hat{p}_{N_B}^A)$, such that the linear complex structure is
\begin{widetext}
\begin{align}
    \begin{array}{rl}
    J\equiv\left( 
        \begin{array}{ccc|cccccc}
            \cosh(2r_1)\,\mathbb{A}_2 & \cdots & 0 & \sinh(2r_1)\,\mathbb{S}_2 & \cdots & 0 & 0& \cdots & 0 \\
             \vdots & \ddots & \vdots & \vdots &\ddots & \vdots &\vdots &\ddots & \vdots\\
             0 & \cdots & \cosh(2r_{N_A})\,\mathbb{A}_2 & 0 & \cdots & \sinh(2_{N_A})\,\mathbb{S}_2 & 0 &\cdots & 0\\ \hline
             \sinh(2r_1)\,\mathbb{S}_2 & \cdots & 0 & \cosh(2r_1)\,\mathbb{A}_2 & \cdots & 0 & 0 &\cdots & 0 \\
             \vdots & \ddots & \vdots & \vdots & \ddots &\vdots & \vdots & \ddots & \vdots \\
             0 & \cdots & \sinh(2r_{N_A})\,\mathbb{S}_2 & 0 & \cdots & \cosh(2 r_{N_A})\,\mathbb{A}_2 & 0 & \cdots & 0 \\
             0 & \cdots & 0 & 0 & \cdots & 0 &\mathbb{A}_2 & \cdots & 0\\
             \vdots & \ddots & \vdots & \vdots & \ddots & \vdots & \vdots & \ddots & \vdots\\
             0 & \cdots & 0 & 0 & \cdots & 0 & 0 &\cdots &   \mathbb{A}_2
        \end{array}
    \right) & \normalfont{\textbf{(bosons)}}\\[21mm]
    J\equiv\left(\begin{array}{ccc|cccccc}
	\cos(2r_1)\,\mathbb{A}_2 & \cdots & 0 & \,\,\sin(2r_{1})\,\mathbb{S}_2\,\,\, & \cdots & 0 & 0 &\cdots & 0\\
	\vdots & \ddots & \vdots & \vdots & \ddots & \vdots & \vdots & \ddots & \vdots \\
	0 & \cdots & \cos(2r_{N_A})\,\mathbb{A}_2 & 0 & \cdots & \,\,\hspace{2.1pt}\sin(2r_{N_A})\,\mathbb{S}_2\,\,\, & 0 & \cdots & 0\\
	\hline
	-\sin(2r_1)\,\mathbb{S}_2 & \cdots & 0 & \cos(2r_1)\,\mathbb{A}_2 & \cdots & 0 & 0 &\cdots & 0\\
	\vdots & \ddots & \vdots & \vdots & \ddots & \vdots & \vdots & \ddots & \vdots\\
	0 & \cdots & -\sin(2r_{N_A})\,\mathbb{S}_2 & 0 & \cdots & \cos(2r_{N_A})\,\mathbb{A}_2 & 0 &\cdots & 0\\
	0 & \cdots & 0 & 0 & \cdots & 0 & \mathbb{A}_2 & \cdots & 0\\
	\vdots & \ddots & \vdots & \vdots & \ddots & \vdots & \vdots & \ddots & \vdots \\
	0 & \cdots & 0 & 0 & \cdots & 0 & 0 & \cdots & \mathbb{A}_2
	\end{array}\right) & \normalfont{\textbf{(fermions)}}
	\end{array}
\end{align}
\end{widetext}
with matrices $\mathbb{A}_2$ and $\mathbb{S}_2$ written as
\begin{align}
\footnotesize\hspace{-2mm}    \mathbb{A}_2\qp\begin{pmatrix}
    0 & 1  \\
    -1 & 0 
    \end{pmatrix}\aab\begin{pmatrix}
    -\ii & 0  \\
    0 & \ii 
    \end{pmatrix}\,,\,
    \mathbb{S}_2\qp\begin{pmatrix}
    0 & 1  \\
    1 & 0 
    \end{pmatrix}\aab\begin{pmatrix}
    0 & \ii  \\
    -\ii & 0 
    \end{pmatrix}\,.
\end{align}
In particular, we find that $J_A$ and $J_B$ have eigenvalues $\pm\ii\lambda_i$ with $\lambda_i\in[1,\infty)$ for bosons and $\lambda_i\in[0,1]$ for fermions.
\end{proposition}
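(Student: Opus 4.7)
The plan is to reduce the statement to a Gaussian analogue of the Schmidt decomposition, combining the standard form for mixed Gaussian state complex structures (already established in section~\ref{sec:mixed-Gaussians}) with the global constraint $J^2 = -\id$ to propagate that form into the $B$-sector and into the off-diagonal blocks.

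First, I would observe that the restriction $J_A = -G_A \omega_A$ (bosons) or $J_A = \Omega_A g_A$ (fermions) is precisely the complex structure of the reduced density matrix $\rho_A = \mathrm{Tr}_B\,\ket{J,z}\!\bra{J,z}$, and hence is a mixed Gaussian state complex structure. By the standard form derived around Table~\ref{tab:res-standard}, there is a symplectic change of basis on $A$ (bosons, via Williamson's theorem) or an orthogonal change of basis on $A$ (fermions, via the Schur-type block diagonalization of antisymmetric forms) that brings $J_A$ to $\bigoplus_{i=1}^{N_A} \lambda_i\,\mathbb{A}_2$ with $\lambda_i = \cosh(2r_i) \in [1,\infty)$ for bosons and $\lambda_i = \cos(2r_i)\in[0,1]$ for fermions. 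This fixes the $A$-block of $J$ in the stated form and also fixes the relevant ranges of eigenvalues.

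Second, I would exploit the four block-identities obtained from $J^2=-\id$:
\begin{align}
J_A^2 + J_{AB}J_{BA} &= -\id_A, & J_A J_{AB} + J_{AB} J_B &= 0,\\
J_{BA}J_{AB} + J_B^2 &= -\id_B, & J_{BA}J_A + J_B J_{BA} &= 0.
\end{align}
Plugging in the standard $J_A$, the first identity becomes block-diagonal with blocks $\sinh^2(2r_i)\,\id_2$ (bosons) or $\sin^2(2r_i)\,\id_2$ (fermions), which is nonnegative for bosons and so admits a real square root, and for fermions is a sign-definite multiple of $\id_2$. I would then perform a polar/singular-value decomposition of $J_{BA}$ inside the symplectic or orthogonal group of $B$: the image of $J_{BA}$ is a $2N_A$-dimensional subspace $B_1\subset B$, and choosing a Darboux (bosons) or orthonormal (fermions) basis of $B_1$ aligned with the eigenmodes of $J_A$ puts $J_{BA}$ into the claimed diagonal form $\sinh(2r_i)\,\mathbb{S}_2$ or $-\sin(2r_i)\,\mathbb{S}_2$. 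The intertwining identity $J_A J_{AB} + J_{AB} J_B = 0$, combined with the anticommutation $\{\mathbb{A}_2,\mathbb{S}_2\}=0$, then forces $J_B|_{B_1}$ to mirror $J_A$ with the same $r_i$, and on the complementary subspace $B_2 = B_1^\perp \subset B$ the remaining identity $J_B^2 = -\id_{B_2}$ together with compatibility with $\Omega_{B_2}$ or $G_{B_2}$ makes $(G_{B_2},\Omega_{B_2},J_B|_{B_2})$ a genuine Kähler triple, which a further basis change within the stabilizer of the paired modes brings to $\bigoplus \mathbb{A}_2$.

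The main obstacle will be verifying that the basis change on $B$ used in the polar decomposition step actually lies inside $\mathrm{Sp}(2N_B,\mathbb{R})$ or $\mathrm{O}(2N_B,\mathbb{R})$, so that the background structure on $B$ retains its standard form at the end of the procedure. This amounts to showing that the pairing between the coupled subspace $B_1$ and the modes of $A$ respects the restricted symplectic form $\Omega_{B_1}$ or metric $G_{B_1}$; this follows from the fact that $J_{BA}$ is not an arbitrary linear map but arises from the globally defined $J$, which is already compatible with $\Omega$ (bosons) or $G$ (fermions) on all of $V$. Once this compatibility is in hand, the remaining degrees of freedom organize themselves into the stated block structure almost automatically.
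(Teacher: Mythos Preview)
Your approach follows essentially the same strategy as the paper's proof sketch: exploit the block identities from $J^2=-\id$ to link the $A$- and $B$-sectors, then show that the off-diagonal blocks $J_{AB}$, $J_{BA}$ act as intertwiners between the eigenspaces of $J_A$ and $J_B$ with a prescribed rescaling. The paper likewise only sketches these three steps and defers the full details to the appendices of~\cite{hackl2019minimal}.

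The one difference worth flagging is your first step. You obtain the standard form and eigenvalue range of $J_A$ by invoking the mixed Gaussian state machinery of section~\ref{sec:mixed-Gaussians}, i.e.\ by identifying $J_A$ as the complex structure of the reduced state $\rho_A$. The paper's sketch instead derives the diagonalizability and eigenvalue range of $J_A$ and $J_B$ directly from the algebra, distinguishing bosons and fermions only at that stage. Your shortcut is legitimate, but note that in the paper the statement that $\rho_A$ is Gaussian with complex structure $J_A$ is the proposition immediately \emph{after} this one. There is no actual circularity---that proposition is proved via Wick's theorem and does not rely on the present result---but you should either reorder, or more cleanly argue the eigenvalue range of $J_A$ directly from the positivity of the restricted two-point function $C_2\big|_A = \tfrac{1}{2}(G_A + \ii\,\Omega_A) \geq 0$, which holds for any state and immediately gives $\lambda_i \geq 1$ (bosons) or $\lambda_i \leq 1$ (fermions) without invoking the Gaussianity of $\rho_A$.
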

\begin{proof}
A detailed proof can be found in the appendices of~\cite{hackl2019minimal} split over propositions~2 to~10. Equivalent results have been well-known in terms of the covariance matrices~\cite{botero_mode-wise_2003}.\\
The idea is to first show that $J_A^2$ and $J_B^2$ are diagonalizable and have the same spectrum except for eigenvalues $-1$ (corresponding to eigenvalues $\pm\ii$ of $J_A$ and $J_B$). In a second step, one then needs to distinguish between bosonic and fermionic systems to show that $J_A$ and $J_B$ are diagonalizable with eigenvalues $\pm\ii\lambda_i$ of $J_A$ and $J_B$ satisfy $\lambda_i\in[1,\infty)$ for bosons and $\lambda_i\in[0,1]$ for fermions. At this stage, the block forms of $J_A$ and $J_B$ follow from the fact that any matrix with imaginary eigenvalues can be brought into block-diagonal form. In the third and last step, one then shows that $J_{AB}$ and $J_{BA}$ relate those eigenvectors of $J_A$ and $J_B$ whose eigenvalues are not $\pm\ii$ with a prescribed rescaling to ensure that $J$ as a whole only has eigenvalues $\pm\ii$.
\end{proof}

We can now show that the reduction of a pure Gaussian state to such a subsystem gives rise to a mixed Gaussian state.

\begin{proposition}
Given a pure Gaussian state $\ket{J,z}$ and a system decomposition $V=A\oplus B$ inducing the tensor product $\mathcal{H}=\mathcal{H}_A\otimes\mathcal{H}_B$, the reduced state
\begin{align}
    \rho_A(J,z)=\mathrm{Tr}_{\mathcal{H}_B}\ket{J,z}\bra{J,z}
\end{align}
is Gaussian and explicitly given by $\rho_A(J,z)=\rho_{(J_A,z_A)}$, where $J_A$ was defined in proposition~\ref{prop:restricted-J} and $z_A=\mathbb{P}_Az$ is the projection of $z$ onto $A$.
\end{proposition}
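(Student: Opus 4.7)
The plan is to verify the identity $\rho_A(J,z)=\rho_{(J_A,z_A)}$ by matching all correlation functions of operators in the subsystem algebra $\mathcal{A}_A$. Since $\mathcal{A}_A$ is generated by $\hat{\xi}_A^a$ and a density operator on $\mathcal{H}_A$ is uniquely determined by its expectation values on a generating set of observables, it suffices to show that the centered $n$-point functions
\[
C_n^{a_1\cdots a_n}=\Tr\bigl(\rho_A\,(\hat{\xi}_A-z_A)^{a_1}\cdots(\hat{\xi}_A-z_A)^{a_n}\bigr)
\]
with all $a_i$ in $A$ coincide on both sides for every $n\ge 0$.

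First, I would compute the one- and two-point functions on $A$ directly from the defining property $\Tr(\rho_A\,\hat{\mathcal O}_A)=\braket{J,z|\hat{\mathcal O}_A|J,z}$ of the partial trace. The one-point function yields $z_A=\mathbb{P}_A z$ immediately (with $z_A=0$ automatic in the fermionic case), and the centered two-point function yields $\tfrac12(G_A^{ab}+\ii\Omega_A^{ab})$, where $G_A$ and $\Omega_A$ are the restrictions of $G$ and $\Omega$ to $A$. To identify this with the two-point function of the candidate $\rho_{(J_A,z_A)}$ via \eqref{eq:mixed-state-J}, I have to verify $J_A=-G_A\omega_A$ for bosons and $J_A=\Omega_A g_A$ for fermions. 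Because definition~\ref{def:non-kaehler} makes the background structure ($\Omega$ for bosons, $G$ for fermions) block-diagonal in $V=A\oplus B$, its inverse is block-diagonal as well, and a direct block-matrix computation yields $(J)_{AA}=-G_A\omega_A$ or $\Omega_A g_A$, coinciding with the $J_A$ of proposition~\ref{prop:restricted-J}.

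Second, I would upgrade the agreement from two-point to all $n$-point functions using Wick's theorem. On the pure-state side, proposition~\ref{prop:pure-Wicks} expresses $C_n^{a_1\cdots a_n}$ as a sum over pair-contractions of the full phase-space two-point function; since every external index sits in $A$, only the $AA$-block ever enters a contraction. On the mixed-state side the same combinatorial formula holds for $\rho_{(J_A,z_A)}$ by the mixed Gaussian Wick theorem, with identical contractions. Hence the $n$-point functions of $\rho_A$ and $\rho_{(J_A,z_A)}$ on $\mathcal{H}_A$ agree for all $n$, and the two density operators are equal.

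The main obstacle, and the reason the previous step works at all, is to confirm that $J_A$ is a legitimate mixed-state complex structure --- i.e.\ that its eigenvalues lie in the admissible range so that $\rho_{(J_A,z_A)}$ is a well-defined density operator. This is exactly the content of proposition~\ref{prop:restricted-J}, which I would cite rather than reprove: the eigenvalues of $J_A$ come in pairs $\pm\ii\lambda_i$ with $\lambda_i\in[1,\infty)$ for bosons and $\lambda_i\in[0,1]$ for fermions, matching the admissible spectrum corresponding to~\eqref{eq:Gaussian-spectrum}. The one subtle point to watch for is not to conflate the restriction of the inverse ($\omega|_A$ or $g|_A$) with the inverse of the restriction ($\omega_A=\Omega_A^{-1}$, $g_A=G_A^{-1}$); these coincide here only because the background structure is block-diagonal with respect to $V=A\oplus B$, which is exactly why the symplectic/orthogonal-complement condition in definition~\ref{def:non-kaehler} is essential.
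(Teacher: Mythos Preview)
Your proposal is correct and follows essentially the same approach as the paper: both arguments identify the reduced state with $\rho_{(J_A,z_A)}$ by showing that all $n$-point functions agree via Wick's theorem on each side. Your write-up is simply more explicit---you spell out the block-diagonality argument that gives $J_A=-G_A\omega_A$ (bosons) or $J_A=\Omega_A g_A$ (fermions), flag the subtlety that the restriction of the inverse equals the inverse of the restriction only because the background structure is block-diagonal, and cite proposition~\ref{prop:restricted-J} to confirm that $J_A$ has admissible spectrum---whereas the paper compresses all of this into a single sentence.
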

\begin{proof}
This result follows from the fact that $\rho_{(J_A,z_A)}$ and $\rho_A(J,z)$ satisfy the same Wick's theorem, so that all their $n$-point functions agree, so they must be equal. 
\end{proof}

The restricted covariance matrix satisfies
\begin{align}
	\mathrm{Tr}\left(\rho_A(J,z)\,\hat{\xi}^r_A\hat{\xi}^s_{A}\right)=\frac{1}{2}\left(G^{rs}_A+\ii\Omega_A^{rs}\right)\,.
\end{align}
The real bilinear form $q_{rs}$ is symmetric for bosons and anti-symmetric for fermions. It can be compactly written in terms of the restricted linear complex structure as
\begin{align}
\begin{split}
    	q&=\begin{cases}
	-\ii\omega_A\,\mathrm{arccoth}\left(\ii J_A\right) & \textbf{(bosons)}\\[1mm]
	+\ii g_A\,\mathrm{arctanh}\left(\ii J_A\right) & \textbf{(fermions)}
	\end{cases}\\
	&=\begin{cases}
	+\omega_A\,\mathrm{arccot}\left(J_A\right) & \textbf{(bosons)}\\[1mm]
	-g_A\,\mathrm{arctanh}\left(J_A\right) & \textbf{(fermions)}
	\end{cases}
\end{split}
\end{align}
where $\omega_A$ and $g_A$ are the restrictions of $\omega$ and $g$ to $A$. This follows from the respective structures discussed in section~\ref{sec:mixed-Gaussians}.

We can use this basis to find an explicit representation of the states with respect to the number operators $\hat{n}_i$ associated to this basis, namely
\begin{widetext}
	\begin{align}
	\rho_A&=\left\{\begin{array}{rl}
	\sum_{n_1,\cdots, n_{N_A}=\,0}^\infty  \left(\prod_{i=1}^{N_A}\frac{(\tanh r_i)^{n_i}}{\cosh r_i}\right)^2\;\ket{n_1,\ldots,n_{N_A}} \bra{n_1,\ldots,n_{N_A}} & \textbf{(bosons)}\\[2mm]
	\sum_{n_1,\cdots, n_{N_A}=\,0}^1 \left(\prod_{i=1}^{N_A}\frac{(\tan{r_i})^{n_i}}{\sec{r_i}}\right)^2\;\ket{n_1,\ldots,n_{N_A}} \bra{n_1,\ldots,n_{N_A}} & \textbf{(fermions)}
\end{array}\right..
\end{align}
\end{widetext}

\subsubsection{Entanglement entropy}
Given a mixed Gaussian state $\rho_{(J,z)}$, we can compute the von Neumann-entropy
\begin{align}
    S(\rho_{(J,z)})=-\Tr(\rho_{(J,z)}\log{\rho_{(J,z)}})
\end{align}
using the explicit representation of $\rho_{(J,z)}$ from~\eqref{eq:mixed_state} to find
\begin{align}
    S(\rho)=\left|\mathrm{Tr}(\ii J\,\operatorname{argh}{\ii J})+\tfrac{1}{4}\log\det\left(\tfrac{\id+J^2}{4}\right)\right|\,,
\end{align}
where we introduced the matrix function
\begin{align}
   \mathrm{argh}(x)=\tfrac{1}{4}\log\left(\tfrac{1+x}{1-x}\right)^2=\begin{cases}
    \mathrm{arctanh}(x) & x\in[0,1]\\
    \mathrm{arccoth}(x) & x\in[1,\infty)
    \end{cases}
\end{align}
applied to the restricted complex structure $\ii J$. We can read off the entanglement spectrum as eigenvalues of $\rho_A$ which allows the computation of entanglement entropy $S_A$ and the Rényi entropy $R_A^{(\alpha)}$ of order $\alpha$ as
\begin{widetext}
\begin{equation}
\begin{aligned}
    \hspace{-3mm}S_A&=\sum^{N_A}_{i=1}\left(\cosh^2{r_i}\log\cosh^2{r_i}-\sinh^2{r_i}\ln\sinh^2{r_i}\right)\,,&R_A^{(\alpha)}&=\tfrac{1}{\alpha-1}\sum^{N_A}_{i=1}\log\left(\cosh^{2\alpha}{r_i}-\sinh^{2\alpha}{r_i}\right)&&\textbf{(bosons)}\\
    \hspace{-3mm}S_A&=-\sum^{N_A}_{i=1}\left(\cos^2{r_i}\log\cos^2{r_i}+\sin^2{r_i}\log\sin^2{r_i}\right)\,,&R_A^{(\alpha)}&=\tfrac{1}{1-\alpha}\sum^{N_A}_{i=1}\log\left(\cos^{2\alpha}{r_i}+\sin^{2\alpha}{r_i}\right)&&\textbf{(fermions)}\\
\end{aligned}
\end{equation}
\end{widetext}
We can use the restricted complex structure $J_A$ to find a particularly compact trace formula for the entanglement entropy valid for both bosons and fermions, namely~\cite{bianchi2015entanglement}
\begin{align}
	S_A=\left|\mathrm{Tr}\left(\frac{\mathbb{1}_A+\ii J_A}{2}\log\left|\frac{\mathbb{1}_A+\ii J_A}{2}\right|\right)\right|\,.\label{eq:EEbosonfermions}
\end{align}
Similarly, we can express the Rényi entropies of order $2$ as simple determinants
\begin{align}
    R_A^{(2)}=\begin{cases}
    \frac{1}{2}\log|\det\ii J_A| & \textbf{(bosons)}\\
    -\frac{1}{2}\log\det\left(\frac{\id_A-J_A^2}{2}\right) & \textbf{(fermions)}
    \end{cases}\,.
\end{align}

The entanglement entropy is bounded from above for fermions, due to the fact that the fermionic Hilbert space is finite-dimensional. The maximally entangled state is characterized by $J_A=0$, \ie all eigenvalues $\lambda_i$ vanish, and we have $S_A=N_A\log{2}$ (assuming $N_A\leq N_B$). For bosons, the entanglement entropy is not bounded from above and the maximally mixed state can only be reached asymptotically, as it does not exist as a proper mixed state. When we consider time-evolution, these properties are also reflected by the fact that fermionic Gaussian states form a \emph{compact} manifold, while bosonic Gaussian states form a \emph{non-compact} manifold. This leads to interesting questions in the context of producing entanglement through time evolution~\cite{bianchi2015entanglement,bianchi2015entanglement,bianchi2018linear,hackl2018entanglement,romualdo2019entanglement}.

The entanglement entropy of a non-Gaussian state will in general also depend on higher $n$-point functions, so we cannot use~\eqref{eq:EEbosonfermions} anymore. Interestingly, if we perturb a Gaussian state in a non-Gaussian way, the entanglement entropy will at linear order only feel the Gaussian part of the perturbation~\cite{chen2020towards}, so that we can use the linearization of~\eqref{eq:EEbosonfermions} to deduce the linear change
\begin{align}
    \delta S_A=\Tr\left(\frac{\delta S_A(J)}{\delta J} \delta J\right)
\end{align}
via the first law of entanglement entropy \cite{bhattacharya2013thermodynamical,fl1,fl2}.

For bosons, let us note that the entanglement entropy does not depend on the displacement $z$ of a state $\ket{J,z}$. For fermions, we can expand formula~\eqref{eq:EEbosonfermions} in $J_A$ to find the power series
\begin{align}
	S_A=N_A\log{2}-\sum^\infty_{n=1}\frac{\mathrm{Tr}\,(\ii J_A)^{2n}}{2n(2n-1)}\,,&&\textbf{(fermions)}
\end{align}
where we used that $\mathrm{Tr}(\ii J_A)^{2n+1}=0$. This series converges monotonously and absolutely. Moreover, any truncation of this series provides both an upper and a lower bound given by
\begin{align}
    S_A^{m+}&=N_A\log{2}-\sum^m_{n=1}\frac{\mathrm{Tr}\,(\ii J_A)^{2n}}{2n(2n-1)}\,,\nonumber\\
    S_A^{m-}&=N_A\left(\log{2}-\sum^\infty_{m+1}\tfrac{1}{n(2n-1)}\right)-\sum^m_{n=1}\frac{\mathrm{Tr}\,(\ii J_A)^{2n}}{2n(2n-1)}\,,\nonumber
\end{align}
which one deduces from the inequality $0\leq\mathrm{Tr}[\ii J]_A^{2n}\leq 2N_A$. This inequality is a direct consquence from the fact that the restricted complex structure $[J]_A$ of a fermionic state has purely imaginary eigenvalues $\pm \ii\lambda$ with $0\leq\lambda\leq 1$, which we derived in~\cite{vidmar2017entanglement}.

The inequalities $S_A^{m-}\leq S_A\leq S_A^{m+}$ have been used in the context of typical entanglement of energy eigenstates~\cite{vidmar2017entanglement,vidmar2018volume,hackl2019average,lydzba2020eigenstate,lydzba2021entanglement,bianchi2021page}, but are likely also useful in other contexts.

\subsubsection{Relative entropy}
Given two mixed states $\rho$ and $\sigma$, the relative entropy $S(\rho\lVert\sigma)$ is defined as
\begin{align}
    S(\rho\lVert\sigma)=\mathrm{Tr}\rho(\log\rho-\log\sigma)\,.
\end{align}
If both states are Gaussian states with respective complex structures $J_{\rho}$ and $J_{\sigma}$, we can use~\eqref{eq:mixed_state} to find
\begin{align}
    S(\rho\lVert\sigma)=\left|\operatorname{Tr}\ii J_\rho(\argh{\ii J_{\sigma}}-\argh{\ii J_{\rho}})+\tfrac{1}{4}\log\det\!\left(\tfrac{\id+J^2_{\sigma}}{\id+J_{\rho}^2}\right)\right|
    \label{eq:relative-entropy}
\end{align}
where the ordering within the determinant does not matter, as the equation can be understood in terms of eigenvalues.

\subsubsection{Circuit complexity}
Circuit complexity is another quantum information-theoretic quantity which has recently emerged as an interesting field of research in the context of and holography. Holography provides an approach to quantum gravity where quantum field theory states on the boundary of a spacetime can be related geometry inside the bulk of the spacetime. This is also known as bulk-boundary correspondence or AdS-CFT, as the spacetime is typically assumed to be asymptotically anti-De Sitter (AdS) space and the quantum field theory on the boundary is taken to be a conformal field theory (CFT).

In this setting, it was noticed in~\cite{Susskind:2014moa,Susskind:2014rva,Stanford:2014jda,Brown:2015bva,Brown:2015lvg,Couch:2016exn} that certain geometric quantities computed in the bulk (such as codimension-one boundary-anchored maximal volumes and codimension-zero boundary-anchored causal diamonds) behave similar as the difficulty of preparing quantum states by applying a sequence of quantum operations to a reference state~\cite{Orus:2014poa}. So far, it has been an open problem to make this observation concrete by identifying dual quantities on the boundary field theory that match those computed in the bulk. However, there has been some partial progress~\cite{Jefferson:2017sdb} by defining circuit complexity for free quantum fields based on the number of Gaussian transformations $e^{\epsilon\widehat{K}_i}$ with $K_i$ applied to a spatially unentangled Gaussian reference state $\ket{J_{\mathrm{R}}}$ to reach the entangled field theory vacuum
\begin{align}
    \ket{J_{\mathrm{T}}}=\left(\prod^n_{i=1} e^{\epsilon\widehat{K}_i}\right)\ket{J_{\mathrm{R}}}
\end{align}
as target state. The idea behind this definition is that the circuit complexity (or circuit depth) is given by the number of elementary gates $e^{\epsilon\widehat{K}_i}$ applied to the reference state. For this, it is important to require the normalization condition
\begin{align}
    \lVert K_i\rVert^2=\frac{1}{2}\Tr(K_iG_{\mathrm{R}}K_i^\intercal g_{\mathrm{R}})=1\,,
\end{align}
for the generators $K_i$, where $G_{\mathrm{R}}$ and $g_{\mathrm{R}}$ are the metric associated to the reference state $\ket{J_{\mathrm{R}}}$. In the limit $\epsilon\to 0$ and $n\to\infty$, this becomes a path ordered exponential $\mathcal{S}(M)=\mathcal{P}\exp{\int_0^1 \widehat{K}(t)}dt$ and we can approximate $n\epsilon\approx \int_0^1 \lVert K(t)\rVert dt$ by the length of the path. The circuit complexity $\mathcal{C}(\ket{J_{\mathrm{T}}},\ket{J_{\mathrm{R}}})$ is then defined as the minimum over all paths, \ie the geodesic distance between the identity group element $\id$ and the closest point in the equivalence class $[M]$ with $J_{\mathrm{T}}=MJ_{\mathrm{R}}M^{-1}$.

As the above setup only describes the preparation of Gaussian states, it can be understood as Gaussian circuit complexity whose generalization to genuinely interacting field theories has not been accomplished, so far. Above minimization can be carried out analytically to find the Gaussian circuit complexity to be given by
\begin{align}
    \mathcal{C}(\ket{J_{\mathrm{T}}},\ket{J_{\mathrm{R}}})=\sqrt{\frac{|\Tr\log^2(\Delta)|}{8}}\label{eq:circuit-complexity}
\end{align}
in terms their relative complex structure $\Delta=J_{\mathrm{T}}J_{\mathrm{R}}^{-1}$, as proven in~\cite{Hackl:2018ptj} for fermions and in~\cite{Chapman:2018hou} for bosons. Interestingly, formula~\eqref{eq:circuit-complexity} also makes sense when defining circuit complexity for mixed Gaussian states using the Fisher information geometry, as derived for bosons in~\cite{di2020complexity}. The geometry of Gaussian states was also used to define the so-called complexity of purification (CoP), where formula~\eqref{eq:circuit-complexity} is minimized over all Gaussian purifications of a given mixed Gaussian state~\cite{windt2020local,camargo2021entanglement,camargo2021long}.

\subsection{Dynamics of stable quantum systems}\label{sec:stable}
We present compact equations for the full dynamics of bosonic and fermionic Gaussian states under the evolution of time-independent quadratic Hamiltonians.

\subsubsection{Time-independent quadratic Hamiltonians}
We consider the most general time-independent quadratic Hamiltonian,
\begin{align}
	\hat{H}=\left\{\begin{array}{ll}
	\frac{1}{2}h_{ab}\hat{\xi}^a\hat{\xi}^b+f_a\hat{\xi}^a& \textbf{(bosons)}\\[.5em]
	\frac{\ii}{2}h_{ab}\hat{\xi}^a\hat{\xi}^b& \textbf{(fermions)}
	\end{array}\right.\,.\label{eq:Hamiltonian-time-indep}
\end{align}
Due to commutation or anti-commutation relations, for bosons and fermions respectively, only the symmetric or antisymmetric part of $h_{ab}$ will contribute to the physics, while the other part only leads to a shift of the zero point energy. We can define the Lie algebra generator associated to the Hamiltonian as
\begin{align}
    K^a{}_b=\left\{\begin{array}{ll}
	\tfrac{1}{2}\Omega^{ac}(h_{cb}+h_{bc}) \in\mathfrak{sp}(2N,\mathbb{R})& \textbf{(bosons)}\\[.5em]
	\tfrac{1}{2}G^{ac}(h_{cb}-h_{bc})\in\mathfrak{so}(2N)  & \textbf{(fermions)}
	\end{array}\right.\label{eq:K-time-indep}
\end{align}
In the bosonic case, the Hamiltonian is bounded from below and the system is stable if $h_{ab}$ is positive definite. In the fermionic case, as the Hilbert space is finite dimensional and the system is always stable.

In the stable case, the generator $K$ can be put in standard form. One chooses a basis where $\Omega$ for bosons and $G$ for fermions is in its standard form \eqref{eq:kaehler-standard-abs} and then use the group $\mathcal{G}$, \ie $\mathrm{Sp}(2N,\mathbb{R})$ for bosons and $\mathrm{O}(2N,\mathbb{R})$ for fermions, to change to a new basis $\hat{\xi}^a\equiv(\hat{q}_1,\hat{p}_1,\cdots,\hat{q}_N,\hat{p}_N)$ without modifying $\Omega$ or $G$ to bring $K$ into the standard form
\begin{align}
    K\qp\bigoplus^N_{i=1}\begin{pmatrix}
    0 & \epsilon_i\\
    -\epsilon_i & 0
    \end{pmatrix}\,,\label{eq:K-standard}
\end{align}
where $\epsilon_i>0$. This is obviously possible for fermions, because $K\in\mathfrak{so}(2N,\mathbb{R})$ is antisymmetric with respect to $G$, but it is also well-known that it can be done for bosons if $h_{ab}$ is positive definite as consequence of Williamson's theorem~\cite{williamson1936algebraic} (see App.~B of \cite{Bianchi:2015fra} for a constructive proof). The eigenvalues of $K$ are thus $\pm \ii\epsilon_i$.

\subsubsection{Dynamics of a Gaussian state}
Under time evolution, quadratic Hamiltonians send Gaussian states into Gaussian states  (See Sec.~\ref{sec:Gaussian-states}). Given an initial Gaussian state $\ket{J_0,z_0}$, the unitary time evolution $\ket{J(t),z(t)}=U(t)\ket{J_0,z_0}$ with $U(t)=\ee^{-\ii \hat{H} t}$ is completely determined by the evolution of the two-point correlation function,
\begin{align}
\bra{J(t),z(t)}\hat{\xi}^a\hat{\xi}^b\ket{J(t),z(t)}=\;&\frac{1}{2}(G^{ab}(t)+\ii\, \Omega^{ab}(t))+\nonumber\\ 
&+z^a(t)z^b(t)\,.
\label{eq:corr-t}
\end{align}
Taking its time derivative, using the Schrödinger equation $\ii \,\partial_t\ket{J(t),z(t)}=\hat{H}\ket{J(t),z(t)}$ and the relation between Kähler structures, one finds
\begin{equation}
\begin{aligned}
	\dot{J}(t)&=[K,J(t)]=K J(t)-J(t)K\,,\\
	\dot{z}^a(t)&=K^a{}_b z^b(t)+\Omega^{ab}f_b\,,
\end{aligned}
\label{eq:Jdot-timeindep}
\end{equation}
which has solution
\begin{align}
\begin{split}
    J(t)&=M(t)J_0M^{-1}(t)\,,\\
    z(t)&=M(t)z_0+M(t)\int^t_0 M^{-1}(t')\, \Omega f\,dt'\,,
\end{split}\label{eq:time-evolution-indep}
\end{align}
with $M(t)=\exp (K t)$ the symplectic or orthogonal transformation associated to the bosonic or fermionic dynamics.

Time evolution is an example of the natural group action of an element $M\in\mathcal{G}$ onto any Gaussian state $|J\rangle$ leading to $|MJM^{-1}\rangle$. This forms a natural representation of the group $\mathcal{G}$, but every Gaussian state $|J\rangle$ selects an invariant subgroup
	\begin{align}
	\mathrm{Sta}_{|J\rangle}=\left\{ M\in\mathcal{G}\big| MJM^{-1}=J\right\}
	\end{align}
isomorphic to $\mathrm{U}(N)$. This group arises naturally as the intersection
	\begin{align}
	\mathrm{U}(N)=\mathrm{Sp}_{\Omega}(2N,\mathbb{R})\cap\mathrm{O}_G(2N)\cap\mathrm{GL}_J(N,\mathbb{C})
	\end{align}
for any triple $(\Omega,G,J)$ of Kähler structures. Technically, this is only a proper representation on the space of Gaussian quantum states $\rho(J)=|J\rangle\langle J|$, while for Gaussian state vectors $|J\rangle$ we need to take complex phases into account. The unitary subgroup generated by hermitian operators $\hat{H}$ is in fact not given by $\mathcal{G}$, but by its double cover $\overline{\mathcal{G}}$ which is given by metaplectic group $\mathrm{Mp}(2N,\mathbb{R})$ for bosonic systems and the spin group $\mathrm{Spin}(N)$ for fermionic systems.

The expressions \eqref{eq:time-evolution-indep}, together with the results of Sec.~\ref{sec:information}, allow one to compute the time evolution of information theoretic quantities such as the entanglement entropy.

\subsubsection{Expectation value of the energy}
The expectation value of the Hamiltonian on a Gaussian state $\ket{J,z}$ can be easily computed using Wick's theorem (See Sec.~\ref{sec:Wick}),
\begin{align}
    \hspace{-3mm}\braket{J,z|\hat{H}|J,z}=c_0-\tfrac{1}{4}\mathrm{Tr}(KJ)+\tfrac{1}{2}h_{ab}z^az^b+f_az^a\,.\label{eq:Gaussian-energy}
\end{align}
The term $c_0$ is independent of the state and is due to the definition of the Hamiltonian \eqref{eq:Hamiltonian-time-indep},
\begin{align}
	c_0&=\left\{\begin{array}{rl}
	-\frac{1}{4}h_{ab}\Omega^{ab} & \textbf{(bosons)}\\[1mm]
	\frac{\ii}{4}h_{ab}G^{ab} & \textbf{(fermions)}
	\end{array}\right.\,.
\end{align}
The term $E_{\mathrm{cl}}=\tfrac{1}{2}h_{ab}z^az^b+f_az^a$ represents the  energy of a classical system with phase-space configuration $z^a$. Lastly, the term $E_J=\tfrac{1}{4}\mathrm{Tr}(KJ)$ has purely quantum origin and depends on the complex structure defining the Gaussian state.

\subsubsection{Ground state and vacuum correlations}
Provided that $h_{ab}$ is a positive definite bilinear form on $V$ for bosons and non-degenerate for fermions, the system has a unique ground state $\ket{J_0,z_0}$. The complex structure $J_0$ and the shift $z_0$ of the ground state can be determined by minimizing the expectation value of the energy \eqref{eq:Gaussian-energy} with respect to $J$ and $z$. One finds
\begin{align}
    (J_0)^a{}_b=|K^{-1}|^a{}_c K^c{}_b\quad\text{and}\quad z_0^a=-(h^{-1})^{ab}f_b\,,\label{eq:ground-state}
\end{align}
where $|K|^a{}_b$ is the absolute value of $K$, which is best defined in an eigenbasis\footnote{As the eigenvalues of $K$ are $\pm\ii \epsilon_i$, we have also $|K|^2=-K^2>0$.} Furthermore, we can plug this into expression~\eqref{eq:Gaussian-energy} to find the vacuum energy
\begin{align}
    E_0=c_0+\tfrac{1}{4}\mathrm{Tr}(|K|)-\tfrac{1}{2}f_a(h^{-1})^{ab}f_b\,.
    \label{eq:E0}
\end{align}
As the eigenvalues of $K$ are $\pm \ii\epsilon_i$ and the ones of $|K|$ are $\epsilon_i$ appearing in pairs,
 such that $\frac{1}{4}\mathrm{Tr}(|K|)=\frac{1}{2}\sum^N_{i=1}\epsilon_i$.
 
It is immediate to check that the ground state $\ket{J_0,z_0}$ is an eigenstate of the Hamiltonian as it is stationary: using \eqref{eq:Jdot-timeindep} and \eqref{eq:ground-state} we see that $\dot{J}=0$ as $[K,J_0]=0$ and $\dot{z}=0$. Note that for fermionic systems, the condition of stationarity is not sufficient to determine the ground state as all energy eigenstates are Gaussian.

Having determined the vacuum associated to the stable Hamiltonian \eqref{eq:Hamiltonian-time-indep}, we can now express vacuum correlations directly in terms of the Hamiltonian as
\begin{align}
	&\bra{J_0,z_0}\hat{\xi}^a\hat{\xi}^b\ket{J_0,z_0}=\\[.5em]
	&\qquad=\left\{\begin{array}{ll}
	\frac{1}{2}(\id +\ii |K|^{-1}K)^a{}_c\,\ii\Omega^{cb}\;+z_0^a z_0^b& \textbf{(bosons)}\\[.5em]
	\frac{1}{2}(\id +\ii |K|^{-1}K)^a{}_c\,G^{cb}& \textbf{(fermions)}
	\end{array}\right.\nonumber
\end{align}
with $K$ given in \eqref{eq:K-time-indep}.

\subsection{Dynamics of driven quantum systems}\label{sec:driven}
We extend our formalism to driven quantum systems to describe the dynamics of bosonic and fermionic Gaussian states for time-dependent quadratic Hamiltonians. This also allows us to describe instantaneous and adiabatic vacua, which play an important role in driven quantum systems and quantum field theory in curved spacetime.

\subsubsection{Quadratic time-dependent Hamiltonians}
We consider the most general time-dependent quadratic Hamiltonian,
\begin{align}
	\hat{H}(t)=\left\{\begin{array}{ll}
	\frac{1}{2}h_{ab}(t)\hat{\xi}^a\hat{\xi}^b+f_a(t)\hat{\xi}^a & \textbf{(bosons)}\\[.5em]
	\frac{\ii}{2}h_{ab}(t)\hat{\xi}^a\hat{\xi}^b & \textbf{(fermions)}
	\end{array}\right.
	\label{eq:H-time-dep}
\end{align}
where both $h_{ab}(t)$ and $f_a(t)$ depend on time. We assume $h_{ab}(t)$ to be symmetric for bosons and antisymmetric for fermions, therefore dropping an unimportant time-dependent function of time that can be added to the Hamiltonian. We can then define the time-dependent Lie algebra generator associated to the Hamiltonian as
\begin{align}
    K^a{}_b(t)=\left\{\begin{array}{ll}
	\Omega^{ac}h_{cb}(t) \in\mathfrak{sp}(2N,\mathbb{R})& \textbf{(bosons)}\\[.5em]
	G^{ac} h_{cb}(t)\in\mathfrak{so}(2N)  & \textbf{(fermions)}
	\end{array}\right.\label{eq:K-time-dep}
\end{align}
We assume that the Hamiltonian is instantaneously stable, i.e., in the bosonic case $h_{ab}(t)$ is positive definite for all $t$. As a result the eigenvalues of $K^a{}_b(t)$ come in pairs $\pm\ii \epsilon_i(t)$. Note that in general both the eigenvalues and the eigenvectors of $K^a{}_b(t)$ have a non-trivial time dependence and a transformation that puts $K^a{}_b(t)$ in the standard form \eqref{eq:K-standard} at a time, fails to do it at a different time.

\subsubsection{Dynamics of a Gaussian state}
The unitary time evolution of an initial Gaussian state,
\begin{equation}
\ket{J(t),z(t)}=U(t,t_0)\ket{J_0,z_0}\,,
\end{equation}
with
\begin{equation}
U(t,t_0)=\mathcal{T}\ee^{-\ii\int_{t_0}^t \hat{H}(t') dt'}\,,
\end{equation}
is completely determined by the evolution of the two-point correlation function defined as in \eqref{eq:corr-t}. Taking its time derivative, using the Schrödinger's equation $\ii \,\partial_t\ket{J(t),z(t)}=\hat{H}(t)\ket{J(t),z(t)}$ and the relation between Kähler structures, one finds
\begin{equation}
\begin{aligned}
	\dot{J}(t)&=[K(t),J(t)]\,,\\[.5em]
	\dot{z}^a(t)&=K^a{}_b(t) z^b(t)+\Omega^{ab}f_b(t)\,,
\end{aligned}
\label{eq:Jdot-timedep}
\end{equation}
which has solution
\begin{align}
    J(t)&=M(t,t_0)J(t_0)M^{-1}(t,t_0)\,,\label{eq:time-evolution}\\
    z(t)&=M(t,t_0)z(t_0)+M(t,t_0)\int^t_{t_0} M^{-1}(t',t_0)\, k(t')\,dt'\,,\nonumber
\end{align}
where 
\begin{equation}
\textstyle M(t,t_0)=\mathcal{T}\exp(\int_{t_0}^t K(t')dt')
\end{equation}
is the symplectic or orthogonal transformation associated to the bosonic or fermionic dynamics, expressed as a time-ordered exponential. Furthermore, for bosons $k^a(t)=\Omega^{ab}f_b(t)$.

\begin{table*}
	\ccaption{Applications}{This table summarizes and compares our methods to computed properties of bosonic and fermionic Gaussian states using Kähler structures covered in section~\ref{sec:applications}.}
	\label{tab:summary-table-III}
	\footnotesize
	\def\arraystretch{1.6}
	\begin{tabular}{r|C{6.9cm}|C{6.9cm}}
	\textbf{structure} & \textbf{bosons} & \textbf{fermions}\\
	\hline
	\hline
	von Neumann entropy & \multicolumn{2}{c}{
	$S(\rho_{(J,z)})=\left|\mathrm{Tr}\left(\frac{\mathbbm{1}_A+\ii J_A}{2}\log\left|\frac{\mathbbm{1}_A+\ii J_A}{2}\right|\right)\right|$
	\rule[-1em]{0pt}{2.5em}}\\
    \hline
    \makecell[r]{Restricted\\ complex structure $J_A$} & $J_A\equiv\displaystyle\bigoplus^{N_A}_{i=1}\left(\begin{array}{cc}
		0 & \cosh{2r_i}\\
		-\cosh{2r_i} &0
		\end{array}\right)$ & $J_A\equiv\displaystyle\bigoplus^{N_A}_{i=1}\left(\begin{array}{cc}
		0 & \cos{2r_i}\\
		-\cos{2r_i} &0
	\end{array}\right)$\rule[-2em]{0pt}{4.5em}\\
    entanglement entropy & $\displaystyle S_A=\sum^{N_A}_{i=1}\left(\cosh^2{r_i}\log\cosh^2{r_i}-\sinh^2{r_i}\log\sinh^2{r_i}\right)$ & $\displaystyle S_A=-\sum^{N_A}_{i=1}\left(\cos^2{r_i}\log\cos^2{r_i}+\sin^2{r_i}\log\sin^2{r_i}\right)$\\
	\makecell[r]{entanglement entropy\\trace formula} & $ S_A=\mathrm{Tr}\left(\frac{\mathbbm{1}_A+\ii J_A}{2}\log\left|\frac{\mathbbm{1}_A+\ii J_A}{2}\right|\right)$ & $S_A=-\mathrm{Tr}\left(\frac{\mathbbm{1}_A+\ii J_A}{2}\log\frac{\mathbbm{1}_A+\ii J_A}{2}\right)$ \\
	\makecell[r]{Rényi entropy\\of order $n$} & $\displaystyle R_A^{(n)}=\frac{1}{n-1}\sum^{N_A}_{i=1}\log\left(\cosh^{2n}{r_i}-\sinh^{2n}{r_i}\right)$ & $\displaystyle R_A^{(n)}=-\frac{1}{n-1}\sum^{N_A}_{i=1}\log\left(\cos^{2n}{r_i}+\sin^{2n}{r_i}\right)$\rule[-1.5em]{0pt}{1em}\\
	\makecell[r]{Rényi entropy\\of order $2$} & $\displaystyle R_A^{(2)}=\frac{1}{2}\log|\det\ii J_A|$ & $\displaystyle R_A^{(2)}=-\frac{1}{2}\log\det\left(\frac{\id_A-J_A^2}{2}\right)$\rule[-1.5em]{0pt}{1em}\\
    \hline
    relative entropy & \multicolumn{2}{c}{$S(\rho\lVert\sigma)=\left|\operatorname{Tr}\ii J_\rho(\argh{\ii J_{\sigma}}-\argh{\ii J_{\rho}})+\tfrac{1}{4}\log\det\!\left(\tfrac{\id+J^2_{\sigma}}{\id+J_{\rho}^2}\right)\right|$ with $\mathrm{argh}(x)=\tfrac{1}{4}\log\left(\tfrac{1+x}{1-x}\right)^2$\rule[-1.3em]{0pt}{3.2em}}\\
    \hline
    circuit complexity & \multicolumn{2}{c}{$\mathcal{C}(\ket{J_{\mathrm{T}}},\ket{J_{\mathrm{R}}})=\sqrt{\frac{1}{8}|\Tr\log^2(\Delta)|}$ with $\Delta=J_{\mathrm{T}}J_{\mathrm{R}}^{-1}$}\rule[-.9em]{0pt}{2.3em}\\
    \hline\hline
    Hamiltonian & $\hat{H}(t)=\frac{1}{2}h(t)_{ab}\hat{\xi}^a\hat{\xi}^b+f(t)_a\hat{\xi}^a$ & $\hat{H}(t)=\frac{\ii}{2}h(t)_{ab}\hat{\xi}^a\hat{\xi}^b$\\
    generator & $K^a{}_b(t)=\Omega^{ac}h_{cb}(t)$ & $K^a{}_b(t)=G^{ac}h_{cb}(t)$\\
    equations of motion & \multicolumn{2}{c}{$\dot{J}(t)=[K(t),J(t)]=K(t)J(t)-J(t)K(t)$ and
	$\dot{z}(t)=K(t)z(t)+\Omega^{ab}f_b(t)$}\\
	\multirow{2}{*}{classical solutions} & \multicolumn{2}{c}{$J(t)=M(t)J_0M^{-1}(t)$ and $z(t)=M(t)z_0+M(t)\int^t_0 M^{-1}(t')\, k(t')\,dt'$}\\
	&\multicolumn{2}{c}{with $k^a=\Omega^{ab}f_b(t)$ and $M(t)=\mathcal{T}\exp\left(\int_0^t K(t')dt'\right)$\rule[-1em]{0pt}{1em}}\\
    \hline
    ground state $\ket{J_0,z_0}$ & \multicolumn{2}{c}{$J_0=|K|^{-1}K$ and $z_0=-(h^{-1})^{ab}f_b$}\\
    \hline
    \multirow{2}{*}{ground state energy} & \multicolumn{2}{c}{$E_0=c_0+\tfrac{1}{4}\mathrm{Tr}(|K|)-\tfrac{1}{2}f_a(h^{-1})^{ab}f_b$ with}\\
    & $c_0=-\frac{1}{2}h_{ab}\Omega^{ab}$ & $c_0=\frac{\ii}{4}h_{ab}G^{ab}$\\
    \hline
    adiabatic vacua $\ket{J_t^{(m)},z_t^{(m)}}$ & \multicolumn{2}{c}{$\lambda\, \dot{J}_t^{(m)}=[K(t),J_t^{(m)}]$, $J_t^{(m)}{}^2=-\id$ and $\lambda\, \dot{z}_t^{(m)}=K(t) z_t^{(m)}+\Omega f(t)$}\\
    \hline
    vacuum subtraction & \multicolumn{2}{c}{$\delta E(t)|_{(m)}=-\frac{1}{4}\Tr\left(K(t)\left(J(t)-J_t^{(m)}\right)\right)$}
	\end{tabular}
\end{table*}

\subsubsection{Instantaneous and adiabatic vacua}\label{sec:adiabatic-vacua}
In the general time-dependent case there is no absolute notion of vacuum. However, as we have assumed that the system is instantaneously stable, we can define the instantaneous vacuum at the time $t$ as the Gaussian state $\ket{J_t^{(0)},z_t^{(0)}}$  with complex structure and shift defined as in \eqref{eq:ground-state},
\begin{align}
    J_t^{(0)}=|K(t)|^{-1} K(t)\;\;\text{and}\;\; z_t^{(0)}=-h^{-1}(t)f(t)\,.\label{eq:instantaneous}
\end{align}
Note that, under time evolution, the instantaneous vacuum does not evolve into the instantaneous vacuum, i.e., $U(t_2,t_1)\ket{J_{t_1}^{(0)},z_{t_1}^{(0)}}\neq \ket{J_{t_2}^{(0)},z_{t_2}^{(0)}}$.

The instantaneous vacuum is the starting point for the definition of the notion of adiabatic vacua of order $m$. Adiabatic vacua arise in the context of driven slowly changing systems, where one can identify a small parameter $\lambda$ characterizing the time dependence. They play an important role in quantum field theory in curved spacetime and cosmology~\cite{birrell1984quantum,fulling1989aspects,wald1994quantum,parker2009quantum}, where they are natural candidates for initial states in dynamical background geometries. Interestingly, the concept is intimately linked to the so-called Lewis–Riesenfeld invariants~\cite{lewis1969exact}, where the adiabatic state can be related to certain time dependent operators. In the context of Gaussian adiabatic states $\ket{J_{t}^{(\infty)},z_{t}^{(\infty)}}$, this invariant operator turns out to be the respective number operator $\hat{N}_{J_{t}^{(\infty)}}$ defined in~\eqref{eq:NJ}.

We introduce a notion of adiabatic vacuum for bosons and fermions defined directly in terms of Kähler structures. The notion is adapted to the time-dependent Hamiltonian \eqref{eq:H-time-dep} and to a choice of reference time $t$. We start from the definition of instantaneous vacuum \eqref{eq:instantaneous} and introduce the ansatz
\begin{align}
J_t^{(m)}&=J_t^{(0)}+\sum_{n=1}^{m}A_n(t)\,\lambda^n\,,\\
z_t^{(m)}&=z_t^{(0)}+\sum_{n=1}^{m}\zeta_n(t)\,\lambda^n
\end{align}
for the adiabatic vacuum $\ket{J_t^{(m)},z_t^{(m)}}$ at order $m$. By requiring that the following two conditions (the first due to the dynamics \eqref{eq:Jdot-timedep} and the second imposing that $J_t^{(m)}$ is a complex structure)
\begin{align}
\lambda\,\partial_t J_t^{(m)}=[K(t),J_t^{(m)}]\quad\text{and}\quad (J_t^{(m)})^2=-\id\,,
\end{align}
are satisfied at the time $t$ and at each order in $\lambda$, we can  determined $J_m$ and $z_m$ by solving algebraically the equations
\begin{align}
\begin{split}
[K,A_n]&=\dot{A}_{n-1}\\
\{J_t^{(0)},A_n\}&=-\left(A_1A_{n-1}+\ldots+A_{n-1}A_1\right)\label{eq:A-adiab}
\end{split}
\end{align}
evaluated at time $t$ for $A_n(t)$ in terms of $\dot{A}_{n-1}(t)$ and
\begin{equation}
\lambda\, \partial_t\, z_t{}^{(m)}=K(t) z_t^{(m)}+\Omega f(t)\,.
\end{equation}
The adiabatic vacuum of order $m$ at the time $t_0$ is then obtained as the Gaussian state $\ket{J_{t_0}^{(m)},z_{t_0}^{(m)}}$ associated to $J_{t_0}^{(m)}$ and $z_{t_0}^{(m)}$ by setting $\lambda=1$.

In general the series is only asymptotic in $\lambda$ and does not converge. When the series converges in the limit $m\to\infty$, we can define the \emph{exact} adiabatic vacuum $\ket{J_{t_0}^{(\infty)}z_{t_0}^{(\infty)}}=\lim_{m\to\infty}\ket{J_{t_0}^{(m)},z_{t_0}^{(m)}}$ at time $t_0$. In this case, the time evolution under $\hat{H}(t)$ evolves the adiabatic vacuum $\ket{J_{t_0}^{(\infty)}z_{t_0}^{(\infty)}}$ into $\ket{J_{t}^{(\infty)}z_{t}^{(\infty)}}$ at later times. Of course, this is only possible for special cases where $\hat{H}(t)$ is an analytical function of $t$.

\subsubsection{Time-dependent vacuum subtraction}
In a stable time-independent system, the vacuum energy can be simply subtracted once and for all from the energy of the system. For instance, assuming for simplicity $f_a=0$ in \eqref{eq:Hamiltonian-time-indep}, we have
\begin{equation}
\delta E=\bra{J}\hat{H}\ket{J}-\bra{J_0}\hat{H}\ket{J_0}=-\tfrac{1}{4}\mathrm{Tr}\big(K(J-J_0)\big)\,,
\end{equation}
where the vacuum complex structure is $J_0=|K|^{-1} K$ and we have  used \eqref{eq:Gaussian-energy}, \eqref{eq:E0}. This vacuum subtraction corresponds to the procedure of putting the Hamiltonian in standard form and then \emph{normal ordering} the associated creation and annihilation operators.

On the other hand, in the time dependent case \eqref{eq:H-time-dep}, there is no standard notion of normal ordering but there is still a well defined notion of vacuum subtraction associated to the adiabatic vacuum $\ket{J_t^{(m)}}$ of order $m$ at the time $t$,
\begin{align}
\left.\delta E(t)\right|_{(m)}&\,=\bra{J(t)}\hat{H}(t)\ket{J(t)}-\bra{J_t^{(m)}}\hat{H}(t)\ket{J_t^{(m)}}\nonumber\\[.5em]
&\,=-\tfrac{1}{4}\mathrm{Tr}\big(K(t)\big(J(t)-J_t^{(m)}\big)\big)\,.
\end{align}
Note that, while the complex structure of the state $\ket{J(t)}$ evolves as $J(t)=M(t,t_0)J(t_0)M^{-1}(t,t_0)$, the complex structure of the adiabatic vacuum is computed at the time $t$ directly from $K(t)$ and its time derivatives via \eqref{eq:A-adiab}. In particular, $J_t^{(m)}\neq M(t,t_0)J_{t_0}^{(m)}M^{-1}(t,t_0)$.
This adiabatic subtraction is well defined for the expectation value of all operators and plays an important role in the renormalization of the energy-momentum tensor in cosmological spacetimes \cite{birrell1984quantum,fulling1989aspects,wald1994quantum,parker2009quantum}. 

The formula \eqref{eq:relative-entropy} provides us also with a tool for computing the relative entanglement entropy of a Gaussian state $\ket{J(t)}$ with respect to the adiabatic vacuum $\ket{J_t^{(m)}}$ at the time $t$,
\begin{align}
S_A(J(t)\lVert J_t^{(m)})&\,
=\left|\operatorname{Tr}\ii J_A(t)(\argh{\ii J_{tA}^{(m)}}-\argh{\ii J_{A}(t)})\right.+\nonumber\\
        &\quad\;\;+\left.\tfrac{1}{4}\log\det\!\left(\tfrac{\id+J_{tA}^{(m)}{}^2}{\id+J_{A}^2(t)}\right)\right|\,.
    \label{eq:relative-entropy-adiab}
\end{align}

\section{Summary and discussion}\label{sec:discussion}
In applications to quantum information, Gaussian states are often described in a covariance matrix formalism \cite{WANG_2007,Braunstein:2005zz,adesso2014continuous}. In sections \ref{sec:classical-Kaehler} and \ref{sec:quantum-Gaussian} we have presented a comprehensive introduction to the description of Gaussian states in terms of Kähler structures developed in the mathematical literature on quantization \cite{deGosson2006symplectic,Derezinski:2013dra,woit2015quantum} and on quantum fields in curved spacetimes \cite{ashtekar1975quantum}. Here we have adopted a language and selected aspects that are tailored to  applications in quantum information and non-equilibrium physics. In parallel to~\cite{Derezinski:2013dra}, we characterize pure and mixed, bosonic and fermionic Gaussian states by relating them to a triangle of Kähler structures $(G,\Omega,J)$ on the classical phase space and its dual. The key insight is that bosonic and fermionic Gaussian states can be parametrized by a linear complex structure $J^a{}_b$ in a unified manner. Before discussing applications to quantum information, let us highlight what we believe to be the main advantages of describing Gaussian states in this mathematical formalism:

\textbf{Gaussian states and complex structure.} The formalism is arguably best encapsulated by the equation
\begin{align}
    (\id-\ii J)^a{}_b(\hat{\xi}-z)^b\ket{J,z}=0\,,
\end{align}
from which $J^2=-\id$ and the compatibility conditions of $(G,\Omega,J)$ can be derived for pure Gaussian states. We showed how for both, bosonic and fermionic Gaussian states, compatible Kähler structures turn the classical phase space $V$ into a complex vector space with inner product $\braket{v,u}$, known as the single-particle Hilbert space. In contrast, we found that mixed Gaussian states $\rho=e^{-\hat{Q}}$ are characterized by $G$ and $\Omega$, whose incompatibility is quantified by the failure of $J^2$ to be equal to $-\id$.

\textbf{Phase space covariance.} We put particular emphasis on ensuring that all our equations are independent of the chosen basis of $V$ and $V^*$, what is often referred to as \emph{covariant equations}. This is in contrast to the typical treatment, where one often chooses either the Hermitian basis (we indicate by $\qp$) or the ladder operator basis (we indicate by $\aab$). For example, we have $\Omega\qp-\omega$ for bosons, \ie the matrix representation of inverse symplectic form $\omega$ only picks up a sign in this basis, but this relation breaks down when we move to a different basis or consider fermionic states. While we provided a comprehensive list of examples, where we give the respective equations in both bases, we were careful to present all equations as covariant tensor equations using Einstein's summation convention and Penrose's abstract index notation (see appendix~\ref{app:abstractindices}). In this context, we also introduced the notion of phase space covariant ladder operators $\hat{\xi}^a_{\pm}$, which allowed us to give a rather compact derivation of a basis-independent Wick's theorem.

\textbf{Relative complex structure $\Delta$.} When comparing two different Gaussian states $\ket{J,z}$ and $\ket{\tilde{J},\tilde{z}}$, we found that it is natural to define the object $\Delta=-\tilde{J}J$, which we call the \emph{relative complex structure}. It provides a basis-independent way to characterize the relationship between the two states (apart from the displacement $z-\tilde{z}$) and we derived various properties of its spectrum, its utility when constructing the Cartan decomposition and how it appears naturally when studying unitary inequivalence of Fock spaces in field theory. \emph{It was brought to our attention that~\cite{Derezinski:2013dra} defines the same object under the name of $k$ for bosons and fermions in the context of the Cartan decomposition, also known as $j$-polar decomposition.}\\

While these methods are well-known in mathematical physics, they have not been broadly applied in quantum information and out-of-equilibrium quantum systems. We believe that this manuscript can help to establish a link between these fields by providing comprehensive review of the methods and demonstrating their versatility in practical applications. In sections \ref{sec:information}, \ref{sec:stable} and \ref{sec:driven}, we have shown how Kähler structures provide a powerful tool for studying (A) entanglement and complexity for the vacuum and the adiabatic vacuum of  (B) stable and of (C) driven quantum systems in bosonic and fermionic Gaussian states. In particular, we have shown concretely how quantities such as the entanglement entropy of a Gaussian state, and its time dependence in a driven quantum system, can be expressed in terms of Kähler structures. The table~\ref{tab:summary-table-III}  provides a comprehensive overview of our results that compare bosonic and fermionic expressions side-by-side which we hope to be useful for many readers. Remarkably, various formulas (\eg for the von Neumann entropy and the circuit complexity) take the same form for bosons and fermions when expressed in terms of $J$.

We believe that the presented formalism provides a starting point for future studies of Gaussian states from a mathematical physics perspective with applications in various research field. In fact, some of our methods have already been used to study entanglement production~\cite{bianchi2015entanglement,bianchi2018linear,hackl2018entanglement,romualdo2019entanglement}, entanglement of energy eigenstates~\cite{vidmar2017entanglement,vidmar2018volume,hackl2019average}, variational methods~\cite{guaita2019gaussian,hackl2020geometry} and circuit complexity~\cite{Hackl:2018ptj,Chapman:2018hou,camargo2021entanglement}. We also expect that our results are particularly useful for the study of generalized Gaussian states, as defined in~\cite{shi2018variational,hackl2020geometry,guaita2021generalization}, where we allow for certain non-Gaussian unitaries to entangle bosonic and fermionic degrees of freedom in the initially unentangled Gaussian state.

As outlined in section~\ref{sec:adiabatic-vacua}, Kähler structures also provide a powerful tool to compute so-called adiabatic vacua, \ie states that change the least under the time evolution of time-dependent Hamiltonians. They play an important role in quantum field theory of curved spacetime and cosmology, but also in the context of the so-called Lewis-Riesenfeld invariants~\cite{lewis1969exact}. The traditional approach relies on WKB approximations and works for well for translationally invariant field theories, but treating more complicated systems is difficult, when the time dependent Hamiltonian does not split over individual (momentum) degrees of freedom. The formal power series presented in this manuscript reduces the problem to solving sets of algebraic equations iteratively, whose applications to concrete models in cosmology we will present elsewhere.

Another interesting avenue for the presented formalism would be to extend it to discrete phase spaces and stabilizer states. Quantum degrees of freedom are often classified as bosonic, fermionic or as being spin. For the former two, we have the important classes of Gaussian states, which we can characterize in the unified framework based on Kähler structures presented in this manuscript. On the other hand, spin degrees of freedom with $d$ levels are also known as qudits (generalization of qubit) and there is the well-known class of so called stabilizer states \cite{gottesman1997stabilizer}. They are characterized by their eigenvalues with respect to certain spin operators (Pauli matrices) and play an important role in the context of quantum computation. Over the last few years, there has been substantial evidence that stabilizer states are the analogues of Gaussian states for spin system \cite{gross2007non,gross2013stabilizer}, but this connection has not been made mathematically precise. What is well understood is that there is a discrete phase space formulation for qudits, which largely resembles the case of bosonic Gaussian states. In particular, there is a discrete analogue of the symplectic form, which for bosons governs the commutation relations. This is peculiar as spins are neither bosonic nor fermionic. It would thus be interesting to explore if there is an equivalent fermionic phase space formulation for spins, which resembles the case of fermionic Gaussian states (positive definite form instead of symplectic form). This leads to the natural question: Can we extend our unifying framework of Kähler structures to spin systems, where the analogous structure may be suitable to parametrize stabilizer states efficiently? Finding new results in this direction will be challenging, but if successful it may be directly relevant for algorithms and error correction in quantum computing.

\begin{acknowledgements}
We thank Ivan Agullo, Abhay Ashtekar, Ignacio Cirac, Peter Drummond, Jens Eisert, Tommaso Guaita, Alexander Jahn, Tao Shi, Nelson Yokomizo and Bennet Windt for inspiring discussions. LH thanks James Campbell for pointing out a mistake in the proof that definition 5 is well-defined, which has been fixed in the present version. LH acknowledges support by VILLUM FONDEN via the QMATH center of excellence (grant No.\ 10059). EB acknowledges support by the NSF via the Grant PHY-1806428.
\end{acknowledgements}

\appendix
\section{Conventions and notation}
In this appendix, we review the conventions and notation used in this manuscript. The goal of our formalism is to be largely self-explanatory with an easy conversion between abstract objects (vectors, tensors, operators) and their numerical representation (lists, matrices, arrays). Note that this appendix largely resembles the one in~\cite{hackl2020geometry}.

\subsection{Abstract index notation}\label{app:abstractindices}
Throughout this paper, all equations containing indices follow the conventions of so-called \emph{abstract index notation}. This formalism is commonly used in the research field of general relativity and gravity, where differential geometry plays an important role, but we believe that it is also of great benefit in the context of Kähler structures on the classical phase space used for Gaussian states.

The formalism is suitable to conveniently keep track of tensors built on a vector space. Given a finite dimensional real vector space $V$ with dual $V^*$, a $(r,s)$-tensor $T$ is a linear map
\begin{align}
    T: V^*\times \cdots \times V^*\times V\times\cdots\times V\to\mathbb{R}\,.
\end{align}
In particular, a $(1,0)$-tensor is a vector, a $(0,1)$-tensor is a dual vector and a $(1,1)$-tensor is a linear map. Moreover, a symplectic form, a metric (\ie an inner product) and their respective duals are $(2,0)$ or $(0,2)$ tensors.\footnote{Note that symplectic form $\Omega$ and metric $G$ act on the dual phasespace $V^*$.} To keep track of the type of tensor, abstract index notation refers to the $(r,s)$-tensor $T$ as $T^{a_1\cdots a_r}{}_{b_1\cdots b_s}$, \ie we assign $r$ upper indices and $s$ lower indices. Typically, one chooses the indices from some alphabet to indicate which vector space, we are referring to. For the classical phase space $V$ and its dual $V^*$, we use consistently Latin letters $a,b,c$.

The key advantage of abstract index notation in the context of the classical phasespace is that it helps us to keep track of what types of tensors, we are dealing with and which contractions are allowed. Apart from vectors $X^a$ and dual vectors $w_a$, we are mostly dealing with tensors that have two indices, namely linear maps $J^a{}_b$, bilinear forms $G^{ab}$ (on $V^*$) and their inverses $g_{ab}$.

Tensors with two indices can be naturally represented as matrices, which is particularly useful for numerical evaluation. However, when representing a tensor as a matrix, the index position (up or down) is lost and so one needs to be carefully keep track of which indices can be contracted, \ie which type of matrices can be multiplied. This problem does not arise if there is a fixed metric $G$ (with inverse $g$), such that we can require all tensors to be represented in an orthonormal basis, such that $G\equiv g\equiv \id$. While this could be done for fermions (where $G$ is indeed fixed), it does not work for bosons where $G$ is a dynamical state-dependent object. Moreover, we believe that making $G$ and $\Omega$ explicit throughout the manuscript highlights their role in the defining the respective tensors.

For tensors with two indices, it is therefore convenient to use a shorthand notation, where adjacent tensors with suppressed indices are implied to be contracted, just as standard matrix multiplication works. Obviously, this means that only such expressions are allowed where the adjacent indices are given by one upper and one lower index. This notation is particularly useful for numerical implementation.

\subsection{Special tensors and tensor operations}\label{app:tensor-operations}
In the following, we review common tensor operations and emphasize how they are defined in a basis-independent way, which is also independent of a natural identification between $V$ and $V^*$. This highlights that certain formulas involving matrix operations (such as computing determinants, traces, eigenvalues or transposes) are only well defined in certain cases, \eg if the respective matrix represents a linear map in some cases or bilinear form in other cases.

\textbf{Identity.} Every vector space $V$ comes with the canonical identity map $\delta^a{}_b$ satisfying $\delta^a{}_b X^a=X^a$. Note that the notation $\id^a{}_b$ would be consistent with our convention, but we stayed with the commonly used Kronecker delta. In contrast, there does not exist a canonical bilinear form. In particular, writing $\delta_{ab}$ or $\delta^{ab}$ only makes sense with respect to a specific basis choice and is thus not canonical, unless we choose a specific basis (such as orthonormal basis). In the latter case, such a choice is equivalent to choosing an additional object (such as a metric $G^{ab}$).

\textbf{Transformation rules.} An invertible linear map $M^a{}_b: V\to V$ of the vector space $V$ acts on a general $(r,s)$-tensor $T^{a_1\cdots a_r}{}_{b_1\cdots b_s}$ and transforms it to
\begin{align}
    \hspace{-2mm} M^{a_1}{}_{c_1}\cdots M^{a_r}{}_{c_r}T^{c_1\cdots c_r}{}_{d_1\cdots d_s}(M^{-1})^{d_1}{}_{b_1}\cdots(M^{-1})^{d_s}{}_{b_s}
\end{align}
In particular, a vector $X^a$ transforms as $M^a{}_bX^b$, a dual vector $w_a$ as $w_b(M^{-1})^b{}_a$, a dual bilinear form $S^{ab}$ as $M^a{}_cB^{cd}(M^\intercal)_d{}^b$, a bilinear form $s_{ab}$ as $(M^{-1\intercal})_a{}^cb_{cd}(M^{-1})^d{}_b$ and a linear map $K^a{}_b$ as $M^a{}_cK^d{}_d (M^{-1})^d{}_b$. The transformation rule can be understood as active transformation, where we ask what tensor $T$ one would get if we applied $M^\intercal$ on $V^*$ and $M^{-1}$ on $V$ for all input covectors and vectors. In practice, one gets the same formula if one has already expressed $T$ as an array of numbers with respect to fixed basis of $V$ (and a dual basis of $V^*$) and asks how do the numbers of $T$ change, if we applied $M^{-1}$ on the basis vector of $V$ and $(M^{-1})^\intercal$ on the elements of the dual basis.

\textbf{Determinant.} The determinant $\det{(M)}$ is only well-defined for a linear map $M^a{}_b$. The determinant of a bilinear form $s_{ab}$ or $S^{ab}$ is ill defined, unless we have a reference object, such as a metric $g_{ab}$ or $G^{ab}$. Then, we can compute the determinant of the matrix of the linear maps $S^{ac}g_{cb}$ or $G^{ac}s_{cb}$.

\textbf{Trace.} The trace $\mathrm{tr}(M)=M^a{}_a$ is defined for a linear map. There is no basis-independent definition of a trace for bilinear forms $S^{ab}$ or $s_{ab}$, unless we again have a reference object, such as a metric $G^{ab}$ and its dual $g_{ab}$, which would effectively allow us to convert the bilinear forms into linear maps $S^{ac}g_{cb}$ and $G^{ac}s_{cb}$, for which we can compute the regular trace.

\textbf{Eigenvalues.} Without additional structures, we can only define eigenvalues for a linear map $M^a{}_b$, where an eigenvalue $\lambda$ associated to an eigenvector $X^a$ satisfies
    \begin{align}
        M^a{}_b X^b=\lambda\,X^b\,.
    \end{align}
This is well-known from linear algebra. A bilinear form $X^{ab}$ does not have intrinsic eigenvalues, but we can compute its eigenvalues relative to another bilinear form. Given a bilinear form $s_{ab}$ and a metric $G^{ab}$ or symplectic form $\Omega^{ab}$, we can define the metric or symplectic eigenvalues as the regular eigenvalues of the linear map $G^{ac}s_{cb}$ or $\Omega^{ac}s_{cb}$, respectively.

\textbf{Functions.} Given a scalar function $f$ and a linear map $M^a{}_b$, we can define the matrix function $f(M)$ in the following ways: First, if $M$ is diagonalizable, we can apply $f$ to each eigenvalue individually. This may require $f$ to be defined on the whole complex plane to make also send for complex eigenvalues. Second, when we consider analytical functions $f$, we can also define $f(M)$ by its power series, so that $f(M)$ is even defined for non-diagonalizable $M$. Note that there is no canonical way to apply a function $f$ to a bilinear form, such as $G^{ab}$ or $\Omega^{ab}$, unless we provide a procedure to first convert the form into a linear map (\eg by contraction with an inverse bilinear form), then apply the function in the previously described way and then convert back to a bilinear form.

\textbf{Transpose.} The transpose of a linear map $M^a{}_b: V\to V$ is the dual map $(M^\intercal)_a{}^b: V^*\to V^*$. We use the notation $(M^\intercal)_a{}^b=M^b{}_a$, which means that we just swap the positions of the two indices, while they are the same $(1,1)$-tensor, if we do not distinguish the ordering. For our shorthand notation, it is important to keep track of the order of indices, as this is how we contract. From the perspective of abstract index notation, transpose operation is often ignored (as one sometimes would even just write $T^a_b$ without any chosen ordering), but we intentionally choose an ordering ($M^a{}_b$ vs. $(M^\intercal)_b{}^a$), so that they can be easily converted to matrix expressions, \eg for numerical implementations. Note that the shorthand notation $\id^\intercal$ then corresponds to the Kronecker delta $\delta_a{}^b$. If one has a metric $G$ on our vector space $V$ and a linear map $M^a{}_b$, many authors refer to the linear map $G^{ac} (M^\intercal)_c{}^d g_{db}$ as the transpose of $M^a{}_b$, whose matrix representation coincides with $(M^\intercal)_a{}^b$ in an orthonormal basis (where $G\equiv g\equiv\id$). As we do not always have such a fixed reference metric, we intentionally do not use the term transpose in that way.

\subsection{Common formulas}\label{app:common-formulas}
Given a triangle of compatible Kähler structures $(G,\Omega,J)$ with inverses $(g,\omega,-J)$, we have the following relations (in abstract index and shorthand notation):
\begin{align}
	-J^2&=\id &\Leftrightarrow&& -J^a{}_cJ^c{}_b&=\delta^a{}_b\,,\\
 -(J^\intercal)^2&=\id^\intercal &\Leftrightarrow&& -(J^\intercal)_a{}^c(J^\intercal)_c{}^b&=\delta_a{}^b\,,\\
 -J^{-1} &=J &\Leftrightarrow&& -(J^{-1})^{a}{}_b&=J^a{}_b\,,\\
 J\Omega J^\intercal&=\Omega &\Leftrightarrow&& J^a{}_c\Omega^{cd}(J^\intercal)_d{}^b&=\Omega^{ab}\,,\\
  -\Omega J^\intercal&=J\Omega &\Leftrightarrow&& -\Omega^{ac}(J^\intercal)_c{}^b&=J^a{}_c\Omega^{cb}\,,\\
 JG J^\intercal&=G &\Leftrightarrow&& J^a{}_cG^{cd}(J^\intercal)_d{}^b&=G^{ab}\,,\\
 -GJ^\intercal&=JG &\Leftrightarrow&& -G^{ac}(J^\intercal)_c{}^b&=J^a{}_cG^{cb}\,,\\
 \Omega J^\intercal&=G &\Leftrightarrow&& \Omega^{ac}(J^\intercal)_c{}^b&=G^{ab}\,,\\
 -J \Omega&=G &\Leftrightarrow&& -J^a{}_c\Omega^{cb}&=G^{ab}\,,\\
 \Omega\omega&=\id &\Leftrightarrow&& \Omega^{ac}\omega_{cb}&=\delta^a{}_b\,,\\
 \omega\Omega&=\id^\intercal &\Leftrightarrow&& \omega_{ac}\Omega^{cb}&=\delta_a{}^b\,,\\
 Gg&=\id &\Leftrightarrow&& G^{ac}g_{cb}&=\delta^a{}_b\,,\\
 gG&=\id^\intercal &\Leftrightarrow&& g_{ac}G^{cb}&=\delta_a{}^b\,,\\
 -\omega G\omega&=g &\Leftrightarrow&& -\omega_{ac}G^{cd}\Omega_{db}&=g_{ab}\,,\\
 -g\Omega g&=\omega &\Leftrightarrow&& -g_{ac}\Omega^{cd}G_{db}&=\omega_{ab}\,,\\
 \Omega g&=J &\Leftrightarrow&& \Omega^{ac}g_{cb}&=J^a{}_b\,,\\
 -G\omega&=J &\Leftrightarrow&& -G^{ac}\omega_{cb}&=J^a{}_b\,,\\
 -g\Omega&=J^\intercal &\Leftrightarrow&& g_{ac}\Omega^{cb}&=(J^\intercal)_a{}^b\,,\\
 \omega G&=J^\intercal &\Leftrightarrow&& \omega_{ac}G^{cb}&=(J^\intercal)_a{}^b\,,\\
 -\Omega^\intercal&=\Omega &\Leftrightarrow&& -\Omega^{ba}&=\Omega^{ab}\,,\\
 G^\intercal&=G &\Leftrightarrow&& G^{ba}&=G^{ab}\,.
\end{align}
A symplectic group element $M^a{}_b\in\mathrm{Sp}(2N,\mathbb{R})$ and a symplectic algebra element $K^a{}_b\in\mathfrak{sp}(2N,\mathbb{R})$ are characterized by the following properties:
\begin{align}
    M\Omega M^\intercal&=\Omega &\Leftrightarrow&& \hspace{-2mm}M^a{}_c\Omega^{cd}(M^\intercal)_d{}^b&=\Omega^{ab}\,,\\
    \Omega M^\intercal\omega&=M^{-1} \hspace{-2mm}&\Leftrightarrow&& \hspace{-2mm}\Omega^{ac}(M^\intercal)_c{}^d \omega_{db}&=(M^{-1})^a{}_b\,,\\
    -\Omega K^\intercal&=K\Omega\hspace{-2mm} &\Leftrightarrow&& -\Omega^{ac}(K^\intercal)_c{}^b&=K^a{}_c\Omega^b\,.
\end{align}
An orthogonal group element $M^a{}_b\in\mathrm{O}(2N)$ and an orthogonal algebra element $K^a{}_b\in\mathfrak{so}(2N)$ are characterized by the following properties:
\begin{align}
 \hspace{-2mm}MG M^\intercal&=G &\Leftrightarrow&& \hspace{-3mm}M^a{}_cG^{cd}(M^\intercal)_d{}^b&=G^{ab}\,,\\
    G M^\intercal g&=M^{-1} \hspace{-2mm}&\Leftrightarrow&& \hspace{-2mm}G^{ac}(M^\intercal)_c{}^d g_{db}&=(M^{-1})^a{}_b\,,\\
    -G K^\intercal&=KG\hspace{-2mm} &\Leftrightarrow&& -G^{ac}(K^\intercal)_c{}^b&=K^a{}_cG^b\,.
\end{align}

\bibliography{references}

\end{document}